\documentclass[10pt,letterpaper,journal]{IEEEtran}

\usepackage{pdfsync}
\usepackage{microtype}
\usepackage{amsmath, amsthm, amssymb, latexsym} 
\usepackage[only,llbracket,rrbracket]{stmaryrd}
\usepackage{paralist}
\usepackage{xspace}
\usepackage{subcaption}
\usepackage[svgnames,hyperref]{xcolor}
\usepackage{pdflscape}
\usepackage[basic]{complexity}
\usepackage{slashbox}
\usepackage{dsfont}
\usepackage{thmtools,thm-restate}

\usepackage{multibib}
\newcites{appendix}{References for the appendix}


\newtheorem{proposition}{Proposition}
\newtheorem{lemma}{Lemma}

\newtheorem{theorem}{Theorem}
\newtheorem{corollary}{Corollary}
\newtheorem{assumption}{Assumption}

\theoremstyle{remark}
\newtheorem{remark}{Remark}
\newtheorem{example}{Example}
\newtheorem*{runningexample}{Running example}

\DeclareMathOperator{\lcm}{lcm}



\usepackage[textsize=small]{todonotes}

\usepackage{hyperref}
\hypersetup{
    colorlinks=true,        
    linkcolor=DarkBlue,     
    citecolor=DarkBlue,     
    filecolor=DarkMagenta,  
    urlcolor=DarkCyan       
}

\captionsetup[subfloat]{justification=centering}

\usepackage{tikz}
\usetikzlibrary{arrows,shapes,snakes,automata,backgrounds,fit,positioning,shapes.multipart}

\newcommand{\ignore}[1]{}

\newcommand{\size}[1]{\mathsf{size}({#1})}
\newcommand{\abs}[1]{\left|{#1}\right|}
\newcommand{\card}[1]{\left|{#1}\right|}
\newcommand{\restrict}[2]{#1 \downharpoonright #2}

\newcommand{\MDP}{MDP\xspace} 
\newcommand{\MDPs}{MDPs\xspace}

\newcommand{\tuple}[1]{({#1})}
\newcommand{\goesto}[1]{\stackrel {#1} \longrightarrow}
\newcommand{\set}[1]{\{#1\}}
\newcommand{\Set}[1]{\left\{#1\right\}}
\newcommand{\setof}[2]{\set{{#1}\; |\; {#2}}}
\newcommand{\SetOf}[2]{\Set{\left.{#1}\; \right.\left|\; {#2}\right.}}
\newcommand{\Plays}[1]{\Omega^\omega({#1})}
\newcommand{\Prefs}[2]{\Omega^*(#2, #1)}
\newcommand{\Support}[1]{\mathsf{Supp}({#1})}
\newcommand{\TP}{\mathsf{TP}}
\renewcommand{\MP}{\mathsf{MP}}

\newcommand{\Prob}[3]{\mathbb P_{#1}^{#2}\left[#3\right]}
\renewcommand{\E}[3]{\mathbb E_{#1}^{#2}\left[#3\right]}
\newcommand{\Events}[1]{\Omega_{#1}^\omega}
\newcommand{\Outcome}[2]{\Omega_{#1}^{#2}}
\newcommand{\projection}[2]{\mathsf{proj}^{#1}({#2})}

\newcommand{\ExpSol}[2]{\mathsf{ExpSol}_{#1}({#2})}
\newcommand{\ExpSolP}[2]{\mathsf{ExpSol}^+_{#1}({#2})}
\newcommand{\WCSol}[2]{\mathsf{WCSol}_{#1}({#2})}
\newcommand{\WCSolP}[2]{\mathsf{WCSol}^+_{#1}({#2})}

\newcommand{\ASSolP}[2]{\mathsf{ASSol^+}_{#1}({#2})}

\newcommand{\BWCSolP}[2]{\mathsf{BWCSol}^+_{#1}({#2})}

\newcommand{\BASSolP}[2]{\mathsf{BASSol}^+_{#1}({#2})}

\newcommand{\Sum}{\overrightarrow{\mathsf{Sum}}}

\newcommand{\Graph}{G}

\renewcommand{\Game}{\mathcal G}

\newcommand{\N}{\mathbb N}
\newcommand{\Z}{\mathbb Z}
\newcommand{\Q}{\mathbb Q}
\newcommand{\Qpos}{\Q^{\geq 0}}
\renewcommand{\R}{\mathbb R}
\newcommand{\Rinf}{\R_{\pm\infty}}
\newcommand{\D}{\mathcal D}

\newcommand{\A}{\mathcal A}

\newcommand{\All}[1]{\Delta({#1})}
\newcommand{\Pure}[1]{\Delta_{\textrm P}({#1})}
\newcommand{\FiniteMemory}[1]{\Delta_{\textrm F}({#1})}
\newcommand{\PureFiniteMemory}[1]{\Delta_{\textrm {PF}}({#1})}

\let\oldcoNP\coNP
\let\oldNP\NP

\newcommand{\PTIME}{\P\xspace}
\newcommand{\UPcoUP}{{\UP$\cap$\coUP}\xspace}
\newcommand{\NPcoNP}{{\oldNP$\cap$\oldcoNP}\xspace}
\newcommand{\coNPc}{\oldcoNP-complete\xspace}
\newcommand{\coNPh}{\oldcoNP-hard\xspace}
\renewcommand{\coNP}{\oldcoNP\xspace}
\renewcommand{\NP}{\oldNP\xspace}

\newcommand{\F}[3]{\mathcal F_{#1}\left(#2, #3\right)}
\newcommand{\G}[2]{\mathcal G\left(#1, #2\right)}

\newcommand{\ceiling}[1]{\left\lceil#1\right\rceil}

\renewcommand{\a}{\emph{(a)}}
\renewcommand{\b}{\emph{(b)}}
\newcommand{\ab}{\emph{(a)+(b)}}

\tikzset{
	state/.style={circle,minimum size=6mm,draw=black,fill=gray!20,font=\itshape,inner sep=2pt},
	randomstate/.style={diamond,minimum size=9mm,draw=black,fill=gray!20,font=\itshape,inner sep=2pt},
	playerstate/.style={circle,minimum size=8mm,draw=black,fill=gray!20,font=\itshape,inner sep=2pt},
	initial text=,
	>=stealth',
}

\newcommand{\lorenzo}[1]{\todo[color=blue!20]{#1}}

\title{Multidimensional beyond worst-case and almost-sure problems for mean-payoff objectives}

\author{Lorenzo~Clemente, Universit{\'e} Libre de Bruxelles, Brussels, Belgium.
\thanks{This work was supported by the ERC Starting grant 279499 (inVEST).}
\\
Jean-Fran\c{c}ois~Raskin, Universit{\'e} Libre de Bruxelles, Brussels, Belgium.
}

\begin{document}

\maketitle

\begin{abstract}
	
	The \emph{beyond worst-case threshold problem (BWC)}, recently introduced by Bruy\`ere \emph{et al.},
	asks given a quantitative game graph for the synthesis of a strategy that
	i) enforces some minimal level of performance against any adversary,
	and ii) achieves a good expectation against a stochastic model of the adversary.
	They solved the BWC problem for finite-memory strategies and unidimensional mean-payoff objectives
	and they showed membership of the problem in \NPcoNP.
	They also noted that infinite-memory strategies are more powerful than finite-memory ones,
	but the respective threshold problem was left open.

	We extend these results in several directions.
	First, we consider multidimensional mean-payoff objectives.
	Second, we study both finite-memory and infinite-memory strategies.
	We show that the multidimensional BWC problem is \coNPc in both cases.
	Third, in the special case when the worst-case objective is unidimensional (but the expectation objective is still multidimensional)
	we show that the complexity decreases to \NPcoNP.
	This solves the infinite-memory threshold problem left open by Bruy\`ere \emph{et al.},
	and this complexity cannot be improved without improving the currently known complexity of classical mean-payoff games.
	Finally, we introduce a natural relaxation of the BWC problem,
	the \emph{beyond almost-sure threshold problem (BAS)},
	which asks for the synthesis of a strategy that ensures some minimal level of performance with probability one
	and a good expectation against the stochastic model of the adversary.
	We show that the multidimensional BAS threshold problem is solvable in \PTIME. 	
	
\end{abstract}

\section{Introduction}

\label{sec:intro}

In a two-player mean-payoff game played on a weighted graph~\cite{Liggett:Lippman:1969,EhrenfeuchtMycielski:MeanPayoff:1979}, given a threshold $v \in \Q$, we must decide if there exists a strategy for Player 1 (the controller) to force plays with mean-payoff values larger than $v$, against any strategy of Player 2 (the environment).  In the {\em beyond worst-case threshold problem (BWC)}, recently introduced by Bruy\`ere \emph{et al.} in \cite{Bruyere:Filiot:Randour:Raskin:BWC:2014}, we are additionally given a stochastic model for the {\em nominal}, i.e. expected, behaviour of Player 2. Then we are asked, given two threshold values $\mu, \nu \in \Q$, to decide if there exists a strategy for Player 1 that forces $(i)$ plays with a mean-payoff value larger than $\mu$ against any strategy of Player 2, and $(ii)$ an expected mean-payoff value larger than $\nu$ when Player 2 plays according to the stochastic model of his nominal behaviour.
In the BWC problem, we thus need to solve simultaneously a two player zero-sum game for the worst-case and an optimization problem
where the adversary has been replaced by a stochastic model of his behaviour.

BWC is a natural problem: in practice, we want to build systems that ensure good performances when the environment exhibits his nominal behaviour, and at the same time, that ensure some minimal performances no matter how the environment behaves.  In~\cite{Bruyere:Filiot:Randour:Raskin:BWC:2014}, the BWC problem is solved for {\em finite-memory} strategies and {\em unidimensional} mean-payoff objectives, and shown to be in \NPcoNP.
Also, it is noted there that {\em infinite-memory strategies} are more powerful than finite-memory ones,
and that cannot even be approximated by the latter
(already in the unidimensional case; cf. \cite[Fig. 6]{BruyereFiliotRandourRaskin:BWC:arxiv} for an example).
The respective threshold problem was left unsolved.
We extend here these results in several directions.

\subsection{Contributions}

Our contributions are as follows.
First, we consider {\em $d$-dimensional} mean-payoff objectives. Multiple dimensions are useful to model systems with {\em multiple objectives} that are potentially conflicting, and to analyze the possible trade-offs.  For example, we may want to synthesize strategies that ensure a good QoS while keeping the energy consumption as low as possible.
This extends the BWC problem with one additional level of conflicting trade-offs, which makes the analysis substantially harder.
Second, we study both finite-memory and infinite-memory strategies.
We show that the multidimensional BWC problem is \coNPc in both cases,
and so not more expensive than the plain multidimensional mean-payoff games. 
This is obtained as a \coNP reduction to the solution of a linear system of inequalities of polynomial size.
Correctness follows from non-trivial approximations results for finite/infinite-memory strategies inside end-components%
\footnote{Sub-MDPs which are strongly connected and closed w.r.t. the stochastic transitions.}.
While in the unidimensional case optimal values for the expectation can always be achieved precisely (already by memoryless strategies),
in our multidimensional setting this is not true anymore.
To overcome this difficulty, we are able to show that achievable vectors can be approximated with arbitrary precision,
which is sufficient for our analysis.
Third, in the special case when the worst-case objective is unidimensional (but the expectation is still multidimensional),
we show that the complexity decreases to \NPcoNP. This solves with optimal complexity the infinite-memory threshold problem left open in~\cite{Bruyere:Filiot:Randour:Raskin:BWC:2014}.  
Finally, we introduce the {\em beyond almost-sure threshold problem (BAS)} which is a natural relaxation of the BWC problem. The BAS problem asks, given two threshold values $\vec \mu, \vec\nu \in \Q^d$, for the synthesis of a strategy for Player~1 that $(i)$ ensures a mean-payoff larger than $\vec \mu$ almost surely, i.e. {\em with probability one}, and $(ii)$ an expectation larger than $\vec \nu$ against the nominal behaviour of the environment.
This problem has been independently considered (among other generalizations thereof) in \cite{ChatterjeeKomarkovaKretinsky:Unifying:2015}.
We show that the multidimensional BAS threshold problem is solvable in \PTIME.
As in the BWC problem, we reduce to a linear system of inequalities of polynomial size,
but this time the reduction can be done in \PTIME.

\subsection{Related works}

Solutions to the expected unidimensional mean-payoff problem in Markov Decision Processes (MDP) can be found for example in~\cite{Puterman}, it can be solved in \PTIME, and pure memoryless strategies are sufficient to play optimally. The threshold problem for unidimensional mean-payoff games was first studied in~\cite{EhrenfeuchtMycielski:MeanPayoff:1979}, pure memoryless optimal strategies exist for both players, and the associated decision problem can be solved in \NPcoNP. As said above, BWC was introduced in~\cite{Bruyere:Filiot:Randour:Raskin:BWC:2014} but studied only for finite memory strategies and unidimensional payoffs, the decision problem can be solve in \NPcoNP.

Multidimensional mean-payoff games are investigated in \cite{VelnerRabinovich:Noisy:FOSSACS:2011,VelnerChatterjeeDoyenHenzingerRabinovichRaskin:Complexity:ArXiv:2012},
where it is shown that infinite-memory controllers are more powerful than finite-memory ones,
and the finite-memory and general threshold problems are both \coNPc.
The expectation problem for multidimensional mean-payoff MDPs is in \PTIME,
and finite-memory controllers always suffice \cite{BrazdilBrozekChatterjeeForejtKucera:TwoViews:LMCS:2014}.
Moreover, a recent study showed that one can add additional quantitative probability requirements for the mean-payoff to be above a certain threshold (while still optimizing the expectation),
and that the resulting decision problem is \PTIME for the so-called \emph{joint interpretation}
(where the probability threshold is the same for all dimensions),
and exponential for the \emph{conjunction interpretation} (each dimension has a different probability threshold) \cite{ChatterjeeKomarkovaKretinsky:Unifying:2015} (cf. also \cite{RandourRaskinSankur:CAV15}).
In both cases, infinite-memory strategies are required to achieve the desired performance.
Here, we study the \emph{multidimensional mean-payoff BWC threshold problem},
for both finite-memory and arbitrary controllers.
Our BWC threshold problem generalizes both the synthesis problem for multidimensional mean-payoff games and for multidimensional mean-payoff MDPs with no additional cost in worst-case computational complexity.

\ignore{
Motivation: Generalise BWC to the setting where you have an expectation threshold $\nu$ in one component
and an (unrelated) worst-case threshold in another component $\mu$.
This can be answered in our multidimensional framework by solving the BWC problem for worst-case threshold $(\nu, -W)$
and expectation threshold $(\nu, \mu)$, where $W$ is the largest absolute value of any weight in the MDP.
This cannot be solved by the original one-dimensional BWC framework.
}

\subsection{Illustrating example}

\label{sec:illustrating:example}


\begin{figure*}
	
	\begin{center}
	
		\begin{minipage}{.28\textwidth}
			\centering
			\begin{tabular}{ c | c | c }
				$a_{ij}$	& $0$						& $1$ 			\\\hline		
				$0$		& \backslashbox{0}{0}	& \backslashbox{8}{6}	\\\hline
				$1$		& \backslashbox{4}{2}	& \backslashbox{0}{0}
			\end{tabular}
			\\[4ex]
			\begin{tabular}{ c | c | c }
				$b_{jk}$	& $0$						& $1$ 			\\\hline		
				$0$		& \backslashbox{30}{2}	& \backslashbox{2}{10}	\\\hline
				$1$		& \backslashbox{60}{4}	& \backslashbox{8}{20}
			\end{tabular}			
			\subcaption{Consumption of time (below) and energy (above).}
			\label{fig:motivational:table}
		\end{minipage}
		\begin{minipage}{.71\textwidth}
			\footnotesize
			\centering
			\begin{tikzpicture} 

				\node[randomstate] (c) [] {$0$};
				\node[playerstate] (ct) [below left = 1.5cm and .5cm of c] {$0, 0$};
				\node[playerstate] (cu) [below right = 1.5cm and .5cm of c] {$0, 1$};
			
				\node[randomstate] (d) [right = 5cm of c] {$1$};
				\node[playerstate] (dt) [below left = 1.5cm and .5cm of d] {$1, 0$};
				\node[playerstate] (du) [below right = 1.5cm and .5cm of d] {$1, 1$};
			
				\path[->] (c) edge [bend right = 45] node [above = 1ex] {$\frac 1 2$} (ct);
				\path[->] (c) edge [bend left = 45] node [above = 1ex] {$\frac 1 2$} (cu);

				\path[->] (ct) edge [out = 30, in = -100] node [left = 0ex] {$(30, 2)$} (c);
				\path[->] (cu) edge [out = 150, in = -80] node [right = 0ex] {$(60, 4)$} (c);

				\path[->] (ct) edge [out = 150, in = 120, min distance = 30ex] node [left = 3ex] {$(10, 16)$} (d);
				\path[->] (cu) edge [out = -60, in = 150] node [below left = 1ex and 3ex] {$(16, 26)$} (d);

				\path[->] (d) edge [bend right = 45] node [above = 1ex] {$\frac 1 2$} (dt);
				\path[->] (d) edge [bend left = 45] node [above = 1ex] {$\frac 1 2$} (du);

				\path[->] (dt) edge [out = 30, in = -100] node [left = 0ex] {$(2, 10)$} (d);
				\path[->] (du) edge [out = 150, in = -80] node [right = 0ex] {$(8, 20)$} (d);

				\path[->] (dt) edge [out = -120, in = 30] node [below right = 1ex and 3ex] {$(34, 4)$} (c);
				\path[->] (du) edge [out = 30, in = 60, min distance = 30ex] node [right = 3ex] {$(64, 6)$} (c);

			\end{tikzpicture}
			\vspace{-10ex}
			\subcaption{Example of multidimensional mean-payoff \MDP.}
			\label{fig:motivational_example_MDP}
		\end{minipage}
	\end{center}
	\caption{Illustrating example.}
\end{figure*}


Consider the following task system \cite{ZwickPaterson:MeanPayoff:TCS:1996}:
There are two configurations (0 and 1),
and at each interaction between the controller and its environment, 
one new instance of two kind of tasks can be generated (0 and 1).
The two tasks are generated with equal probability $1/2$ in the nominal behavior of the environment.
Before serving pending task $k \in \set{0, 1}$,
the system may decide to go from configuration $i$ to configuration $j$ at cost $a_{ij}$, for $i, j \in \set{0, 1}$,
and then it has to serve the pending task $k$ from the new configuration $j$ at cost $b_{jk}$.
Thus, the total cost is $a_{ij} + b_{jk}$.
Costs are bidimensional: Each cost specifies an amount of time and energy;
the actual parameters are shown in Fig.~\ref{fig:motivational:table}.
For example, from configuration $0$, task $0$ takes $30$ time units to complete and it consumes $2$ energy units,
while from the other configuration the same task takes $2$ time units and $10$ energy units.
%
We are interested in synthesizing controllers that optimize the expected/worst-case mean (i.e., per task) time and energy.
There are trade-offs between the two measures:
If the controller decides to serve a task quickly then the system consumes a large amount of energy, and vice versa.
To analyze this example, we rephrase it as the multidimensional mean-payoff \MDP depicted in Fig.~\ref{fig:motivational_example_MDP}.
For example, state $0$ represents the fact that the system is in configuration $0$ waiting for a task to arrive,
while in state $(0, 0)$ a task of the first type has arrived,
and the controller needs to decide whether to serve it from the same configuration, or go to configuration $1$.
The objective of the controller is to guarantee worst-case mean time $24$ under all circumstances, in that case the probabilities in the MDP are ignored and the probabilistic choice is replaced by an adversarial choice (we have thus a two-player zero sum game). Additionally, with the same strategy for the controller, we want to minimize the expected mean energy consumption in the nominal behaviour of the controller given by the stochastic model.
If the controller decides to always serve tasks from configuration $0$,
then it ensures an expected mean energy consumption of $3$,
but under this strategy, the worst-case mean time is $60$,
which does not meet our worst-case objective of $24$.
A strategy for the controller that is good both for the worst-case and for the expectation
can be obtained as follows: For two parameters $\alpha, \beta \in \N$,
stay in configuration $0$ for $\alpha$ consecutive tasks,
then move to configuration $1$ for $\beta$ tasks, and then repeat.
This ensures worst-case time
$
	\frac {\alpha - 1} {\alpha + \beta} 60 + \frac 1 {\alpha + \beta} 64 +
	\frac {\beta - 1} {\alpha + \beta} 8 + \frac 1 {\alpha + \beta} 16
$
and expected energy
$
	\frac {\alpha - 1} {\alpha + \beta} (\frac 1 2 2 + \frac 1 2 4) + \frac 1 {\alpha + \beta} (\frac 1 2 4 + \frac 1 2 6) +
	\frac {\beta - 1} {\alpha + \beta} (\frac 1 2 10 + \frac 1 2 20) + \frac 1 {\alpha + \beta} (\frac 1 2 16 + \frac 1 2 26)	
$.
By taking $\alpha = 1$ and $\beta = 3$,
we obtain worst-case time $24$ (thus meeting the requirement) and expected energy $14$.
Note the trade-off: To ensure a stronger guarantee on the mean time,
we had to sacrifice the expected mean energy.

In this paper we address the problem of deciding the existence of controllers
ensuring a worst-case (or almost-sure) threshold, while, at the same time,
achieving a usually better expectation threshold under the nominal behavior of the environment.
We consider the class of multidimensional mean-payoff objectives.

\subsection{Structure of the paper}

In Sec.~\ref{sec:preliminaries}, we present the preliminaries that are necessary to define the BWC and BAS problems.
In Sec.~\ref{sec:BWC}, we solve the BWC problem both for finite and infinite memory strategies.
In Sec.~\ref{sec:BAS}, we solve the BAS problem and show that finite memory strategies are sufficient to achieve the BAS threshold problem.
Finally, in Sec.\ref{sec:conclusions} we conclude with some final remarks.
Full proofs can be found in the appendices \ref{app:preliminaries} and \ref{app:BWC}.

\section{Preliminaries}

\label{sec:preliminaries}

Let $\N$, $\Q$, and $\R$ be the set of natural, rational, and real numbers, respectively,
and let $\Rinf = \R \cup \set{+\infty, -\infty}$.
For two vectors $\vec\mu$ and $\vec\nu$ of the same dimension
and a comparison operator $\sim \in \set{\leq, <, >, \geq}$,
we write $\vec\mu \sim \vec\nu$ for the component-wise application of $\sim$.
In particular, $\vec\mu > \vec 0$ means that \emph{every component} of $\vec\mu$ is strictly positive.
A \emph{probability distribution} on $A$ is a function
$R : A \to \Qpos$ s.t. $\sum_{a \in A} R(a) = 1$.
The \emph{support} of $R$ is $\Support R = \setof {a \in A} {R(a) > 0}$.
Let $\D(A)$ be the set of probability distributions on $A$.

\ignore{
The \emph{size} 
of an integer number $p \in \Z$,
of a rational number $r = p / q \in \Q$ with $p$ and $q$ two co-prime integers,
of a vector $\vec r = (r_1, \dots, r_k)$ of $k$ integer or rational numbers,
of a matrix $A = (r_{ij})_{1 \leq i \leq m, 1 \leq j \leq n}$ of $m \cdot n$ integer or rational numbers,
of a system of inequalities $A \cdot \vec x \leq \vec b$,
or of a finite distribution $R : A \to \Qpos$ on a set $A$ of cardinality $n$,
is defined as follows:
\begin{align}
	\size p &= 1 + \left\lceil\lg(\abs p + 1)\right\rceil \\
	\size r &= \size p + \size q \\
	\size {\vec r} &= \size {r_1} + \cdots + \size {r_k} \\
	\size A &= \sum_{i = 1}^m \sum_{j = 1}^n \size {r_{ij}} \\
	\size {A \cdot \vec x \leq \vec b} &= \size A + \size {\vec b} \\
	\size R &= n \cdot \max_{a \in A} \size {R(a)}
\end{align}
Other reasonable notions of size can be used,
but they usually are linearly equivalent to those given here \cite{Schrijver}.
}

\subsection{Weighted graphs}

A \emph{multi-weighted graph} is a tuple $\Graph = \tuple {d, S, E, w}$,
where $d \geq 1$ is the dimension,
$S$ is a finite set of states,
$E \subseteq S  \times S$ is the set of directed edges,
and $w: E \to \Z^d$ is a function assigning to each edge a weight vector.
When $d = 1$, we refer to $\Graph$ just as a weighted graph.
With $\vec x[i]$ we denote the $i$-th component of a vector $\vec x$.
For a state $s \in S$, let $E(s) = \setof t {(s, t) \in E}$ be its set of \emph{successors}.
We assume that each state $s$ has at least one successor.
Let $W$ be the largest absolute value of a weight appearing in the graph.

A \emph{play} in $\Graph$ 
is an infinite sequence of states $\pi = s_0 s_1 \cdots$ s.t. $(s_i, s_{i+1}) \in E$ for every $i \geq 0$.
Let $\Events {s_0} (\Graph)$ be the set of plays in $\Graph$ starting at $s_0$,
and let $\Plays \Graph$ be the set of all plays of $\Graph$.
When $\Graph$ is clear from the context, we omit it.
The \emph{prefix} of length $n$ of a play $\pi = s_0 s_1 \cdots$ is the finite sequence $\pi(n) = s_0 s_1 \cdots s_{n-1}$.
For a set of states $T\subseteq S$,
Let $\Prefs T \Graph$ be the set of prefixes of plays in $\Graph$ ending in a state $s_{n-1} \in T$.

The \emph{total payoff} and \emph{mean payoff} up to length $n$
of a play $\pi = s_0 s_1 \cdots$ (or prefix of length at least $n$)
are defined as $\TP_n(\pi) = \sum_{i=0}^{n-1} w(s_i, s_{i+1})$
and $\MP_n(\pi) = \frac 1 n \TP_n(\pi)$, respectively.
The (lim-inf) total and mean payoffs on an infinite play $\pi$ are then defined as
$\TP(\pi) := \liminf_{n \to \infty} \TP_n(\pi)$ and
$\MP(\pi) := \liminf_{n \to \infty} \MP_n(\pi)$.


\subsection{Markov decision processes}

A \emph{Markov decision process}, or \MDP,
is a tuple $\Game = \tuple {\Graph, S^C, S^R, R}$,
where $\Graph = (d, S, E, w)$ is a multi-weighted graph,
$\set {S^C, S^R}$ is a partition of $S$ into states belonging to either the Controller player or to the Random player, respectively,
and $R : S^R \to \D(S)$ is a function assigning a distribution over $S$ to states belonging to Random
s.t., for every $s \in S^R$, $\Support{R(s)} = E(s)$.
We do not allow $R(s)$ to assign probability zero to any successor of $s$.%
\footnote{This restriction will simplify the presentation.
It is not present in \cite{Bruyere:Filiot:Randour:Raskin:BWC:2014}, but it can be easily lifted.}
Let $Q$ be the largest denominator used to represent probabilities in $R$.
We use $Q$ as a measure of complexity for representing $R$.
%

%
%
%

In order to discuss the complexity of strategies for Controller,
we represent them as \emph{stochastic Moore machines}.
A strategy for a MDP $\Game = \tuple {\Graph, S^C, S^R, R}$ is a tuple $f = (M, \alpha, f_u, f_o)$,
consisting of a set of memory states $M$,
the initial memory distribution $\alpha \in \D(M)$,
the stochastic memory update function $f_u : S \times M \to \D(M)$,
and the stochastic output function $f_o : S^C \times M \to \D(S)$,
where $\Support{f_o(s, m)} \subseteq E(s)$ for every $s \in S^C$ and $m \in M$.
The update function is extended to sequences $f_u^* : S^* \to \D(M)$
inductively as $f_u^*(\varepsilon) = \alpha$
and $f_u^*(\pi s)(m') = \sum_{m \in M}f_u^*(\pi)(m) f_u(s, m)(m')$.
The output function on sequences $f_o^* : S^*S^C \to \D(S)$
is defined as $f_o^*(\pi s)(s') = \sum_{m \in M}f_u^*(\pi)(m) f_o(s, m)(s')$.
A play $\pi = s_0 s_1 \cdots$ is \emph{consistent} with a Controller's strategy $f$
if, and only if, for every $i$ s.t. $s_i \in S^C$, we have $s_{i+1} \in \Support {f_o^*(s_0s_1\cdots s_i)}$.
Given a state $s_0$ and a Controller's strategy $f$,
the set of \emph{outcomes} $\Outcome {s_0} {f} (\Graph)$
is the set of plays starting at $s_0$ which are consistent with $f$.

A strategy $f$ is \emph{pure} iff $\Support {f_u(s, m)}$ and $\Support {f_o(s, m)}$ are both singletons.
A strategy $f$ is \emph{memoryless} iff $\card M = 1$,
\emph{finite-memory} iff $\card M < \infty$,
and \emph{infinite-memory} iff $\card M = \infty$.
Let $\All \Game$, $\Pure \Game$,
$\FiniteMemory \Game$, and $\PureFiniteMemory \Game$
be the sets of all, resp., pure, finite-memory, and pure finite-memory strategies.
%

\subsection{Markov chains}

A \emph{Markov chain} is an \MDP where no state belongs to Controller,
i.e., $S^C = \emptyset$,
and in this case we just write $\Game = \tuple {\Graph, R}$.
An \emph{event} is a measurable set of plays $A \subseteq \Plays \Graph$.
Given a state $s_0$ and an event $A \subseteq \Events {s_0} (\Graph)$,
let $\Prob {s_0} \Game A$ be the probability that a play starting in $s_0$ belongs to $A$,
which exists and it is unique by Carath\'eodory's extension theorem \cite{Billingsley}.
An even is \emph{almost sure} if it has probability $1$.
For a measurable payoff function $v : \Plays\Graph \to \Rinf^d$,
let $\E {s_0} \Game v$ be the expected value of $v$ of a play starting in $s_0$.

A Markov chain $\Game$ is \emph{unichain} if it contains exactly one bottom strongly connected component (BSCC).
Therefore, if $\Game$ is unichain, then all states in its unique BSCC are visited infinitely often almost surely,
and the mean payoff equals its expected value almost surely.

Given a \MDP $\Game = \tuple {\Graph, S^C, S^R, R}$
and a strategy $f$ for Controller represented as the stochastic Moore machine $(M, \alpha, f_u, f_o)$,
let the \emph{induced Markov chain} be $\Game[f] = \tuple{\Graph', R'}$,
where $\Graph' = \tuple {d, S \times M, w', E'}$
with $((s, m), (s', m')) \in E'$ iff $(s, s') \in E$,
$m' \in \Support{f_u(s, m)}$,
and $s' \in \Support{f_o(s, m)}$ whenever $s \in S^C$,
$w'((s, m), (s', m')) = w(s, s')$ for every $((s, m), (s', m')) \in E'$,
$R'(s, m)(s', m') = R(s)(s') \cdot f_u(s, m)(m')$ for every $s \in S^R$,
and $R'(s, m)(s', m') = f_o(s, m)(s') \cdot f_u(s, m)(m')$ for every $s \in S^C$.
Note that $\Game[f]$ is finite iff $f$ is finite-memory.
By a slight abuse of terminology, we say that a strategy $f$ is \emph{unichain} if $\Game[f]$ is unichain.
Plays in $\Game[f]$ can be mapped to plays in $\Game$
by a projection operator $\projection {} \cdot : \Plays{\Graph'} \to \Plays\Graph$
which discards the memory of $f$. 
Given a state $s_0$, a Controller's strategy $f$,
and an event $A \subseteq \Events {s_0}$,
let $\Prob {s_0, f} \Game A := \Prob {s_0} {\Game[f]} {\projection {-1} A}$.
For a measurable payoff function $v : \Plays\Graph \to \Rinf^d$,
let $\E {s_0, f} \Game v := \E {s_0} {\Game[f]} {v'}$,
where $v'(\pi) := v(\projection {} \pi)$.

\subsection{End-components}

A \emph{end-component} (EC) of a \MDP $\Game$
is a set of states $U \subseteq S$ s.t.
\begin{inparaenum}[a)]
	\item the induced sub-graph $\tuple {U, E \cap U \times U}$ is strongly-connected, and
	\item for any stochastic state $s \in U \cap S^R$, $E(s) \subseteq U$.
\end{inparaenum}
Thus, Controller can surely keep the game inside an EC,
and almost surely visits all states therein.
For an end-component $U$ of $\Game$,
we denote by $\restrict \Game U$ the MDP obtained by restricting $\Game$ to $U$ in the natural way.
%
ECs are central in the analysis of MDPs thanks to the following result.
\begin{proposition}[cf. \cite{DeAlfaro:PhD:1998}]
	\label{prop:EC}
	For any Controller's strategy $f \in \All \Game$,
	the set of states visited infinitely often when playing according to $f$ is almost surely an EC.
\end{proposition}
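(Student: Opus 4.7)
The plan is to fix the strategy $f$ and the starting state $s_0$, and define the random variable $U(\pi) \subseteq S$ that maps each play $\pi = s_0 s_1 \cdots$ to the set of states visited infinitely often. Since $S$ is finite, $U$ takes values in $2^S$ and is measurable. The goal is to show that under the probability measure $\Prob{s_0, f}\Game{\cdot}$, the event that $U$ is an end-component of $\Game$ has probability one. This decomposes into showing the two defining EC conditions: \textbf{(a)} the subgraph induced by $U$ is strongly connected, and \textbf{(b)} for every random state $s \in U \cap S^R$, all its successors in $\Game$ also lie in $U$.

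For \textbf{(a)}, the argument is purely pathwise and does not require probabilistic reasoning. For any play $\pi$, since $S$ is finite there exists an index $N(\pi) < \infty$ such that $s_n \in U(\pi)$ for all $n \geq N(\pi)$ (the finitely many non-$U$ states are dropped after some time). Given two states $s, t \in U(\pi)$, both visited infinitely often, one can pick indices $N(\pi) \leq n_1 < n_2$ with $s_{n_1} = s$ and $s_{n_2} = t$; the portion $s_{n_1} s_{n_1+1} \cdots s_{n_2}$ of the play is then a path from $s$ to $t$ lying entirely in $U(\pi)$ and using only edges of $\Graph$. Symmetrically there is a path from $t$ back to $s$, so $U(\pi)$ is strongly connected in $\Graph$ for \emph{every} play.

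For \textbf{(b)}, the argument is probabilistic and is the main obstacle, particularly because for infinite-memory $f$ the induced Markov chain $\Game[f]$ is itself infinite, so one cannot directly quote the finite-state Markov chain theorem that recurrent sets are BSCCs. Instead, I would fix any pair $(s, s') \in S^R \times S$ with $s' \in E(s)$, and let $B_{s, s'}$ be the bad event that $s$ is visited infinitely often in $\Game$ while the edge $(s, s')$ is traversed only finitely often. Let $p := R(s)(s') > 0$. The key observation is that, conditioned on the entire history up to and including the $k$-th visit to $s$ (regardless of the memory state of $f$ at that moment, since $s \in S^R$ is controlled only by $R$), the probability of traversing $(s, s')$ at the next step is exactly $p$. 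A standard conditional Borel--Cantelli argument (L\'evy's extension of the second Borel--Cantelli lemma, which applies to non-independent events in arbitrary probability spaces) then yields $\Prob{s_0, f}\Game{B_{s,s'}} = 0$. Taking the union over the finitely many pairs $(s, s')$ with $s \in S^R$ and $s' \in E(s)$ still gives a null set, so almost surely every successor of every infinitely visited random state is itself visited infinitely often, establishing \textbf{(b)}.

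Combining the pathwise \textbf{(a)} with the almost-sure \textbf{(b)}, the event ``$U$ is an EC of $\Game$'' has probability one, proving the proposition. The delicate point worth emphasizing in the write-up is the use of a conditional Borel--Cantelli lemma rather than the elementary independent version, since successive visits to the same random state $s$ in $\Game[f]$ are in general not independent when $f$ has (possibly infinite) memory; what saves us is that the transition distribution $R(s)$ does not depend on the memory state.
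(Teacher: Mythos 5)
Your argument is correct, and it is worth noting that the paper does not actually prove this proposition: it imports it by citation from de~Alfaro's thesis, so there is no in-paper proof to compare against. Your write-up matches the standard argument. Part \textbf{(a)} is indeed purely pathwise and holds surely, not just almost surely. Part \textbf{(b)} is where the content lies, and you correctly identify the two points that need care: first, that one cannot simply invoke the ``recurrent classes are BSCCs'' theorem for finite Markov chains because $\Game[f]$ may be infinite; second, that successive visits to a random state $s$ are not independent under a history-dependent $f$, so the conditional (L\'evy) form of the second Borel--Cantelli lemma is needed rather than the independent version. The observation that rescues the argument --- that in $\Game[f]$ the marginal probability of moving from $(s,m)$ to graph-state $s'$ is $R(s)(s')$ regardless of the memory component $m$, because $R'(s,m)(s',m') = R(s)(s')\cdot f_u(s,m)(m')$ sums over $m'$ to $R(s)(s')$ --- is exactly right. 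One dependency you should make explicit in the write-up: your conclusion that \emph{every} successor of an infinitely-visited random state is itself infinitely visited uses $p = R(s)(s') > 0$, which holds here only because the paper assumes $\Support{R(s)} = E(s)$; in the more general setting where zero-probability edges are allowed, condition \textbf{(b)} of the EC definition would have to be read with respect to the support of $R(s)$ rather than $E(s)$.
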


\subsection{Expected-value objective}

For a \MDP $\Game$, a starting state $s_0$, and Controller's strategy $f \in \All \Game$,
the set of \emph{expected-value achievable solutions for $f$} is
$\ExpSolP \Game {s_0, f} = \setof {\vec \nu \in \R^d} {\E {s_0, f} \Game \MP > \vec \nu}$,
i.e., it is the set of vectors $\vec \nu$ s.t. Controller can guarantee an expected mean payoff $> \vec \nu$ from state $s_0$ by playing $f$.
The set of \emph{expected-value achievable solutions} is
$\ExpSolP \Game {s_0} = \bigcup_{f \in \All \Game} \ExpSolP \Game {s_0, f}$.
%
%
%
Given a state $s_0$ and rational threshold vector $\vec\nu \in \Q^d$,
the \emph{expected-value threshold problem} asks whether $\vec\nu \in \ExpSolP \Game {s_0}$.


\begin{theorem}[\cite{BrazdilBrozekChatterjeeForejtKucera:TwoViews:LMCS:2014}]
	The expected-value threshold problem for multidimensional mean-payoff \MDPs is in \PTIME.
\end{theorem}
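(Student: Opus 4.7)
The plan is to reduce the problem to the feasibility of a linear program of polynomial size, which is solvable in \PTIME by, e.g., the ellipsoid method.

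By the proposition above, under any Controller strategy $f$ the set of states visited infinitely often is almost surely a single end-component, so the expected mean-payoff decomposes as $\sum_U p_f(U)\cdot \vec m_f(U)$ where $U$ ranges over the maximal ECs of $\Game$, $p_f(U)$ is the probability of eventually remaining in $U$, and $\vec m_f(U)$ is the conditional expectation of $\MP$ given that the tail stays in $U$. The first step is to characterise the achievable tuples $(p_f(U),\vec m_f(U))_U$ by linear constraints.

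For a fixed EC $U$, I would encode the closure of the conditional expectations via an occupation-measure LP: with variables $x_{s,s'}\geq 0$ for each $(s,s')\in E\cap (U\times U)$, impose (i) flow balance $\sum_{t\in U} x_{s,t} = \sum_{t\in U} x_{t,s}$ at every $s\in U$; (ii) stochastic consistency $x_{s,s'} = R(s)(s')\cdot\sum_{t\in U} x_{s,t}$ at every $s\in U\cap S^R$; and (iii) normalisation $\sum_{s,s'} x_{s,s'}=1$. The conditional expected mean-payoff is then the linear expression $\sum_{s,s'} w(s,s')\cdot x_{s,s'}$, and standard MDP theory ensures that every feasible $\vec x$ is realised in the limit by a randomised stationary strategy inside $U$. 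Introducing fresh variables $y_U\geq 0$ standing for $p_f(U)$, scaling the occupation variables by $y_U$, and adding a standard transient-flow LP that routes one unit of probability from $s_0$ into the $y_U$'s while respecting the stochastic transitions on the way, yields a single global LP of polynomial size in $\size\Game$.

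The threshold $\vec\nu$ lies in $\ExpSolP \Game {s_0}$ iff this LP admits a feasible point with $\sum_U\sum_{s,s'\in U} w(s,s')\cdot x^U_{s,s'} > \vec\nu$ componentwise. Testing this in \PTIME is done by introducing a slack $\varepsilon\geq 0$, adding the constraint $\sum_U\sum_{s,s'\in U} w(s,s')\cdot x^U_{s,s'} \geq \vec\nu + \varepsilon\cdot\vec 1$, and maximising $\varepsilon$; the threshold is achievable iff the optimum is strictly positive. The main subtlety is the gap between the closed polytope of frequencies and the set of expectations actually attained by some strategy: in the multidimensional case, boundary points of the polytope may only be approached in the limit by infinite-memory strategies rather than attained exactly. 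However, since the problem asks for strict inequality $\vec m > \vec\nu$, it is enough to exhibit any point of the achievable polytope that strictly dominates $\vec\nu$, which is precisely what positivity of $\varepsilon$ certifies.
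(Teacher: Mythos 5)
The paper states this result only by citation to Br\'azdil \emph{et al.}; your LP-based argument --- end-component decomposition, occupation-measure constraints inside each MEC, a transient flow routing one unit of probability from $s_0$ into the MECs, and a slack variable to certify the strict inequality --- is exactly the construction of that reference and the same machinery this paper reuses in its systems $T$ and $T'$. One cosmetic point: realising an arbitrary feasible occupation measure inside a MEC generally requires a 2-memory stochastic-update strategy rather than a stationary one (the measure may decompose into several disjoint recurrent classes that must first be selected at random), but this affects neither the attained expectation nor the polynomial-time bound.
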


\noindent
While randomized finite-memory strategies are both necessary and sufficient in general for achieving a given expected mean payoff,
in ECs we can use randomized finite-memory \emph{unichain} strategies to \emph{approximate} achievable vectors.
Being unichain ensures that the mean payoff equals the expectation almost surely.
By standard convergence results in Markov chains,
this entails that by playing such a strategy for sufficiently long time
we obtain an average mean payoff close to the expectation with high probability.
%
We crucially use this property in the constructions leading to the main results of Sec.~\ref{sec:BWC} and \ref{sec:BAS};
cf. Lemmas~\ref{lem:BWC:synthesis:WEC}, \ref{lem:BWC:synthesis:EC}, and \ref{lem:BAS:synthesis:EC}.

\begin{figure}
	\centering
	\footnotesize

	\begin{minipage}{.5\textwidth}
		\centering
		\begin{tikzpicture}

			\node[playerstate] (s) [] {$s$};
			\node[playerstate] (t) [right = 1 cm of s] {$t$};

			\path[->] (s) edge [loop above] node [above = 1ex] {$(0, 1)$} ();
			\path[->] (s) edge [bend right = 45] node [above = 1ex] {$(0, 0)$} (t);
		
			\path[->] (t) edge [loop above] node [above = 1ex] {$(1, 0)$} ();
			\path[->] (t) edge [bend right = 45] node [above = 1ex] {$(0, 0)$} (s);

		\end{tikzpicture}
		\subcaption{A \MDP reduced to one EC.}
		\label{fig:randomized_EC_example}
	\end{minipage}
\\[4ex]

	\begin{minipage}{.5\textwidth}
		\centering
		\begin{tikzpicture}

			\node[randomstate] (s0) [] {$s, 0$};
			\node[randomstate] (s1) [left = 1 cm of s0] {$s, 1$};
			\node[randomstate] (t) [right = 1 cm of s0] {$t$};

			\path[->] (s0) edge node [above] {$\frac 1 2$} node [below] {$(0, 1)$} (s1);
			\path[->] (s0) edge node [above] {$\frac 1 2$} node [below] {$(0, 0)$} (t);
		
			\path[->] (s1) edge [loop above] node [above = 1ex] {$(0, 1)$} ();
			\path[->] (t) edge [loop above] node [above = 1ex] {$(1, 0)$} ();

		\end{tikzpicture}
		\subcaption{Exact strategy inducing two BSCCs.}
		\label{fig:randomized_EC_2mem_example}
	\end{minipage}
\\[4ex]

	\begin{minipage}{.5\textwidth}
		\centering
		\begin{tikzpicture}

			\node[randomstate] (s0) 	[] {$s, 1$};
			\node[randomstate] (s1) 	[right = 1 cm of s0] {$s, 2$};
			\node[] (sdots)	[right = 1 cm of s1] {$\cdots$};
			\node[randomstate] (sA) 	[right = 1 cm of sdots] {$s, A$};
		
			\node[randomstate] (t0) 	[below = .5 cm of sA] {$t, 1$};
			\node[randomstate] (t1) 	[left = 1 cm of t0] {$t, 2$};
			\node[] (tdots)	[left = 1 cm of t1] {$\cdots$};
			\node[randomstate] (tA) 	[left = 1 cm of tdots] {$t, A$};

			\path[->] (s0) edge [bend left = 30] node [above = 1ex] {$(0, 1)$} (s1);
			\path[->] (s1) edge [bend left = 30] node [above = 1ex] {$(0, 1)$} (sdots);
			\path[->] (sdots) edge [bend left = 30] node [above = 1ex] {$(0, 1)$} (sA);
			\path[->] (sA) edge [bend left = 30] node [right = 1ex] {$(0, 0)$} (t0);

			\path[->] (t0) edge [bend left = 30] node [below = 1ex] {$(1, 0)$} (t1);
			\path[->] (t1) edge [bend left = 30] node [below = 1ex] {$(1, 0)$} (tdots);
			\path[->] (tdots) edge [bend left = 30] node [below = 1ex] {$(1, 0)$} (tA);
			\path[->] (tA) edge [bend left = 30] node [left = 1ex] {$(0, 0)$} (s0);

		\end{tikzpicture}
		\subcaption{Approximate finite-memory strategy inducing one BSCC.}
		\label{fig:pure_EC_example}
	\end{minipage}

	\caption{Approximating the expectation inside ECs.}
	
\end{figure}

\begin{example}

	We illustrate the idea in the single end-component \MDP in Fig.~\ref{fig:randomized_EC_example}
	(cf. \cite[Fig.~3]{VelnerChatterjeeDoyenHenzingerRabinovichRaskin:Complexity:ArXiv:2012}).
	There exists a simple randomized 2-memory strategy $f$ achieving expected mean payoff precisely $(\frac 1 2, \frac 1 2)$
	which decides, with equal probability,
	whether to stay forever in $s$ or in $t$.
	However, the induced Markov chain has two BSCCs;
	cf. Fig.~\ref{fig:randomized_EC_2mem_example}.
	%
	%
	While intuitively no pure finite-memory strategy can achieve mean payoff exactly equal $(\frac 1 2, \frac 1 2)$ in this example,
	%
	%
	finite-memory unichain strategies can approximate this value.
 	For a parameter $A \in \N$, consider the strategy $g_A$ which stays in $s$ for $A$ steps,
	and then goes to $t$, stays in $t$ for $A$ steps, and then goes back to $s$, and repeats this scheme forever.
	The induced Markov chain has only one BSCC, thus $g_A$ is unichain; cf. Fig.~\ref{fig:pure_EC_example}.
	The strategy $g_A$ achieves expected (and worst-case) mean payoff
	$\left(\frac A {2A + 2}, \frac A {2A + 2}\right)$,
	which converges from below to $(\frac 1 2, \frac 1 2)$ as $A \to \infty$.
	
\end{example}

\begin{restatable}{lemma}{ECfinitememoryexpectation}
	\label{lem:strategy:EC:finite-memory:expectation}
	Let $\Game$ be a multidimensional mean-payoff \MDP,
	let $s_0$ be a state in an EC $U$ thereof,
	and let $\vec \nu \in \ExpSolP {\restrict \Game U} {s_0}$
	be an expectation vector achievable by remaining inside $U$.
	There exists a \emph{finite-memory unichain} strategy $g \in \FiniteMemory \Game$
	achieving the same expectation $\vec \nu \in \ExpSolP {\restrict \Game U} {s_0, g}$.
\end{restatable}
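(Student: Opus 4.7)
The plan is to exploit the strict inequality in $\ExpSolP{\cdot}{\cdot}$ to leave room for approximation. Unfolding the hypothesis, some strategy $f \in \All \Game$ achieves expected mean payoff $\vec\nu' := \E{s_0, f}{\restrict{\Game}{U}}{\MP}$ with $\vec\nu' > \vec\nu$ componentwise. I would then build a finite-memory \emph{unichain} strategy $g$ whose expected mean payoff approximates $\vec\nu'$ closely enough to still dominate $\vec\nu$.

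First, I would characterise $\vec\nu'$ via edge frequencies: by the standard LP formulation for expected mean payoff in MDPs (see \cite{Puterman,BrazdilBrozekChatterjeeForejtKucera:TwoViews:LMCS:2014}), $\vec\nu'$ is realised by a frequency vector $x : E \cap (U \times U) \to \Qpos$ that satisfies flow conservation at Controller states, respects the probabilistic ratios at Random states, has support forming a strongly connected subgraph of $U$, and satisfies $\sum_e x_e \cdot w(e) = \vec\nu'$. Rationality comes for free from the vertices of the corresponding polytope.

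Second, I would build a finite-memory unichain strategy $g_A$ parametrised by $A \in \N$, in the style of Fig.~\ref{fig:pure_EC_example}. Fix a reference state $s^\star \in U$. A round of $g_A$ schedules, in a fixed deterministic order, roughly $A \cdot x_e$ uses of each Controller edge $e$, with return to $s^\star$ at the end of the round via a fixed short path; the memory is a bounded counter tracking progress inside the round. Because every round deterministically traverses $s^\star$ and every Random state resolves into successors that the schedule drives back on track, the induced Markov chain $\Game[g_A]$ has a unique BSCC and is therefore unichain. A direct count then yields $\E{s_0, g_A}{\restrict{\Game}{U}}{\MP} = \vec\nu' + O(1/A)$, where the error absorbs the fixed return paths and the integer rounding of $A \cdot x_e$. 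For $A$ sufficiently large, the expectation still dominates $\vec\nu$ componentwise, so $g_A \in \FiniteMemory \Game$ witnesses $\vec\nu \in \ExpSolP{\restrict{\Game}{U}}{s_0, g_A}$.

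The main obstacle is to \emph{accurately realise} the frequency vector $x$ while \emph{keeping the chain unichain}. The naive randomised strategy playing edges with the raw frequencies typically splits the induced chain into several BSCCs, as in Fig.~\ref{fig:randomized_EC_2mem_example}; the fix is to replace randomisation by one long deterministic schedule over a single round, so that the memory forces every play through the reference state $s^\star$ no matter how Random states resolve. The cost of this enforcement is only a $1/A$ approximation error that the strict inequality $\vec\nu' > \vec\nu$ can absorb.
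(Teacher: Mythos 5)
Your high-level plan --- pass to a strictly larger achievable $\vec\nu' > \vec\nu$, extract rational edge frequencies from the standard LP, and realise them approximately with a finite-memory cycling strategy whose $O(1/A)$ error is absorbed by the strict inequality --- is exactly the shape of the paper's argument. However, there is a genuine gap in your second step: you assert that the support of the frequency vector $x$ "forms a strongly connected subgraph of $U$". This is false in general, and the failure is precisely the difficulty this lemma exists to overcome. In the \MDP of Fig.~\ref{fig:randomized_EC_example}, the frequency vector realising $(\tfrac 1 2, \tfrac 1 2)$ puts weight $\tfrac 1 2$ on each self-loop and weight $0$ on the two connecting edges, so its support splits into the two components $\set{s}$ and $\set{t}$; the naive strategy induced by these frequencies is exactly the non-unichain one of Fig.~\ref{fig:randomized_EC_2mem_example}. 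The paper's proof therefore first decomposes the positive-frequency subgraph into maximal SCCs $S_1,\dots,S_k$ (each of which is itself an EC), builds a \emph{randomized memoryless} local strategy $g_i$ inside each $S_i$ that attains the local expectation $\vec\nu_i$ \emph{exactly} via the stationary distribution of the recurrent chain $\Game_i[g_i]$, and only then cycles between the $S_i$'s for time proportional to $x_i$, paying the $O(1/A)$ travel cost; unichain-ness follows because the cycle visits every component. Your construction would need this decomposition made explicit; the phrase "return to $s^\star$ via a fixed short path" does not account for the inter-component travel whose edges lie outside the support of $x$.

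A second, related weakness is your purely deterministic scheduling of "$A \cdot x_e$ uses of each Controller edge $e$". In the presence of Random states you cannot schedule edge uses: whether the play is at the source of a given Controller edge at a given step depends on how Random states resolved, and the frequencies of edges leaving Random states are dictated by $R$, not by your counter. Your claim that "every Random state resolves into successors that the schedule drives back on track" with only a deterministic $1/A$ error is not justified --- the round length becomes a random variable and the error analysis turns into a concentration argument rather than a direct count. The paper sidesteps this entirely by keeping the local strategies randomized memoryless (so ergodic theory gives exact frequencies inside each recurrent component) and notes in Remark~\ref{rem:pure_strategy} that full de-randomization is possible but costs exponential memory and is not needed. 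To repair your proof, either adopt the paper's randomized local strategies, or restrict the deterministic schedule to Controller states and replace the exact count by a law-of-large-numbers argument for the Random states within each round.
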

\begin{remark}
	\label{rem:pure_strategy}
	In the lemma above, we can take $g$ to be even a \emph{pure} finite-memory unichain strategy.
	This can be obtained by a de-randomization technique
	at the cost of introducing extra memory of size exponential in the number of the states controlled by the player.
	However, we do not need this stronger result in the rest of the paper,
	and we content ourselves with randomized strategies for simplicity.
\end{remark}
\begin{proof}[Proof sketch]
	By the results of \cite{BrazdilBrozekChatterjeeForejtKucera:TwoViews:LMCS:2014},
	there exists a randomized finite-memory strategy $f$ achieving expected mean payoff $\vec\nu^* > \vec\nu$
	which surely stays inside $U$.
	However, $(\restrict \Game U)[f]$ is not unichain in general. 
	By Proposition~\ref{prop:EC}, the set of states visited infinitely often by a play in $(\restrict \Game U)[f]$ is an EC almost surely.
	Since there are finitely many different ECs,
	there are probabilities $\alpha_1, \dots, \alpha_n > 0$ and ECs $U_1, \dots, U_n \subseteq U$
	s.t. the set of states visited infinitely often by a play in $(\restrict \Game U)[f]$
	is $U_1$ with probability $\alpha_1$, \ldots,
	$U_n$ with probability $\alpha_n$.
	By Proposition~\ref{prop:EC}, $\alpha_1 + \dots + \alpha_n = 1$.
	%
	In the first step, we define a ``local'' randomized memoryless strategy $g_i$ which plays as $f$ once inside $U_i$.
	No approximation is introduced in this step.
	%
	In the second step, we combine the local randomized memoryless strategies $g_i$'s above.
	We build a randomized finite-memory strategy $g$
	which cycles between $U_1, \dots, U_n$ and plays according to $g_i$ inside each $U_i$ a fraction $\approx\alpha_i$ of the time.
	This is possible since $U_i, U_j$ are almost surely mutually inter-reachable
	due to the fact that we are always inside the EC $U$.
	By construction, $(\restrict \Game U)[g]$ is unichain since $g$ cycles between all the ECs $U_1, \dots, U_n$.
	Moreover, for every $\varepsilon > 0$, we can make the expected fraction of time spent changing component smaller than $\varepsilon$.
	Thus, $g$ achieves expected mean payoff at least $(1-\varepsilon) \cdot \vec\nu^* - (W, \dots, W) \cdot \varepsilon$,
	where $W$ is the largest absolute value of any weight in $\Game$.
	The latter quantity can be made $> \vec \nu$ for sufficiently small $\varepsilon > 0$.
\end{proof}

\subsection{Worst-case objective}

For a \MDP $\Game$, a starting state $s_0$, and a Controller's strategy $f \in \All \Game$,
the set of \emph{worst-case achievable solutions for $f$} is defined as
$\WCSolP \Game {s_0, f} = \setof {\vec \mu \in \R^d} {\forall \pi \in \Outcome {s_0} f \cdot \MP(\pi) > \vec \mu}$,
i.e., it is the set of vectors $\vec \mu$ s.t.
Controller can surely guarantee a mean payoff $>\vec \mu$ from state $s_0$ by playing $f$.
The set of \emph{worst-case achievable solutions} is
$\WCSolP \Game {s_0} = \bigcup_{f \in \All \Game} \WCSolP \Game {s_0, f}$.
Given a state $s_0$ and rational threshold vector $\vec\mu \in \Q^d$,
the \emph{worst-case threshold problem} asks whether $\vec\mu \in \WCSolP \Game {s_0}$.
%
%


With this worst-case interpretation, the randomized choices in the MDP are replaced by purely adversarial ones,
and the MDP can thus be viewed as a two-player zero-sum game.
While infinite-memory strategies are more powerful than finite-memory ones for the worst-case objective,
the latter suffice to approximate achievable vectors.
We make extensive use of this property in Sec.~\ref{sec:finite-memory} where we restrict our attention to finite-memory strategies.
%
%
%
%
\begin{lemma}[cf. Lemma~15 of \cite{VelnerChatterjeeDoyenHenzingerRabinovichRaskin:Complexity:ArXiv:2012}]
	\label{lem:strategy:worst-case}
	Let $\Game$ be a multidimensional mean-payoff \MDP, $s_0$ a state therein,
	and let $\vec \mu \in \WCSolP \Game {s_0}$.
	There exists a pure \emph{finite-memory} strategy
	$f \in \PureFiniteMemory \Game$ for Controller
	s.t. $\vec\mu \in \WCSolP \Game {s_0, f}$.
\end{lemma}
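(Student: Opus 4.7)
The plan is to reduce the statement to the analogous result for two-player multidimensional mean-payoff games, namely Lemma~15 of \cite{VelnerChatterjeeDoyenHenzingerRabinovichRaskin:Complexity:ArXiv:2012}, which asserts that pure finite-memory strategies suffice to achieve any strict threshold vector in such games. The key observation is that the worst-case payoff of a Controller strategy depends only on the set of consistent plays, not on the probabilities carried by random transitions, so the MDP can be reinterpreted as a purely adversarial game.

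Concretely, I would first construct from $\Game$ the associated two-player mean-payoff game $\Game'$ by reassigning the random states $S^R$ to an adversarial Player~2, keeping the edge structure and the weight function $w$ unchanged and discarding the distribution $R$. I would then establish the equivalence $\vec\mu \in \WCSolP \Game {s_0}$ iff Player~1 has some strategy in $\Game'$ ensuring $\MP > \vec\mu$ against every Player~2 strategy. The $(\Leftarrow)$ direction is immediate, since a winning Player-1 strategy in $\Game'$, viewed as a pure Controller strategy in $\Game$, produces outcomes that form a subset of the plays Player~2 can enforce in $\Game'$. For $(\Rightarrow)$, given any (possibly randomized, infinite-memory) Controller strategy $f$ with $\vec\mu \in \WCSolP \Game {s_0, f}$, I would extract a pure Player-1 strategy in $\Game'$ by resolving, at each finite history in $S^*S^C$, the support of $f_o^*$ arbitrarily; every resulting consistent play still lies in $\Outcome {s_0} {f}$ and therefore satisfies $\MP > \vec\mu$.

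Finally, I would invoke Lemma~15 of \cite{VelnerChatterjeeDoyenHenzingerRabinovichRaskin:Complexity:ArXiv:2012} on $\Game'$ to obtain a pure finite-memory winning Player-1 strategy $f'$. Because $\Game$ and $\Game'$ share the same Controller states and edge structure, $f'$ is at once a pure finite-memory strategy in $\Game$, and it witnesses $\vec\mu \in \WCSolP \Game {s_0, f'}$. The only mildly subtle point is the $(\Rightarrow)$ direction of the MDP-to-game translation in the presence of randomization, but no genuine obstacle arises: shrinking the set of consistent plays by resolving random choices can only improve the worst case, and from a worst-case viewpoint the probabilities in $R$ are irrelevant.
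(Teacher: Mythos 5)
Your proposal is correct and follows essentially the same route as the paper, which offers no explicit proof but simply observes that under the worst-case interpretation the random choices become adversarial, so the MDP can be viewed as a two-player zero-sum multidimensional mean-payoff game, and then cites Lemma~15 of \cite{VelnerChatterjeeDoyenHenzingerRabinovichRaskin:Complexity:ArXiv:2012}. Your additional care in the $(\Rightarrow)$ direction (resolving the support of a randomized strategy only shrinks the set of consistent plays) is a valid elaboration of the same argument.
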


The finite-memory strategy threshold problem for multidimensional mean-payoff games is \coNPc \cite{VelnerRabinovich:Noisy:FOSSACS:2011,VelnerChatterjeeDoyenHenzingerRabinovichRaskin:Complexity:ArXiv:2012}.
By the lemma above, finite memory controllers suffice in our setting,
and we obtain the following complexity characterization.

\begin{theorem}[\cite{VelnerRabinovich:Noisy:FOSSACS:2011,VelnerChatterjeeDoyenHenzingerRabinovichRaskin:Complexity:ArXiv:2012}]
	\label{thm:multidimensional:worst-case:complexity}
	The worst-case threshold problem for multidimensional mean-payoff \MDPs  is \coNPc.
\end{theorem}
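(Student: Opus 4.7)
The plan is to reduce the worst-case threshold problem for multidimensional mean-payoff \MDPs to the same threshold problem for two-player zero-sum multidimensional mean-payoff games, whose \coNP-completeness has already been established in \cite{VelnerRabinovich:Noisy:FOSSACS:2011,VelnerChatterjeeDoyenHenzingerRabinovichRaskin:Complexity:ArXiv:2012}.

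First I would associate to the \MDP $\Game = \tuple{\Graph, S^C, S^R, R}$ the two-player game $\associated \Game$ obtained by reinterpreting each state in $S^R$ as controlled by an adversarial Player~2 and discarding $R$, keeping only the edge structure of $\Graph$. Since the worst-case objective quantifies universally over all outcomes consistent with Controller's strategy -- irrespective of the probability values -- and since by assumption $\Support{R(s)} = E(s)$ guarantees that every edge is available to the adversary, one has $\WCSolP \Game {s_0, f} = \WCSolP {\associated \Game} {s_0, f}$ for every strategy $f$, and consequently $\WCSolP \Game {s_0} = \WCSolP {\associated \Game} {s_0}$.

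For the upper bound, Lemma~\ref{lem:strategy:worst-case} ensures that any $\vec\mu \in \WCSolP \Game {s_0}$ is witnessed by a pure \emph{finite-memory} Controller strategy. Deciding $\vec\mu \in \WCSolP \Game {s_0}$ therefore reduces to the finite-memory threshold problem in the two-player game $\associated \Game$, which lies in \coNP by \cite{VelnerRabinovich:Noisy:FOSSACS:2011,VelnerChatterjeeDoyenHenzingerRabinovichRaskin:Complexity:ArXiv:2012}. For the matching lower bound, I would encode an arbitrary multidimensional mean-payoff two-player game as an \MDP by equipping each Player~2 state with some full-support distribution over its successors (e.g., uniform); by the same argument as above the worst-case achievable sets coincide, and \coNP-hardness transfers from the purely game-theoretic setting.

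The main obstacle is not located in this theorem but is packaged inside Lemma~\ref{lem:strategy:worst-case}: the whole reduction rests on finite-memory strategies sufficing to realize every worst-case achievable vector, a non-trivial fact established in \cite{VelnerChatterjeeDoyenHenzingerRabinovichRaskin:Complexity:ArXiv:2012}. Without that result one would only obtain membership in a coarser complexity class, since the achievable set $\WCSolP \Game {s_0}$ is in general realized only in the limit by infinite-memory strategies. With the lemma at hand, the theorem itself is a short conceptual step: the probabilistic data of the \MDP plays no role under the worst-case interpretation.
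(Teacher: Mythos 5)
Your proposal is correct and follows essentially the same route as the paper: the paper also observes that under the worst-case interpretation the MDP is just a two-player zero-sum multidimensional mean-payoff game, invokes Lemma~\ref{lem:strategy:worst-case} to conclude that finite-memory controllers suffice, and then inherits \coNP-completeness from the finite-memory threshold problem of \cite{VelnerRabinovich:Noisy:FOSSACS:2011,VelnerChatterjeeDoyenHenzingerRabinovichRaskin:Complexity:ArXiv:2012}. Your explicit treatment of the hardness direction (equipping adversarial states with any full-support distribution) is a detail the paper leaves implicit, but it is the intended argument.
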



\noindent
In the unidimensional case, memoryless strategies suffice for both players \cite{Liggett:Lippman:1969,EhrenfeuchtMycielski:MeanPayoff:1979},
and the complexity is \NPcoNP (and even \UPcoUP \cite{ZwickPaterson:MeanPayoff:TCS:1996,Jurdzinski:1998}).
It is open since long time whether this problem is in \PTIME.
\begin{theorem}[\cite{Liggett:Lippman:1969,EhrenfeuchtMycielski:MeanPayoff:1979}]
	\label{thm:unidimensional:worst-case:complexity}
	The worst-case threshold problem for unidimensional mean-payoff \MDPs  is in \NPcoNP.
\end{theorem}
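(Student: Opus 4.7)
The plan is to reduce the unidimensional worst-case threshold problem on MDPs to the classical two-player mean-payoff game threshold problem, and then invoke the well-known $\NPcoNP$ algorithm for the latter. The key observation is that, under the worst-case interpretation, the actual probabilities in $R$ play no role: a play is a possible outcome of a strategy $f$ precisely when it is consistent with the support of $R$ at every stochastic state. Consequently, $\WCSolP \Game {s_0}$ coincides with the set of achievable mean-payoff vectors in the two-player zero-sum game $\Hame$ obtained from $\Game$ by declaring every $s \in S^R$ to be an adversarial (Player~2) state, with moves $E(s) = \Support{R(s)}$.

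Next, I would invoke the classical result of \cite{EhrenfeuchtMycielski:MeanPayoff:1979,Liggett:Lippman:1969} stating that in unidimensional mean-payoff games both players have pure memoryless optimal strategies, and that the values of both players coincide. This gives polynomial-size certificates for membership in both $\NP$ and $\coNP$. For the $\NP$ side, a nondeterministic algorithm guesses a pure memoryless strategy $\sigma : S^C \to S$ for Controller (with $\sigma(s) \in E(s)$). Fixing $\sigma$ collapses $\Hame$ to a one-player graph controlled by the adversary; we then verify in polynomial time (e.g.\ by Karp's minimum mean-cycle algorithm) that the minimum mean weight of a cycle reachable from $s_0$ in this residual graph is strictly greater than $\mu$. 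For the $\coNP$ side, we symmetrically guess a pure memoryless strategy $\tau : S^R \to S$ for the adversary and verify in polynomial time that in the residual one-player graph controlled by Controller, the maximum mean-payoff value from $s_0$ is $\leq \mu$.

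The only subtlety is correctness of the reduction to $\Hame$: one must check that any strategy of Player~2 in $\Hame$ can be ``simulated'' by an adversarial resolution of the stochastic choices in $\Game$ in the sense relevant to the worst-case objective, and vice versa. This follows directly from the definitions of $\Outcome {s_0} f(\Graph)$ and $\WCSolP \Game {s_0, f}$, together with the assumption that $R(s)$ assigns positive probability to every successor in $E(s)$, so the set of possible outcomes is exactly the set of plays consistent with the underlying graph. Given this, the two-player game $\Hame$ and the MDP $\Game$ yield the same worst-case achievable set, and the classical $\NPcoNP$ procedure applies verbatim.

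I do not expect a serious obstacle: the reduction is immediate and the hard work is already encapsulated in the memoryless determinacy theorem for mean-payoff games. The main thing to be careful about is matching conventions (strict versus non-strict inequality in the threshold, and the $\liminf$ definition of $\MP$), but these are handled by standard arguments since memoryless strategies induce eventually periodic plays whose $\liminf$ mean payoff coincides with the mean weight of the reached cycle.
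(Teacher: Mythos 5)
Your proposal is correct and matches the paper's intent: the paper offers no proof of its own but simply cites the classical results, having already noted that under the worst-case interpretation the stochastic choices are replaced by adversarial ones so the MDP becomes a two-player zero-sum mean-payoff game, which is exactly your reduction. The remaining content (memoryless determinacy giving polynomial certificates for both players, verified by minimum/maximum mean-cycle computations) is the standard \NPcoNP argument the cited works encapsulate, and your handling of the strict-inequality and $\liminf$ conventions is the right level of care.
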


\section{Beyond worst-case synthesis}

\label{sec:BWC}

We generalize \cite{Bruyere:Filiot:Randour:Raskin:BWC:2014} to the multidimensional setting.
Given a MDP~$\Game$,
a starting state $s_0$,
and a Controller's strategy $f$, 
the set of \emph{beyond worst-case achievable solutions for $f$}, denoted $\BWCSolP \Game {s_0, f}$,
is the set of pairs of vectors $(\vec \mu; \vec \nu) \in \R^{2d}$ s.t.
$f$ surely guarantees a worst-case mean payoff $>\vec \mu$,
and achieves an expected mean payoff $>\vec \nu$
starting from $s_0$,
\begin{align*}
	\BWCSolP \Game {s_0, f} = \SetOf {(\vec \mu; \vec \nu) \in \R^{2d}}
	{\begin{array}{c}
		\vec \mu \in \WCSolP \Game {s_0, f} \\
			\textrm { and } \\
		\vec \nu \in \ExpSolP \Game {s_0, f}
	\end{array}}
\end{align*}
Let $\BWCSolP \Game {s_0} = \bigcup_{f \in \All \Game} \BWCSolP \Game {s_0, f}$
be the set of \emph{beyond worst-case achievable solutions}.
Given a starting state $s_0$ and a pair of threshold vectors $(\vec \mu; \vec \nu) \in \R^{2d}$,
the \emph{beyond worst-case threshold problem} (BWC) asks whether $(\vec \mu; \vec \nu) \in \BWCSolP \Game {s_0}$.

\begin{remark}
	We assume w.l.o.g. that $\vec \mu = \vec 0$.
	This follows by shifting each component by an appropriate amount.
	We further assume w.l.o.g. that $\vec \nu \geq \vec 0$.
	This follows from the fact that, since the mean payoff is surely $> \vec 0$ by the worst-case objective,
	then also the expectation is $> \vec 0$.
\end{remark}

\begin{remark}
	We say that $\Game$ is \emph{pruned} if $\vec 0 \in \BWCSolP \Game s$ for every state $s$ therein.
	Controller cannot satisfy the BWC objective if she ever visits a state $s$ not satisfying the worst-case objective.
	Many of our results are thus stated under the condition that $\Game$ is pruned.
	However, pruning an MDP, i.e., removing those states which are losing w.r.t. the worst-case objective,
	requires solving a mean-payoff game, and this will have a crucial impact on the complexity.
\end{remark}

\noindent
The finite-memory threshold problem for the unidimensional beyond worst-case problem has been studied in \cite{Bruyere:Filiot:Randour:Raskin:BWC:2014}.
\begin{theorem}[\cite{Bruyere:Filiot:Randour:Raskin:BWC:2014}]
	\label{thm:unidimensional:BWC:complexity}
	The finite-memory threshold problem for the unidimensional beyond worst-case problem for mean-payoff objectives is in \NPcoNP.
\end{theorem}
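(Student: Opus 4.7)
The plan is to reduce BWC to a sequence of polynomially-sized checks each decidable in \NPcoNP. First I would prune the \MDP by eliminating every state from which Controller cannot surely force a mean payoff $>\mu$; since the worst-case objective here is unidimensional, this reduces to classical mean-payoff games, which sit in \NPcoNP by Theorem~\ref{thm:unidimensional:worst-case:complexity}. After pruning, from every remaining state Controller still has a pure memoryless worst-case strategy, so the remaining burden is to additionally realise expectation $>\nu$ without ever leaving the pruned region.

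Next I would analyse the pruned \MDP via its maximal end-components. By Proposition~\ref{prop:EC} any play almost surely settles into some EC, so one may call a MEC $U$ \emph{winning} if it admits (i) a pure memoryless strategy $f_{\mathrm{WC}}$ certifying worst-case $>\mu$ while staying in $U$, and (ii) a memoryless expectation strategy $f_{\mathrm{E}}$ realising expected mean payoff $>\nu$ from inside $U$. The key structural claim is that inside such a $U$ the two can be combined into a single finite-memory strategy that simultaneously maintains $>\mu$ surely and realises expectation $>\nu$: follow $f_{\mathrm{E}}$ in long stretches while monitoring the running total payoff, and switch to $f_{\mathrm{WC}}$ for bounded recovery phases whenever the total drifts below a moving floor. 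Because the expectation under $f_{\mathrm{E}}$ strictly exceeds $\mu$, the drift away from that floor is positive, so by standard concentration the asymptotic fraction of time spent in recovery phases vanishes, preserving both guarantees.

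The \NP algorithm then guesses a set $\mathcal W$ of winning MECs together with polynomial-size certificates (one pure memoryless strategy of each kind per MEC), plus a pure memoryless transient strategy that surely reaches $\bigcup \mathcal W$ while remaining inside the pruned region. Verification reduces to one mean-payoff game oracle call per guessed $f_{\mathrm{WC}}$, a linear program per guessed $f_{\mathrm{E}}$ (in polynomial time by the expected-value theorem), and an almost-sure reachability check for the transient strategy. The symmetric \coNP bound follows by the same decomposition, using closure of unidimensional mean-payoff games under complement to certify failure at either the pruning stage or inside every MEC of the decomposition.

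The main obstacle will be justifying the combination lemma inside a MEC: the worst-case and expectation strategies are in general incompatible as memoryless strategies, and the switching scheme must be calibrated so that both the surely-$>\mu$ and the expected-$>\nu$ guarantees hold at once. A martingale/concentration argument is needed to show that the recovery phases triggered by $f_{\mathrm{WC}}$ have vanishing asymptotic frequency; this is possible in the unidimensional setting because the expected mean payoff under $f_{\mathrm{E}}$ strictly dominates $\mu$, giving positive drift and a polynomially bounded expected return time to any fixed recovery threshold, which in turn bounds the dilution of the expectation produced by recovery.
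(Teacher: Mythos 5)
This theorem is only cited in the paper (it is the result of Bruy\`ere \emph{et al.}), so the relevant comparison is with the paper's own machinery for the finite-memory case, which specializes to $d=1$: pruning via mean-payoff games, decomposition into maximal winning ECs (Proposition~\ref{prop:finite-mem:WEC}), a switching strategy inside each WEC that alternates long stretches of the expectation strategy with bounded recovery phases of the worst-case strategy (exactly your combination lemma; cf.\ Lemma~\ref{lem:BWC:synthesis:WEC} and the strategy $\stratcmb{\delta,\varepsilon}$ in App.~\ref{app:BWC:finite-memory}), and finally a \emph{linear program} tying reachability probabilities of MWECs to local expectations (Lemma~\ref{lem:finite-memory:general}). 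Your pruning step, your EC analysis, and your combination lemma all match this architecture, and the concentration argument you sketch for the recovery phases is the right one.

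The genuine gap is in your global decomposition. You declare a MEC $U$ winning only if it locally realises expectation $>\nu$, and you then verify almost-sure reachability of $\bigcup\mathcal W$. This is sound but not complete: the BWC objective only requires the \emph{weighted average} $\sum_U y_U\,\nu_U>\nu$ of local expectations, where $y_U$ is the probability of being absorbed in $U$, and some of the $\nu_U$ may be well below $\nu$. Concretely, take a random initial state sending the play with probability $\tfrac12$ into a worst-case-winning MEC with maximal expectation $100$ and with probability $\tfrac12$ into a worst-case-winning MEC with maximal expectation $0$, and let $\nu=10$: the instance is positive (global expectation up to $50$), but no set $\mathcal W$ in your sense is almost-surely reachable, so your algorithm rejects. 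The fix is what the paper (and Bruy\`ere \emph{et al.}) do: impose the threshold $\nu$ only on the global weighted sum (Eq.~\ref{eq:MP:global}), keep only the worst-case condition per MWEC (Eq.~\ref{eq:MP:local}, which in the unidimensional case is subsumed by the EC being winning), and solve for the absorption probabilities $y_U$ by linear programming over \emph{all} winning MECs rather than guessing a subset. This also repairs your \coNP direction, which as stated is under-justified: refuting existence over all subsets $\mathcal W$ is a universal statement over exponentially many guesses, whereas infeasibility of a single polynomial-size LP (built from \NPcoNP-certifiable mean-payoff game answers) is directly certifiable.
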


\subsection{Finite-memory synthesis}

\newcommand{\stratexp}[1]{g^{exp}_{#1}}
\newcommand{\stratwc}[1]{f^{wc}_{#1}}
\newcommand{\stratcmb}[1]{h^{cmb}_{#1}}

\newcommand{\mumin}{\mu^*}
\newcommand{\numax}{\nu^*}

\label{sec:finite-memory}

In this section, we address the problem of deciding whether there exists a finite-memory strategy for the BWC problem in the multidimensional setting.
By Proposition~\ref{prop:EC}, we know that the set of states visited infinitely often by any strategy (not necessarily a finite-memory one)
is almost surely an EC.
The crucial observation is that, when restricted to finite-memory, the same holds for ECs of a special kind.
An EC $U$ is \emph{winning} (WEC) iff Controller can surely guarantee the worst-case threshold $> \vec 0$
when constrained to remain in $U$, starting from any state therein.
Whether a EC is winning depends on the worst-case objective alone.

The following proposition is central in the analysis of the BWC problem for finite-memory strategies;
cf. \cite[Lemma 4]{Bruyere:Filiot:Randour:Raskin:BWC:2014} in the unidimensional case.

\begin{restatable}{proposition}{propFiniteMemWEC}
	\label{prop:finite-mem:WEC}
	Let $f$ be a \emph{finite-memory} strategy satisfying the worst-case threshold problem.
	The set of states visited infinitely often under $f$ is almost surely a \emph{winning} EC.
\end{restatable}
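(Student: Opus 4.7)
The plan is to reduce the asymptotic analysis of $f$ to the (finite) induced Markov chain $\Game[f]$, and then, for each EC that can be visited infinitely often under $f$, to extract a finite-memory witness strategy from $f$ itself confirming that the EC is winning. Since $f$ is finite-memory, $\Game[f]$ is a finite Markov chain and almost every play in it eventually settles inside some BSCC; by Proposition~\ref{prop:EC}, for each BSCC $C$ of $\Game[f]$ reachable from some initial configuration $(s_0, m_0)$ with $m_0 \in \Support\alpha$, the projection $U := \projection{}{C}$ is an EC of $\Game$, and the set of states visited infinitely often under $f$ is almost surely one of these $U$'s. Hence it suffices to show that every such $U$ is winning.

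Fix such a BSCC $C$ and an arbitrary $s \in U$. I would pick any memory $m$ with $(s, m) \in C$ (which exists since $U = \projection{}{C}$), and define $f_{s, m} \in \FiniteMemory{\Game}$ to be the strategy obtained from $f$ by replacing the initial memory distribution $\alpha$ by the Dirac distribution on $m$, while keeping $f_u$ and $f_o$ unchanged. Containment in $U$ is essentially immediate: any outcome of $f_{s, m}$ from $s$ lifts to a positive-probability walk in $\Game[f]$ starting at $(s, m)$ which, $C$ being a BSCC, never leaves $C$ and hence always projects into $U$. For the worst-case guarantee, I would first fix some $m_0 \in \Support\alpha$ from which $C$ is reachable in $\Game[f]$ and a finite positive-probability path $\rho_0$ in $\Game[f]$ from $(s_0, m_0)$ to $(s, m)$ (which exists because $C$ is both reachable and strongly connected); then, for any outcome $\pi$ of $f_{s, m}$ from $s$, prepending $\rho_0$ to its lift yields a positive-probability walk in $\Game[f]$ starting at $(s_0, m_0)$, whose projection $\projection{}{\rho_0} \cdot \pi$ is a legitimate outcome of $f$ from $s_0$. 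The worst-case hypothesis on $f$ then gives $\MP(\projection{}{\rho_0} \cdot \pi) > \vec 0$, and since $\projection{}{\rho_0}$ is a finite prefix this forces $\MP(\pi) > \vec 0$.

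The main technical subtlety I expect is the bookkeeping needed to lift outcomes of $\Game$ (which are purely path-based and ignore probabilities) to positive-probability walks in $\Game[f]$, and to verify that such lifts can be freely concatenated into a legitimate outcome of $f$ from the original initial configuration. Once this lifting step is in place, the argument is essentially the multidimensional analogue of \cite[Lemma~4]{Bruyere:Filiot:Randour:Raskin:BWC:2014}; the multidimensionality does not complicate this particular step, because the worst-case hypothesis $> \vec 0$ is componentwise and transfers unchanged under concatenation with a finite prefix.
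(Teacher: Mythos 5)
Your proposal is correct and follows essentially the same route as the paper's proof: decompose the finite induced chain $\Game[f]$ into its reachable BSCCs, identify the inf-set with the projection of a BSCC via Proposition~\ref{prop:EC}, and transfer the worst-case guarantee of $f$ from $s_0$ into the BSCC by prefix independence of the mean payoff. The only difference is presentational --- you argue directly by exhibiting the witness strategy $f_{s,m}$ for each EC, whereas the paper argues by contradiction from a hypothetical non-winning EC --- and you make explicit the lifting of outcomes to positive-probability walks in $\Game[f]$, which the paper's proof uses implicitly.
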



\begin{figure}
	\footnotesize
	\begin{center}
		\begin{tikzpicture} 

			\node[playerstate] (s) [] {$s$};
			\node[playerstate] (t) [below left = 1cm and 1cm of s] {$t$};
			\node[playerstate] (u) [below right = 1cm and 1cm of s] {$u$};
			\node[randomstate] (v) [right = 1.5cm of u] {$v$};
			
			\node (WEC) [above left = .1 and .1 of t] {\emph{WEC} $U$};
			\node (non-WEC) [above = 1.2 of v] {\emph{non-WEC} $V$};
			
			\path[->] (s) edge node [above left] {$(0,0)$} (t);
			\path[->] (s) edge node [above right] {$(0,0)$} (u);
			
			\path[->] (t) edge [loop below] node (tt) [below] {$(5,15)$} ();

			\path[->] (u) edge node (uv) [above] {$(0,0)$} (v);
			\path[->] (u) edge node [above] {$(0,0)$} (t);
			\path[->] (v) edge [bend right = 60] node (vu0) [above = 1ex] {$\frac 1 2, (30,80)$} (u);
			\path[->] (v) edge [bend left = 60] node (vu1) [below = 1ex] {$\frac 1 2, (30,-60)$} (u);
			
			\begin{pgfonlayer}{background}
				\node [fill=blue!20, rectangle, rounded corners = 12pt, fit=(t) (tt) (WEC)] {};
				\node [fill=blue!20, rectangle, rounded corners = 12pt, fit=(u) (v) (vu0) (vu1) (non-WEC)] {};
			\end{pgfonlayer}
			
		\end{tikzpicture}
	\end{center}
	\caption{Running example.}
	\label{fig:running_example}
\end{figure}

\begin{runningexample}
	As a simple example that will be used through the rest of the paper,
	consider the \MDP in Fig.~\ref{fig:running_example}.
	There are only two ECs $U$ and $V$,
	of which $U$ is winning, but $V$ is not.
	Indeed, from $v$ the adversary can always select the lower edge with payoff $(30, -60)$.
	In $U$ we can achieve expectation $(5, 15)$,
	and from $V$ we can achieve expectation $(15, 5)$.
	Therefore, according to the lemma above,
	any finite-memory strategy satisfying the worst-case objective will eventually go to $U$ almost surely.
\end{runningexample}

We proceed by analyzing WECs separately in Sec.~\ref{sec:finite-memory:WEC},
and then we tackle general MPDs in Sec.~\ref{sec:finite-memory:general}.
This will yield our complexity result in Sec.~\ref{sec:BWC:finite-memory:complexity}.

\subsubsection{Inside a WEC}

\label{sec:finite-memory:WEC}

We show that inside WECs finite-memory strategies always suffice for the BWC objective.
In particular, the threshold problem in WECs immediately reduces to an expectation threshold problem.
\begin{restatable}{lemma}{lemBWCsynthesisWEC}
	\label{lem:BWC:synthesis:WEC}
	Let $\Game$ be a pruned multidimensional mean-payoff \MDP,
	let $s_0$ be a state in a WEC $W$ of $\Game$,
	and let $\vec \nu \in \ExpSolP {\restrict \Game W} {s_0}$ with $\vec \nu \geq \vec 0$
	be an expectation achievable by remaining inside $W$.
	There exists a \emph{randomized finite-memory} strategy $h \in \FiniteMemory \Game$
	s.t. $(\vec 0; \vec \nu) \in \BWCSolP {\restrict \Game W} {s_0, h}$
	that also remains inside $W$.
	\ignore{
		Moreover, $h$ uses memory of pseudo-polynomial size
		$O(d \cdot W \cdot \frac 1 \delta \cdot \frac 1 \varespilon)$,
		where $\delta > 0$ is the error in the worst-case component,
		and $\varepsilon > 0$ is the error in the expectation component.
		It suffices to take $\varepsilon = \max \vec \nu / 2$.
	}
\end{restatable}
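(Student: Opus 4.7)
The plan is to combine an expectation-achieving strategy (from Lemma~\ref{lem:strategy:EC:finite-memory:expectation}) with a worst-case strategy (guaranteed by the WEC property via Lemma~\ref{lem:strategy:worst-case}), by alternating them in suitably sized phases.

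First I extract the two building blocks. Applying Lemma~\ref{lem:strategy:EC:finite-memory:expectation} to the end-component $W$ yields a randomized finite-memory unichain strategy $g$ staying in $W$ with expected mean payoff $\vec\nu' > \vec\nu$; since $\vec\nu \geq \vec 0$, this gives componentwise $\vec\nu' > \vec 0$. Since $W$ is a WEC, Lemma~\ref{lem:strategy:worst-case} provides a pure finite-memory strategy $f$ inside $W$ whose worst-case mean payoff is surely $> \vec 0$; by finiteness of the induced Markov chain $\Game[f]$, there is a uniform lower bound $\vec m > \vec 0$ such that $\MP(\pi) \geq \vec m$ on every outcome of $f$.

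Second, I define $h$ to play in rounds: in each round, play $g$ for $L$ steps and then play $f$ for $K$ steps, repeating forever. The memory of $h$ is the product of the memories of $g$ and $f$ together with a round-position counter cycling through $L+K$ values, hence finite and randomized. I calibrate $L, K$ to enforce two properties jointly: $(i)$ in every outcome, the per-round worst-case payoff $-WL\,\vec 1 + K\,\vec m$ is componentwise strictly positive, giving $\liminf_n \MP_n \geq (K\,\vec m - WL\,\vec 1)/(L+K) > \vec 0$ surely; and $(ii)$ since $g$ is unichain the mean payoff over each $g$-phase concentrates around $\vec\nu'$ in expectation (by the ergodic theorem on $\Game[g]$ restricted to its BSCC), so the expected mean payoff of $h$ approaches the convex combination $(L\,\vec\nu' + K\,\vec e_f)/(L+K)$, where $\vec e_f \geq \vec m$ is the expected mean payoff of $f$.

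The main obstacle is jointly satisfying $(i)$ and $(ii)$: the worst-case constraint caps the weight of $\vec\nu'$ in the convex combination at $m/(m+W) < 1$ where $m = \min_i \vec m[i]$, so when $\vec\nu$ lies close to $\vec\nu'$ in some coordinate and above $\vec e_f$ there, the feasible window of ratios $L/K$ is tight. The key observation is that the strict inequality $\vec\nu' > \vec\nu$ provides positive slack in each coordinate; choosing $L,K$ large (so that the bounded transient contributions of $f$ and the initial segments of $g$ become negligible) and the ratio $L/K$ just below the cap $m/W$ makes the convex combination strictly exceed $\vec\nu$ componentwise; if necessary we also refine the choice of $f$ to maximize $\vec e_f$ subject to the worst-case floor, following the parametric analysis of \cite{Bruyere:Filiot:Randour:Raskin:BWC:2014} lifted to the multidimensional setting. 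Finally, using Proposition~\ref{prop:EC}, I verify that every BSCC of $\Game[h]$ lies entirely in $W$, concluding that $(\vec 0;\vec\nu) \in \BWCSolP{\restrict \Game W}{s_0, h}$.
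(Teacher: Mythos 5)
Your overall architecture (combine the approximating unichain expectation strategy of Lemma~\ref{lem:strategy:EC:finite-memory:expectation} with a finite-memory worst-case strategy from Lemma~\ref{lem:strategy:worst-case} by alternating phases) matches the paper's, but the \emph{unconditional} round-robin alternation is a genuine gap that the rest of the argument cannot repair. To guarantee the worst case surely on \emph{every} outcome, your calibration forces $L/K < m/W$, i.e.\ the fraction of time spent playing $g$ is at most $m/(m+W) \leq 1/2$ (since $m \leq W$). Hence the expectation of $h$ is capped at roughly $\frac{m}{m+W}\vec\nu' + \frac{W}{m+W}\vec e_f$ in each coordinate, and whenever $\vec e_f$ is appreciably below $\vec\nu$ in some coordinate this falls strictly short of $\vec\nu$ there, no matter how large $L$ and $K$ are taken. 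The slack $\vec\nu' > \vec\nu$ does not rescue this: it can be arbitrarily small, while the loss from devoting a constant fraction at least $1/2$ of the time to $f$ is bounded below by a constant multiple of $\vec\nu - \vec e_f$. Your fallback of ``refining $f$ to maximize $\vec e_f$ subject to the worst-case floor'' is circular, since that optimization is essentially the BWC problem the lemma is meant to solve.

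The missing idea is to make the switch to the worst-case strategy \emph{conditional on the observed payoff}, which is how the paper proceeds. Its combined strategy plays the expectation strategy for $K$ steps while recording the running total payoff $\Sum$; only if $\Sum$ falls short of $(\vec\mu - \vec\delta)\cdot K$ does it insert a recovery period of $L$ steps of the worst-case strategy before restarting. On every outcome each period then has total payoff at least $(\vec\mu - \vec\delta)\cdot(K+L)$, which yields the sure worst-case guarantee; and because the expectation strategy is unichain, the Hoeffding-style concentration bound of Lemma~\ref{lem:strategy:EC:bound:A:K} shows that the probability $p(K)$ of triggering a recovery tends to $0$ as $K \to \infty$, so the expected fraction of time spent in recovery is negligible and the expectation of the combined strategy converges to that of $g$ alone. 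This adaptive trigger is precisely what lets the expectation escape the $m/(m+W)$ cap that sinks the unconditional schedule.
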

\begin{remark}
	The statement of the lemma holds even with $h$ a \emph{pure} finite-memory strategy,
	by applying Remark~\ref{rem:pure_strategy} when constructing the expectation strategy which is part of $h$.
	However, randomized strategies suffice for our purposes.
\end{remark}
\noindent
We use finite-memory strategies defined in WECs (such as $h$ above)
when constructing a global BWC strategy in the analysis of arbitrary MDPs in Sec.~\ref{sec:finite-memory:general}.
The construction of $h$ is done in a way analogous to the proof of Theorem~5 in \cite{Bruyere:Filiot:Randour:Raskin:BWC:2014};
cf. App.~\ref{app:BWC} for the details.
However, the analysis in the multidimensional case is considerably more difficult than in previous work.
It crucially relies on Lemma~\ref{lem:strategy:EC:finite-memory:expectation}
for the extraction of \emph{finite-memory unichain} strategies approximating the expectation objective inside ECs.
Note that in the unidimensional case of \cite{Bruyere:Filiot:Randour:Raskin:BWC:2014}
optimal expectation values can be reached exactly already by \emph{pure memoryless unichain strategies} (no approximation needed).
This is an key technical difference between our multidimensional setting and the unidimensional one of \cite{Bruyere:Filiot:Randour:Raskin:BWC:2014}.

\subsubsection{General case}

\label{sec:finite-memory:general}

\begin{figure*}

	\begin{align}
		\tag*{(A1)}
		1_{s_0}(s) + \sum_{(r, s) \in E} y_{rs} &= \sum_{(s, t) \in E} y_{st} + y_s
			&& \forall s \in S
		\label{eq:y:flow}
		\\
		\tag*{(A1')}
		y_{st} &= R(s)(t) \cdot ( \!\! \sum_{(r, s) \in E} \!\!\!\! y_{rs} - y_s)
			&& \forall (s, t) \in E \textrm{ with } s \in S^R
		\label{eq:y:random}
		\\
		\tag*{(A2)}
		\sum_{ \textrm{ MWEC } U }\sum_{s \in U} y_s &= 1
		\label{eq:y:sum}
		\\
		\tag*{(B)}
		\sum_{s \in U} y_s &= \sum_{(r, s) \in E\cap U \times U} x_{rs}
		 	&& \forall \textrm{ MWEC } U 
		\label{eq:x:y}
		\\
		\tag*{(C1)}
		\sum_{(r, s) \in E} x_{rs} &= \sum_{(s, t) \in E} x_{st}
			&& \forall s \in S 
		\label{eq:x:flow}
		\\
		\tag*{(C1')}
		x_{st} &= R(s)(t) \cdot \sum_{(r, s) \in E} x_{rs}
			&& \forall (s, t) \in E \textrm{ with } s \in S^R 
		\label{eq:x:random}
		\\
		\tag*{(C2)}
		\sum_{(s, t) \in E} x_{st} \cdot w(s, t)[i] & > \vec \nu[i] 
			&& \forall (1 \leq i \leq n) 
		\label{eq:MP:global}
		\\
		\tag*{(C3)}
		\sum_{(s, t) \in E\cap U \times U} \!\!\!\!\!\!\!\! x_{st} \cdot w(s, t)[i] & > 0
			&& \forall \textrm{ MWEC } U, 1 \leq i \leq n
		\label{eq:MP:local}
	\end{align}
	
	\caption{Linear system $T$ for the BWC finite-memory threshold problem.}
	\label{fig:BWC:system}
	
\end{figure*}

%
%



We reduce the finite-memory BWC problem to the solution of a system of linear inequalities.
This is similar to the solution of the multidimensional expectation problem presented in \cite{BrazdilBrozekChatterjeeForejtKucera:TwoViews:LMCS:2014}.
When only the expectation is considered, the intuition is that a ``global expectation''
is obtained by combining together ``local expectations'' achieved in ECs.
Thus, a strategy for the expectation works in two phases:
\begin{enumerate}
	\item[Phase I:] Reach ECs with appropriate probabilities.
	\item[Phase II:] Once inside an EC, switch to a local expectation strategy to achieve the right ``local expectation''.
\end{enumerate}

In the BWC problem, we need to enforce two extra conditions:
First, only ``local expectations'' from \emph{winning} ECs should be considered
(by Proposition~\ref{prop:finite-mem:WEC} finite-memory controllers cannot stay in a non-WEC forever with non-zero probability).
Second, ``local expectations'' should be $> \vec 0$ in order to satisfy the worst-case objective
(a negative ``local expectation'' would violate the worst-case objective).
%
Accordingly, a strategy for the BWC problem behaves as follows:
\begin{enumerate}
	\item[Phase I:] Reach WECs with appropriate probabilities.
	\item[Phase II:] Once inside a WEC, switch to a local BWC strategy to achieve the right ``local expectation'' $> \vec 0$.
\end{enumerate}
We write a system of linear inequalities expressing this two-phase decomposition.
W.l.o.g. we assume that state $s_0$ belongs to Controller,
and that all WECs are reachable with positive probability from $s_0$ (unreachable states can be removed).
Consider the system $T$ in Fig.~\ref{fig:BWC:system}.
%
%
%
For each state $s \in S$ we have a variable $y_s$,
and for each edge $(s, t) \in E$ we have variables $x_{st}$ and $y_{st}$.
System $T$ can be divided into three parts.
The first part consists of Equations~\ref{eq:y:flow}--\ref{eq:y:sum}.
Variable $y_s$ represents the probability that, upon visiting state $s$,
we switch to Phase II.
Variables $y_{st}$'s are used to express flow conditions.
In Eq.~\ref{eq:y:flow} we put an initial flow of $1$ in $s_0$,
and we require that the total incoming flow to a state equals the outgoing flow (including the leak $y_s$).
In Eq.~\ref{eq:y:random} ensures that the outgoing flow through an edge $y_{st}$ from a stochastic state $s$ is a fixed fraction of the incoming flow.
Finally, Eq.~\ref{eq:y:sum} states that we switch to Phase II in a WEC almost surely.
%

Before explaining the other two parts of $T$,
we need to introduce maximal WECs.
A \emph{maximal WECs} (MWEC) is a WEC which is not strictly included into another WEC.
The restriction to MWECs is crucial for complexity. 
The second part of $T$ consists of Eq.~\ref{eq:x:y} and it provides a link between Phase I and Phase II.
Variable $x_{st}$ represents the long-run frequency of edge $(s, t)$.
Eq.~\ref{eq:x:y} links the transient behaviour before switching inside a certain MWEC and the steady state behaviour once inside it.
More precisely, it guarantees that the probability to switch inside a certain MWEC
equals the total long-run frequency of all edges in the MWEC.

Finally, the remaining equations make up the third part of $T$.
Eq.~\ref{eq:x:flow} is a flow condition for the $x_{st}$'s,
stating that the incoming flow to a state equals the outgoing flow.
Eq.~\ref{eq:x:random} forces the flow to respect the probabilities of stochastic states.
Eq.~\ref{eq:MP:global} guarantees that the expected mean payoff is $> \vec\nu$,
as required.
Eq.~\ref{eq:MP:local} needs some justification.
It is specific to our setting and it does not follow from \cite{BrazdilBrozekChatterjeeForejtKucera:TwoViews:LMCS:2014}.
This equation specifies that the expected mean payoff is $> \vec 0$ \emph{inside every MWECs}.
We need to ensure that only ``local'' expected mean payoffs $> \vec 0$ should be considered in WECs,
in order to be able to apply the results from the previous Sec.~\ref{sec:finite-memory:WEC}.
Eq.~\ref{eq:MP:local} imposes a seemingly strong constraint
by requiring that \emph{all WECs} are visited infinitely often with positive probability.
Ideally, we would like to guess which are the MWECs which need to be visited infinitely often with positive probability,
but this would not yield a good complexity,
since there are exponentially many different sets of MWECs.
Instead, we require that \emph{every} MWEC is visited infinitely often with some positive probability.
Since we are only interested in approximating the expectation, 
it is always possible to put an arbitrary small total probability on MWECs that do not contribute to the ``global'' mean payoff.
This is formalized below.
\begin{proposition}
	\label{prop:BWC:strategy:all:MWEC}
	Let $\Game$ be a pruned multidimensional mean-payoff \MDP.
	If there exists a finite-memory strategy $h$ s.t. $(\vec 0; \vec \nu) \in \BWCSolP \Game {s_0, h}$,
	then there exists a finite-memory strategy $h^*$ with the same property,
	and such that, for every MWEC $U$,
	the set of states visited infinitely often by $h^*$ is a subset of $U$ with positive probability.
\end{proposition}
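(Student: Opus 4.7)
The plan is to construct $h^*$ by \emph{diluting} $h$ with one ``commitment'' strategy $h_i$ per MWEC $U_i$, combined via a single initial randomization. Enumerate the MWECs as $U_1, \dots, U_k$. For each $i$, I will build a finite-memory strategy $h_i$ that satisfies the worst-case threshold \emph{globally} and additionally traps the play inside $U_i$ with some positive probability $p_i > 0$. Then $h^*$ picks $h$ with probability $1-\sum_i\epsilon_i$ or $h_i$ with probability $\epsilon_i$ at its very first step, for suitably small $\epsilon_i > 0$; this adds one extra memory state, so $h^*$ remains finite-memory.

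To build $h_i$, I exploit the assumption that $U_i$ is reachable from $s_0$ with positive probability. Let $A$ be the set of states from which $U_i$ is reachable with positive probability under some Controller's strategy (so $s_0 \in A$), and let $\rho_i$ be a pure memoryless ``reach-$U_i$'' strategy whose choices at Controller states in $A \setminus U_i$ stay inside $A$. The strategy $h_i$ proceeds in stages: while the current state is in $A \setminus U_i$ it plays $\rho_i$; upon entering $U_i$ it switches to a local BWC strategy $\eta_i$ inside $U_i$ supplied by Lemma~\ref{lem:BWC:synthesis:WEC} applied with expectation threshold $\vec 0$ (achievable since $U_i$ is winning); and if an adversarial random transition lands in a state outside $A$, $h_i$ falls back to a pure finite-memory worst-case strategy from that state, which exists by Lemma~\ref{lem:strategy:worst-case} thanks to pruning. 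Each branch is finite-memory and every resulting play has mean payoff $> \vec 0$, so $h_i$ satisfies the worst-case threshold globally and in particular $\E {s_0, h_i} \Game \MP \geq \vec 0$. Moreover, with probability $p_i > 0$ the play reaches $U_i$ and then stays there under $\eta_i$ (which remains inside $U_i$ by Lemma~\ref{lem:BWC:synthesis:WEC}), making the set of states visited infinitely often a subset of $U_i$.

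Finally, let $\vec\delta := \E {s_0, h} \Game \MP - \vec\nu > \vec 0$ and pick weights $\epsilon_i > 0$ with $\sum_i \epsilon_i$ small enough that $(1-\sum_i \epsilon_i) \cdot \E {s_0, h} \Game \MP > \vec\nu$ component-wise, which is possible since $\vec\nu \geq \vec 0$ and $\E {s_0, h} \Game \MP > \vec\nu$ strictly. The worst-case threshold is inherited by $h^*$ because every play consistent with $h^*$ is consistent with some $h_i$ or $h$, each of which satisfies it. The expectation decomposes linearly as
$\E {s_0, h^*} \Game \MP = (1-\sum_i \epsilon_i) \cdot \E {s_0, h} \Game \MP + \sum_i \epsilon_i \cdot \E {s_0, h_i} \Game \MP \geq (1-\sum_i\epsilon_i) \cdot \E {s_0, h} \Game \MP > \vec\nu$,
giving $(\vec 0; \vec\nu) \in \BWCSolP \Game {s_0, h^*}$. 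Moreover, for each MWEC $U_i$, the event that the inf-often set is contained in $U_i$ has probability at least $\epsilon_i p_i > 0$ under $h^*$, as required. The main obstacle is the construction of $h_i$: naively continuing the reachability strategy after an adversarial deflection outside $A$ could violate the worst-case objective, and the key device that makes the construction work is the safety fallback, available from every state precisely because $\Game$ is pruned.
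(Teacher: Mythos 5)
Your overall architecture --- an initial randomized mix of $h$ with one dedicated strategy per MWEC, plus the linearity-of-expectation argument showing that a small enough mixing weight preserves the expectation threshold --- matches the paper's proof. The gap is in the construction of $h_i$. The worst-case objective here is a \emph{sure} objective: every play consistent with the strategy, including plays of probability zero, must have mean payoff $> \vec 0$. Your $h_i$ plays the reachability strategy $\rho_i$ for an \emph{unbounded} amount of time (``while the current state is in $A \setminus U_i$''), and your fallback only fires when the play leaves $A$. But the adversarial resolution of the random states can keep the play inside $A \setminus U_i$ forever (e.g., a random state that returns to a controller state of $A \setminus U_i$ with probability $1/2$); such a play is consistent with $h_i$, never triggers either switch, and forever follows $\rho_i$, which was chosen for reachability and carries no mean-payoff guarantee. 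Hence the claim that ``every resulting play has mean payoff $> \vec 0$'' is unjustified, and a single such bad play already destroys $\vec 0 \in \WCSolP \Game {s_0, h^*}$. (Almost-sure reachability of $U_i$ under $\rho_i$ does not help: the offending plays have measure zero, but the worst case quantifies over all outcomes.)

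The paper closes exactly this hole by time-bounding the reachability phase: play the reach-$U$ strategy for a fixed number $N$ of steps, then \emph{unconditionally} switch --- to a worst-case strategy confined to $U$ if the play happens to be in $U$ after $N$ steps, and to the global worst-case strategy $f^{wc}$ (available by pruning) otherwise. Prefix independence of the mean payoff makes the first $N$ steps irrelevant to the worst case, and $N$ can be taken large enough that each $U$ is entered within $N$ steps with positive probability. If you replace your ``while'' loop with this bounded phase (your safety fallback then becomes the ``otherwise'' branch of the paper's case split), the rest of your argument goes through.
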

\begin{proof}
	Since by assumption all MWEC are reachable with positive probability from $s_0$,
	for every MWEC $U$ there exists a strategy $f_U$ reaching $U$ with positive probability from $s_0$.
	Moreover, since $U$ is a WEC,
	there exists a strategy $f_U^{wc}$ for the worst-case objective $> \vec 0$ that surely remains in $U$.
	Let $f^{wc}$ be a worst-case strategy winning everywhere (it exists since $\Game$ is pruned by assumption).
	We construct the following strategy $f_N$ parametrized by a natural number $N > 0$:
	\begin{itemize}
		
		\item Choose a MWEC $U$ uniformly at random.
		
		\item Play $f_U$ for $N$ steps.
		
			\begin{itemize}
				\item If after $N$ steps the play is in $U$, then switch to $f_U^{wc}$.
				\item Otherwise, switch to $f^{wc}$.
			\end{itemize}
			
	\end{itemize}
	By construction $f_N$ is winning for the worst-case for every $N > 0$.
	Moreover, it is easy to see that there exists an $N^*$ sufficiently large
	s.t., for every MWEC $U$, $f_{N^*}$ visits $U$ infinitely often with positive probability.
	
	Finally, the strategy $h^*$ plays with probability $p > 0$ according to $f_{N^*}$,
	and otherwise according to $h$.
	Since both $f_{N^*}$ and $h$ are winning for the worst-case, so it is $h^*$.
	The expected mean payoff of $h^*$ converges from below to the expected mean payoff of $h$
	for $p > 0$ sufficiently small.
	Therefore, there exists $p > 0$ s.t. $(\vec 0; \vec \nu) \in \BWCSolP \Game {s_0, h^*}$.
\end{proof}


\noindent
We now state the correctness of the reduction.
\begin{lemma}
	\label{lem:finite-memory:general}
	Let $\Game$ be a pruned multidimensional mean-payoff \MDP,
	let $s_0 \in S$,
	and let $\vec \nu \geq \vec 0$.
	There exists a \emph{finite-memory} strategy $h$ s.t. $(\vec 0; \vec \nu) \in \BWCSolP \Game {s_0, h}$,
	if, and only if, the system $T$ has a non-negative solution.
\end{lemma}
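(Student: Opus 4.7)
The plan is to prove the two directions of the equivalence separately, extending the linear-programming approach of \cite{BrazdilBrozekChatterjeeForejtKucera:TwoViews:LMCS:2014} for the pure expectation problem with the new MWEC-level positivity constraint~\ref{eq:MP:local} that is specific to the BWC setting. A convenient structural observation that simplifies the bookkeeping is that \emph{MWECs are pairwise disjoint}: if two WECs share a state, then their union is again a WEC (strong connectivity and closure under stochastic successors both lift through a shared state, and winning strategies combine by case analysis on the current state), so disjointness follows from maximality.

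For the ($\Leftarrow$) direction, given a non-negative solution $(x,y)$ of~$T$, I construct a two-phase finite-memory strategy $h$. In Phase~I, at each visit to a Controller state $s$ the strategy commits to Phase~II with probability proportional to $y_s$ and otherwise picks an outgoing edge with probability proportional to $y_{st}$; equations \ref{eq:y:flow}--\ref{eq:y:random} make this a well-defined stochastic strategy consistent with the random transitions, and \ref{eq:y:sum} guarantees that the commitment happens almost surely, with mass $\sum_{s\in U}y_s$ committed inside each MWEC~$U$. Equations \ref{eq:x:y}, \ref{eq:x:flow}, and \ref{eq:x:random} show that the restriction of $(x_{st})$ to $U\times U$ is a stationary flow inside~$U$, and \ref{eq:MP:local} says its associated mean payoff is strictly positive; Lemma~\ref{lem:BWC:synthesis:WEC} applied inside~$U$ then produces a local finite-memory strategy achieving this expectation, with surely-worst-case above~$\vec 0$, and remaining in~$U$. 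These local strategies constitute Phase~II, and equation~\ref{eq:MP:global} delivers global expectation $>\vec\nu$ as a convex combination. To preserve the sure-worst-case during Phase~I, I time Phase~I out after a large parameter~$N$ and fall back to a pure worst-case strategy (which exists since $\Game$ is pruned), following the pattern of the proof of Proposition~\ref{prop:BWC:strategy:all:MWEC}.

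For the ($\Rightarrow$) direction, starting from a finite-memory strategy $h$ with $(\vec 0;\vec\nu)\in\BWCSolP\Game{s_0,h}$, I first invoke Proposition~\ref{prop:BWC:strategy:all:MWEC} to replace~$h$ by a strategy $h^*$ that additionally visits every MWEC with positive probability. By Proposition~\ref{prop:finite-mem:WEC} and MWEC disjointness, for every BSCC~$B$ of the finite Markov chain~$\Game[h^*]$ the MDP-projection of~$B$ is contained in a unique MWEC~$U_B$. Writing $p_B$ for the absorption probability into~$B$, I set $x_{st}$ to be the long-run frequency of edge $(s,t)$ in $\Game[h^*]$ (summed over all BSCCs containing that edge), $y_s$ to be the probability that the play first enters its eventual BSCC at a memory-state projecting to~$s$, and $y_{st}$ to be the expected number of traversals of~$(s,t)$ strictly before that first-entry time. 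Equations \ref{eq:y:flow}--\ref{eq:y:random} are standard balance conditions for the absorption-time expected occupation flow, \ref{eq:y:sum} holds because $\Game[h^*]$ almost surely reaches some BSCC, and \ref{eq:x:flow}--\ref{eq:x:random} are stationary flow conditions inside BSCCs summed over~$B$. For \ref{eq:x:y}, both sides collapse to $\sum_{B\,:\,U_B=U} p_B$ (using MWEC disjointness so that BSCCs with $U_B\neq U$ contribute nothing to edges in $U\times U$). Finally, \ref{eq:MP:global} is the expectation achievement property of~$h^*$, and \ref{eq:MP:local} holds strictly because Proposition~\ref{prop:BWC:strategy:all:MWEC} guarantees at least one BSCC~$B$ with $U_B=U$ for every MWEC~$U$, and the surely-achieved mean payoff inside such a~$B$ is $>\vec 0$ by the worst-case property of~$h^*$.

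The main obstacle is the strict positivity in~\ref{eq:MP:local}, which has no counterpart in the pure-expectation reduction of \cite{BrazdilBrozekChatterjeeForejtKucera:TwoViews:LMCS:2014}. On the extraction side it needs the genericity provided by Proposition~\ref{prop:BWC:strategy:all:MWEC}; on the construction side it needs Lemma~\ref{lem:BWC:synthesis:WEC}, itself resting on the non-trivial approximation result Lemma~\ref{lem:strategy:EC:finite-memory:expectation} for finite-memory unichain strategies inside ECs. This is the genuinely new technical ingredient relative to the unidimensional analysis of \cite{Bruyere:Filiot:Randour:Raskin:BWC:2014}, where pure memoryless unichain strategies achieve optimal expectations exactly and no approximation is needed.
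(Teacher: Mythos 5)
Your proposal is correct and follows essentially the same route as the paper: the same two-phase strategy with an $N$-step truncation falling back to a pure worst-case strategy (via Lemma~\ref{lem:BWC:synthesis:WEC} inside MWECs) for the ($\Leftarrow$) direction, and the same combination of Proposition~\ref{prop:finite-mem:WEC} and Proposition~\ref{prop:BWC:strategy:all:MWEC} to extract a flow solution satisfying the MWEC-level constraints, including the strict positivity in \ref{eq:MP:local}, for the ($\Rightarrow$) direction. The only differences are presentational: you spell out the flow-variable definitions that the paper delegates to Proposition~4.4 of \cite{BrazdilBrozekChatterjeeForejtKucera:TwoViews:LMCS:2014}, and you make explicit the (correct, implicitly used) fact that maximal WECs are pairwise disjoint.
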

\noindent
The rest of this section is devoted to the proof of the lemma above.
Both directions are non-trivial.
For the right-to-left direction, we need to explain which kind of strategies can be extracted from a non-negative solution of $T$.
The following lemma shows that from a non-negative solution of $T$
we can extract a strategy for the expectation combining only ``local mean payoffs'' $> \vec 0$
and visiting infinitely often each MWEC with positive probability.

\begin{proposition}
	\label{prop:finite-memory:general:1:analysis}
	If $T$ has a non-negative solution,
	then there exists a finite-memory strategy $\hat h$ s.t. $\vec \nu \in \ExpSolP \Game {s_0, \hat h}$,
	and
	\begin{enumerate}
		\item
			\label{prop:finite-memory:general:1:analysis:yU}
			For every MWEC $U$, there is a probability $y^*_U > 0$
			s.t. the set of states visited infinitely often by $\hat h$ is a subset of $U$ with probability $y^*_U$.
		\item
			\label{prop:finite-memory:general:1:analysis:nuU}
			 Once $\hat h$ reaches the MWEC $U$, it achieves expected mean payoff $\vec \nu_U > \vec 0$.
		\item
			\label{prop:finite-memory:general:1:analysis:nu}
			$\sum_{\textrm{MWEC } U} y^*_U \cdot \vec \nu_U > \vec \nu$.
	\end{enumerate}
\end{proposition}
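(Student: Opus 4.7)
The plan is to interpret a non-negative solution $(y_s, y_{st}, x_{st})$ of $T$ as specifying a two-phase strategy. For each MWEC $U$ I set the switching probability $y^*_U := \sum_{s \in U} y_s$ and, whenever $y^*_U > 0$, the local expectation vector
\[
    \vec \nu_U := (y^*_U)^{-1} \sum_{(s,t) \in E \cap U \times U} x_{st} \cdot w(s,t).
\]
By \ref{eq:MP:local} every MWEC $U$ carries strictly positive $x$-flow on some internal edge, which via \ref{eq:x:y} forces $y^*_U > 0$, and the same equation yields $\vec \nu_U > \vec 0$ by construction.

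To realize Phase~II inside each $U$, I first observe that the normalized internal flow $\left(x_{st}/y^*_U\right)_{(s,t) \in E \cap U \times U}$ is an occupation measure in $\restrict \Game U$ satisfying \ref{eq:x:flow} and \ref{eq:x:random}, so by the characterization of \cite{BrazdilBrozekChatterjeeForejtKucera:TwoViews:LMCS:2014} there is a finite-memory strategy in $\restrict \Game U$ achieving expected mean payoff $\vec \nu_U$. Since $U$ is a WEC and $\vec \nu_U > \vec 0$, Lemma~\ref{lem:BWC:synthesis:WEC} upgrades this to a finite-memory strategy $h_U$ that surely stays in $U$ and realizes, in the BWC sense from every state of $U$, any vector $\vec \nu_U' < \vec \nu_U$. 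For Phase~I, I build a memoryless randomized strategy $f^{I}$ directly from the $y$-variables: at a controller state $s$ with positive total weight $y_s + \sum_t y_{st}$, continue transiently to $t$ with probability proportional to $y_{st}$, or commit to switching with probability proportional to $y_s$. Conditions \ref{eq:y:flow}, \ref{eq:y:random}, and \ref{eq:y:sum} guarantee that this randomization is well-defined, that a commit event occurs almost surely (since the total leak over states outside MWECs must be $0$ by mass conservation and \ref{eq:y:sum}), and that the commit happens at state $s$ with total probability $y_s$, hence inside MWEC $U$ with probability $y^*_U$. The combined finite-memory strategy $\hat h$ runs $f^{I}$ until the commit event at some $s \in U$ and then switches irrevocably to $h_U$.

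Properties (\ref{prop:finite-memory:general:1:analysis:yU}) and (\ref{prop:finite-memory:general:1:analysis:nuU}) are then immediate, since each $h_U$ surely stays in $U$ and achieves expectation arbitrarily close to $\vec \nu_U > \vec 0$. Property (\ref{prop:finite-memory:general:1:analysis:nu}) is the main obstacle: it requires the arithmetic identity
\[
    \sum_{U \text{ MWEC}} y^*_U \cdot \vec \nu_U \;=\; \sum_{(s,t) \in E} x_{st} \cdot w(s,t),
\]
so that the right-hand side, which exceeds $\vec \nu$ by \ref{eq:MP:global}, certifies (\ref{prop:finite-memory:general:1:analysis:nu}) once the expectation slack in each $h_U$ is chosen small enough. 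To prove the identity I will argue that every edge carrying positive $x$-mass lies inside some MWEC: \ref{eq:x:flow} and \ref{eq:x:random} say that $x$ is a stationary occupation measure, so its support is a union of end-components by standard Markov chain theory, and \ref{eq:x:y} combined with \ref{eq:y:sum} pins the total $x$-mass to exactly $1$ and confines it to the union of the MWECs. This bookkeeping step is the technical heart of the extraction, and once it is in place the proposition follows.
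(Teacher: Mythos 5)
Your proposal follows essentially the same route as the paper: the paper likewise extracts a two-phase strategy from the solution (delegating the Phase~I/Phase~II construction to Proposition~4.2 of \cite{BrazdilBrozekChatterjeeForejtKucera:TwoViews:LMCS:2014} rather than building it by hand), defines $y^*_U = \sum_{s\in U} y^*_s$ and $\vec\nu_U$ from the internal $x$-flow, and derives the three points from (C3), (B) and (C2) exactly as you do; your normalized definition of $\vec\nu_U$ is in fact the arithmetically consistent reading of the one the paper writes down. The only soft spot is the step you defer as the ``technical heart'': that (B) together with (A2) confines the support of $x$ to the union of the MWECs does not follow from those equations alone, since (C1)/(C1') are homogeneous and admit extra circulation on end-components outside every MWEC --- but this point is left equally implicit in the paper's own proof, which simply asserts that it suffices to look at MWECs.
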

\begin{proof}
	
	Let $\set{y^*_s}_{s \in S}$, $\set{y^*_{st}}_{(s, t) \in E}$, and $\set{x^*_{st}}_{(s, t) \in E}$
	be a non-negative solution to $T$.
	Proposition~4.2 of \cite{BrazdilBrozekChatterjeeForejtKucera:TwoViews:LMCS:2014}
	essentially shows how to construct from the solution above a finite-memory strategy $\hat h$ 
	s.t. $\vec \nu \in \ExpSolP \Game {s_0, \hat h}$.
	
	For a MWEC $U$, let
	\begin{align}
		y^*_U = \sum_{s \in U} y^*_s 
	\end{align}
	By Eq.~\ref{eq:MP:local}, for every MWEC $U$ there exist $s, t \in U$ s.t. $x^*_{st} > 0$.
	Together with Eq.~\ref{eq:x:y}, this implies that $y^*_U > 0$,
	which proves Point~\ref{prop:finite-memory:general:1:analysis:yU}.
		
	For a MWEC $U$, let
	\begin{align}
		\vec\nu_U = \sum_{(s, t) \in E\cap U \times U} \!\!\!\!\!\!\!\! x^*_{st} \cdot w(s, t)
	\end{align}
	and notice that $\vec\nu_U$ is the expected mean payoff of $\hat h$ once inside $U$.
	By Eq.~\ref{eq:MP:local}, $\vec\nu_U > \vec 0$, which proves Point~\ref{prop:finite-memory:general:1:analysis:nuU}.

	
	Eq.~\ref{eq:x:y} implies that $\hat h$ eventually stays forever inside a WEC almost surely.
	Consequently, $\sum_{\textrm{MWEC } U} y^*_U = 1$. 
	Since states visited infinitely often with probability zero do not contribute to the expected mean payoff,
	it suffices to look at MWECs.
	By the prefix independence of the mean payoff value function,
	and since MWEC $U$ is reached with probability $y^*_U$,
	strategy $\hat h$ achieves expected mean payoff $\sum_{\textrm{MWEC} U} y^*_U \cdot \vec \nu_U$.
	By Point 1), the latter quantity is $> \vec \nu$.
\end{proof}

\ignore{

Eq.~\ref{eq:x:y} in $T$ is the same as Eq.~(3) in $L$.

Equations~\ref{eq:x:flow} and \ref{eq:x:random}	are analogous to Eq.~(4) in $L$,
adapted from the action-based setting.
Eq.~\ref{eq:x:flow} expresses a flow balance condition,
and Eq.~\ref{eq:x:random} says that the flow respects the probabilities assigned to transitions in stochastic states.

Eq.~\ref{eq:MP:global} is obtained from Eq.~(5) in $L$ by using a strict inequality sign $>$.
This enforces that the expected mean payoff is strictly larger than $\vec \nu$.

Finally, Eq.~\ref{eq:MP:local} is specific to our setting, and it has no counterpart in $L$.
It expresses the fact that, if starting from an MWEC, the expected mean payoff is $> \vec 0$.
This effectively forces a global strategy for the expectation to use only strictly positive ``local'' mean payoffs,
and this is necessary to invoke our approximation result from Lemma~\ref{lem:BWC:synthesis:WEC}.
}

\noindent
We are now ready to prove Lemma~\ref{lem:finite-memory:general}.

\newcommand{\hexp}{h^{exp}}
\newcommand{\hcmb}{h^{cmb}}
\newcommand{\hwc}{h^{wc}}
\newcommand{\hU}[1]{h^{cmb}_{#1}}


\begin{proof}
	[Proof of Lemma~\ref{lem:finite-memory:general}]
	For the left-to-right direction, assume that $h$ is a finite-memory strategy %
	guaranteeing $(\vec 0; \vec \nu) \in \BWCSolP \Game {s, h}$.
	Proposition 4.4 of \cite{BrazdilBrozekChatterjeeForejtKucera:TwoViews:LMCS:2014}
	essentially shows that any strategy satisfying the expectation objective $> \vec \nu$
	induces a solution to $T$ satisfying Equations~\ref{eq:y:flow}--\ref{eq:MP:global},
	except that Eq.~\ref{eq:x:y} should be interpreted over MECs (instead of MWECs).
	(This follows from the fact that the set of states visited infinitely often by any strategy is an EC almost surely;
	cf. Proposition~\ref{prop:EC}.)
	However, since $\vec 0 \in \WCSolP \Game {s, h}$ and $h$ is finite-memory,
	we can apply Proposition~\ref{prop:finite-mem:WEC}
	and deduce that	$h$ visits infinitely often a \emph{winning} EC almost surely.
	Thus Eq.~\ref{eq:x:y} is satisfied even over MWECs.
	
	It remains to address Eq.~\ref{eq:MP:local}.
	By Proposition~\ref{prop:BWC:strategy:all:MWEC},
	there exists a strategy $h^*$ s.t., for every MWEC $U$,
	$h^*$ eventually stays forever in $U$ with a positive probability.
	This implies that, when constructing a solution to $T$ induced by $h^*$ (as above),
	for every MWEC $U$ and $s, t \in U$, $x^*_{st} > 0$.
	Moreover, since $h^*$ is winning for the worst-case,
	it achieves an expected mean payoff $> \vec 0$ in $U$,
	and thus Eq.~\ref{eq:MP:local} is satisfied.
	%


	
	For the right-to-left direction, assume that $T$ has a non-negative solution.
	Let $\hat h$ be the strategy in $\Game$ given by Proposition~\ref{prop:finite-memory:general:1:analysis}.
	For every MWEC $U$, let $y_U^*$ and $\vec\nu_U$ be as given in the statement of the proposition.
	While $\hat h$ alone is not sufficient to show $(\vec 0; \vec \nu) \in \BWCSolP \Game s$
	since it does not satisfy the worst-case objective in general,
	we show how to construct from it another finite-memory strategy $\hcmb$	ensuring the BWC objective.
	The latter strategy is obtained by combining together the following strategies:
	\begin{itemize}
		
		\item Let $\hwc$ be a finite-memory strategy in $\Game$
		ensuring the worst-case mean payoff $\vec 0 \in \WCSolP \Game {t, \hwc}$
		from every state $t$ in $\Game$.
		This is possible since $\Game$ is pruned.
		
		\item For each MWEC $U$, let $h_U$ be a finite-memory strategy s.t.
		$(\vec 0; \vec \nu_U) \in \BWCSolP \Game {t, h_U}$ for every state $t \in U$.
		
		This strategy can be obtained as follows.
		Let $\restrict \Game U$ be the game $\Game$ restricted to the EC $U$.
		By Point~\ref{prop:finite-memory:general:1:analysis:nuU} of Proposition~\ref{prop:finite-memory:general:1:analysis},
		$\vec \nu_U \in \ExpSolP {\restrict \Game U} {t_0, h_U}$ for some state $t_0 \in U$.
		Since $U$ is an EC, $\vec \nu_U \in \ExpSolP {\restrict \Game U} {t, h_U}$ for \emph{every} state $t \in U$.
		Since $\vec \nu_U > \vec 0$, we can apply Lemma~\ref{lem:BWC:synthesis:WEC} for every $t\in U$,
		and obtain a strategy $h_t$ s.t. $(\vec 0; \vec \nu_U) \in \BWCSolP {\restrict \Game U} {t, h_t}$.
		Let $h_U$ be the finite-memory strategy in $\restrict \Game U$
		that plays according to $h_t$ when starting from state $t$.
		Clearly, $(\vec 0; \vec \nu_U) \in \BWCSolP {\restrict \Game U} {t, h_U}$.
		
	\end{itemize}
	Consider the strategy $\hcmb_N$ parameterized by a natural number $N > 0$
	which is defined as follows:
	\begin{enumerate}
		\item[1)] Play according to $\hat h$ for $N$ steps.
		\item[2)] After $N$ steps:
		\begin{enumerate}
			\item[2a)] If we are inside the MWEC $U$, then switch to $h_U$.
			\item[2b)] Otherwise, play according to $\hwc$. 
		\end{enumerate}
	\end{enumerate}
	We argue that $\hcmb_N$ satisfies the beyond worst-case objective $(\vec 0; \vec\nu) \in \BWCSolP \Game {s_0, \hcmb_N}$
	for $N$ large enough.
	For every $N$, $\hcmb_N$ clearly satisfies the worst-case objective,
	since after $N$ steps it switches to a strategy that satisfies it by construction
	(by prefix-independence of the mean payoff objective).
	We now consider the expectation objective.
	By Point~\ref{prop:finite-memory:general:1:analysis:yU} of Proposition~\ref{prop:finite-memory:general:1:analysis},
	the set of states visited infinitely often by $\hat h$ is a subset of the MWEC $U$ with probability $y^*_U$.
	By taking $N$ large enough,
	we can guarantee being inside $U$ with probability arbitrarily close to $y^*_U$.
	%
	By construction, $h_U$ can be chosen to achieve expected mean payoff arbitrarily close to $\vec \nu_U$.
	Since $\hcmb_N$ switches to $h_U$ with probability arbitrarily close to $y^*_U$,
	$\hcmb_N$ achieves expected mean payoff arbitrarily close to $\sum_{\textrm{MWEC } U} y^*_U \cdot \vec\nu_U$.
	By Point~\ref{prop:finite-memory:general:1:analysis:nu} of Proposition~\ref{prop:finite-memory:general:1:analysis},
	the latter quantity is $> \vec\nu$.
	There exists $N^*$ large enough s.t. $\hcmb_{N^*}$ achieves expected mean payoff $> \vec\nu$.
	Take $\hcmb = \hcmb_{N^*}$.
	As required, $\vec\nu \in \ExpSolP \Game {s_0, \hcmb}$.
\end{proof}

\begin{runningexample}
	\ignore{
	Consider the expectation threshold $\vec \nu = (0, 9)$.
	Instantiating system $T$ to our running example from Fig.~\ref{fig:running_example}
	yields the linear program of Fig.~\ref{fig:running_example:T}.
	Notice that $U = \set {t}$ is a MWEC.
	\begin{figure*}
		\begin{align*}
			&\begin{array}{llllll}
				\ref{eq:y:flow} &\left\{
					\begin{array}{ll}
						1 								&= y_{st} + y_{su} + y_s \\
						y_{st} + y_{ut} + y_{tt}		&= y_{tt} + y_t \\
						y_{su} + y_{vu}^0 + y_{vu}^1	&= y_{ut} + y_{uv} + y_u \\
						y_{uv}							&= y_{vu}^0 + y_{vu}^1 + y_{r_0}
					\end{array}\right.
				&\ref{eq:y:random} &\left\{
					\begin{array}{ll}
						y_{vu}^0	&= \frac 1 2 \cdot (y_{uv} - y_v) \\
						y_{vu}^1	&= \frac 1 2 \cdot (y_{uv} - y_v)
					\end{array}\right.
				\\[6ex]
				\ref{eq:x:y} &\left\{
					\begin{array}{ll}
						y_t &= x_{tt}
					\end{array}\right.
				&\ref{eq:y:sum} &\left\{
					\begin{array}{ll}
						y_t &= 1
					\end{array}\right.
				\\[2ex]
				\ref{eq:x:flow} &\left\{
					\begin{array}{ll}
						0								&= x_{st} + x_{su} \\
						x_{st} + x_{ut} + x_{tt} 		&= x_{tt} \\
						x_{su} + x_{vu}^0 + x_{vu}^1	&= x_{uv} + x_{ut} \\
						x_{uv}							&= x_{vu}^0 + x_{vu}^1
					\end{array}\right.
				&\ref{eq:x:random} &\left\{
					\begin{array}{ll}
						x_{vu}^0	&= \frac 1 2 \cdot x_{uv} \\
						x_{vu}^1	&= \frac 1 2 \cdot x_{uv} 
					\end{array}\right.
			\end{array}
			\\[3ex]
			&\begin{array}{llll}
				&\ref{eq:MP:global} \left\{
				\begin{array}{ll}
					x_{st} \cdot 0 + x_{tt} \cdot 5 + x_{su} \cdot 0 + x_{uv} \cdot 0 + x_{vu}^0 \cdot 30 + x_{vu}^1 \cdot 30 &> 0 \\
					x_{st} \cdot 0 + x_{tt} \cdot 15 + x_{su} \cdot 0 + x_{uv} \cdot 0 + x_{vu}^0 \cdot 80 + x_{vu}^1 \cdot -60 &> 9
				\end{array}\right.
				&\ref{eq:MP:local} \left\{
				\begin{array}{ll}
					x_{tt} \cdot 5 		&> 0 \\
					x_{tt} \cdot 15 	&> 0
				\end{array}\right.
			\end{array}
		\end{align*}
		\caption{System $T$ for the \MDP from Fig.~\ref{fig:running_example}.}
		\label{fig:running_example:T}
	\end{figure*}
	(Note that in the definition of \MDP we do not allow two edges between the same pair of states.
	For simplicity, we have two such edges from $v$ to $u$,
	and we use separate variables $y_{vu}^0$ and $y_{vu}^1$ (and $x_{vu}^0$, $x_{vu}^1$) to represent them.)
	It is easy to see that letting $y^*_{st} = y^*_t = x^*_{tt} = 1$,
	and letting all the other variables be $0$ is a solution of the system.
	This corresponds to a strategy that goes to MWEC $U$ with probability $1$ by taking edge $(s, t)$,
	and then loops forever in $t$.
	expectation threshold $\vec \nu = (0, 9)$.
	}
	Since $U = \set {t}$ is a MWEC, while $V = \set{u, v}$ is not,
	finite memory strategies must go to $U$.
	Therefore, with finite memory we can ensure BWC threshold $((0, 0);(0, 9))$,
	but not $((0, 0);(9, 9))$ for example.
\end{runningexample}

\subsubsection{Complexity}

\label{sec:BWC:finite-memory:complexity}

We obtain the following complexity characterization for the threshold problem with finite-memory controllers.

\begin{theorem}
	The finite-memory multidimensional mean-payoff BWC threshold problem is \coNPc.
\end{theorem}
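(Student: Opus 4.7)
I would prove the theorem in two directions.

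For \coNP-hardness, I would reduce from the multidimensional mean-payoff worst-case threshold problem, which is \coNPh by Theorem~\ref{thm:multidimensional:worst-case:complexity}. Given a game $\Game$ and threshold $\vec\mu$, view $\Game$ as an \MDP (with no stochastic states) and build a BWC instance with worst-case threshold $\vec\mu$ and expectation threshold $\vec\nu$ chosen componentwise strictly below $-W$. Every play trivially meets such an expectation, so the BWC instance has a finite-memory solution if and only if the original worst-case game is won.

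For \coNP membership, I would use Lemma~\ref{lem:finite-memory:general}: a finite-memory BWC strategy exists iff the system $T$ of Fig.~\ref{fig:BWC:system} has a non-negative solution. A key structural observation is that MWECs are pairwise disjoint---if two MWECs $U_1, U_2$ intersected, then $U_1 \cup U_2$ would itself be an \EC (strong connectivity inherited from the overlap, stochastic closure inherited from each $U_i$) and winning (Controller plays the $U_i$-strategy depending on which $U_i$ currently contains the play, and never leaves it), contradicting maximality. Hence there are at most $|S|$ MWECs and $T$ has polynomial size.

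To show the complement of BWC is in \NP, I guess: the family $\mathcal U$ of pairwise disjoint candidate MWECs; for each $U \in \mathcal U$, a polynomial-size pure finite-memory Controller strategy witnessing that $U$ is winning, whose existence follows from Lemma~\ref{lem:strategy:worst-case} applied to $\restrict \Game U$; polynomial-size adversary-strategy witnesses certifying maximality of each $U \in \mathcal U$ (namely, that no strictly larger winning sub-\EC exists in the surrounding MEC), exploiting the polynomial finite-memory adversary witnesses for losing positions in multidimensional mean-payoff games; and a Farkas-type infeasibility witness of polynomial support for $T$, obtained via Motzkin's transposition theorem to handle the mixture of strict and non-strict inequalities. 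All witnesses have polynomial size and polynomial-time verification, yielding \coNP membership. The main obstacle is certifying maximality of the guessed MWECs, which I resolve by reducing it to polynomial-size adversary-strategy witnesses whose existence follows from the \coNP-completeness of multidimensional mean-payoff games (Theorem~\ref{thm:multidimensional:worst-case:complexity}).
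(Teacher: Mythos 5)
Your hardness reduction and the skeleton of your membership argument match the paper: the lower bound is exactly the paper's observation that the worst-case problem embeds as a subproblem (your choice of an expectation threshold below $-W$ makes that concrete and is correct), and the upper bound rests, as in the paper, on Lemma~\ref{lem:finite-memory:general} together with the fact that MWECs are pairwise disjoint and hence polynomially many, so that $T$ has polynomial size. Your disjointness argument (the union of two intersecting winning ECs is again a winning EC) is sound and is implicitly what the paper uses when it says there are only polynomially many MWECs.

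The gap is in your certificate that each guessed $U$ is a \emph{winning} EC. You invoke Lemma~\ref{lem:strategy:worst-case} for ``a polynomial-size pure finite-memory Controller strategy witnessing that $U$ is winning,'' but that lemma only asserts that \emph{some} finite-memory winning strategy exists; in multidimensional mean-payoff games the memory required is exponential in general, so no polynomial size bound is available. More fundamentally, a polynomial-size, polynomial-time-checkable witness that a state or an EC is winning would place the multidimensional worst-case threshold problem in \NP; since that problem is \coNPc (Theorem~\ref{thm:multidimensional:worst-case:complexity}), your scheme would imply that \NP and \coNP coincide. The same obstruction infects your maximality witnesses: pinning down the exact MWEC family inside a ``no''-certificate requires certifying that the retained ECs \emph{are} winning, not merely that discarded states are losing, and only the latter has short adversary-strategy witnesses. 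The paper sidesteps this entirely by keeping all nondeterminism on the adversary's side: pruning and the MWEC decomposition are done by the deterministic polynomial-time procedure of \cite{Bruyere:Filiot:Randour:Raskin:BWC:2014} relative to a mean-payoff-game oracle, whose negative answers are the (universally quantified) memoryless spoiling strategies, and once $T$ is pinned down its feasibility is decided outright by linear programming --- which also makes your Farkas/Motzkin infeasibility witness superfluous (though not wrong). As written, your \coNP membership argument does not go through.
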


\begin{proof}
	Pruning states where the worst-case objective cannot be satisfied requires solving multidimensional mean-payoff games,
	which can be done in \coNP by Theorem~\ref{thm:multidimensional:worst-case:complexity}.
	It has been already shown in \cite{Bruyere:Filiot:Randour:Raskin:BWC:2014} how the decomposition in MWEC can be performed in \P
	with an oracle for solving mean-payoff games.
	Thus, the MWEC decomposition can be performed in \coNP.
	System $T$ has size polynomial in $\Game$ (there are only polynomially many MWECs)
	and it can thus be produced in \coNP.
	By Lemma~\ref{lem:finite-memory:general},
	it suffices to solve system $T$,
	which can be done in polynomial time by linear programming.
	The lower bound follows directly from the fact that
	the multidimensional BWC threshold problem contains the worst-case as a subproblem;
	the latter is \coNPh as recalled in Theorem~\ref{thm:multidimensional:worst-case:complexity}.
\end{proof}

The complexity of the BWC problem is dominated by the worst-case subproblem.
We obtain an improved complexity by restricting the worst-case to be essentially unidimensional.
Formally, we say that a BWC threshold $(\vec\mu; \vec\nu)\in\R^{2d}$ has \emph{trivial worst-case component $i$},
with $1 \leq i \leq d$,
iff $\vec\mu[i] = -W$,
where $W$ is the maximal absolute value of any weight in $\Game$.
We say that $(\vec\mu; \vec\nu)$ \emph{is essentially worst-case unidimensional}
iff it has at most one non-trivial worst-case component.
We can ignore trivial components when solving a worst-case threshold problem.
Thus, the worst-case problem for essentially unidimensional thresholds reduces to a simple unidimensional worst-case problem.
As recalled in Theorem~\ref{thm:unidimensional:worst-case:complexity},
the latter can be solved in \NPcoNP,
thus yielding the following improved complexity for the BWC problem.
\begin{corollary}
	The finite-memory multidimensional mean-payoff BWC threshold problem
	w.r.t. \emph{essentially worst-case unidimensional thresholds} is in \NPcoNP.
\end{corollary}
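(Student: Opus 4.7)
The plan is to reuse the three-step pipeline from the proof of the preceding theorem --- prune the \MDP, compute the MWEC decomposition, and decide feasibility of the linear system $T$ of Fig.~\ref{fig:BWC:system} --- but to substitute the multidimensional worst-case oracle by its unidimensional counterpart. By assumption only one coordinate $i$ of $\vec\mu$ is non-trivial; on every other coordinate $\vec\mu[j]=-W$ is satisfied by any play, since weights are bounded in absolute value by $W$. Hence a state is worst-case winning iff Controller can enforce mean-payoff $>\vec\mu[i]$ in the unidimensional mean-payoff game obtained by projecting onto coordinate $i$, which by Theorem~\ref{thm:unidimensional:worst-case:complexity} lies in \NPcoNP. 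The key consequence is that both the pruning step and the check of winning for each EC appearing in the MWEC decomposition reduce to unidim mean-payoff queries, and therefore admit pure memoryless polynomial-size certificates for both ``winning'' and ``losing''.

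I would then collapse all non-determinism into a single guessing phase. The NP algorithm guesses, for every state $s$, a unidim mean-payoff certificate witnessing whether $s$ is worst-case winning; performs pruning and the MWEC decomposition in polynomial time using these certificates; builds $T$ (which has polynomial size once the MWECs are identified); and finally guesses a non-negative rational solution to $T$ of polynomial bit-size. All checks are deterministic and polynomial, and correctness of the final ``yes'' answer follows from Lemma~\ref{lem:finite-memory:general}. The coNP algorithm guesses the same worst-case certificates and, instead of an LP solution, a Farkas-style dual certificate witnessing infeasibility of $T$; strict-inequality LP infeasibility admits polynomial-size dual certificates by standard duality.

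The main subtlety is preventing the \NPcoNP oracle calls for pruning and MWEC identification from nesting inside the outer non-determinism, which would otherwise push the complexity to the second level of the polynomial hierarchy. This is avoided precisely because the unidim mean-payoff oracle admits \emph{memoryless} polynomial-size certificates on \emph{both} sides simultaneously, so the worst-case verdict of every state can be produced by a single flat guess made alongside the BWC certificate, after which all verification is purely polynomial-time deterministic. Consequently the overall problem lies in \NPcoNP.
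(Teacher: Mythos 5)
Your proposal is correct and follows essentially the same route as the paper: the paper also observes that the only \coNP-hard ingredients of the finite-memory algorithm (pruning and the MWEC decomposition) are worst-case queries, which for essentially unidimensional thresholds collapse to unidimensional mean-payoff games solvable in \NPcoNP, after which system $T$ is handled by polynomial-time linear programming. Your additional discussion of flattening the oracle calls via two-sided memoryless certificates (i.e., the closure of \NPcoNP under polynomial-time Turing reductions) is a detail the paper leaves implicit, and your Farkas dual for the \coNP side is not even needed since the LP can be solved deterministically once the certificates fix the MWEC structure.
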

\noindent
Since the \emph{unidimensional} BWC problem,
i.e., where all weights are unidimensional,
is in \NPcoNP (cf. Theorem~\ref{thm:unidimensional:BWC:complexity}),
this results shows that we can add a multidimensional expectation objective to a unidimensional worst-case obligation
without an extra price in complexity.
In particular, we can model complex situations like the task system presented in Sec.~\ref{sec:illustrating:example},
where the worst-case and expectation mean payoffs are along independent dimensions.

\subsection{Infinite-memory synthesis}

\label{sec:BWC:infinite-memory}

Already in the unidimensional case, 
infinite-memory strategies are more powerful than finite-memory ones (cf. \cite[Fig. 6]{BruyereFiliotRandourRaskin:BWC:arxiv}).
This is a consequence of the fact that finite-memory strategies for the BWC objective ultimately remain inside WECs almost surely
(cf. Proposition~\ref{prop:finite-mem:WEC}).
On the other hand, infinite-memory strategies can benefit from payoffs achievable inside arbitrary ECs.
In this section, we address the problem of deciding whether there exists a general strategy,
i.e., not necessarily finite-memory one, for the multidimensional BWC problem.
This was left as an open problem, already in the unidimensional case \cite{Bruyere:Filiot:Randour:Raskin:BWC:2014}.
As in the previous section, we first analyze ECs, and then general \MDPs.

\subsubsection{Inside an EC}

The lemma below is a direct generalization of Lemma~\ref{prop:finite-mem:WEC} to arbitrary ECs.
While for WECs we could construct finite-memory strategies,
we now construct infinite-memory strategies for arbitrary ECs.
\begin{lemma}
	\label{lem:BWC:synthesis:EC}
	Let $\Game$ be a pruned multidimensional mean-payoff \MDP,
	let $s_0$ be a state in an \emph{EC} $U$ of \MDP,
	and let $\vec\nu \geq \vec 0$ be an expectation vector $\vec \nu \in \ExpSolP {\restrict \Game U} {s_0}$
	which is achievable while remaining in $U$.
	There exists a strategy $f \in \All \Game$ (not necessarily remaining in $U$)
	s.t. $(\vec 0; \vec \nu) \in \BWCSolP \Game {s_0, f}$.
\end{lemma}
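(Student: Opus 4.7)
The plan is to construct an infinite-memory strategy that plays an expectation strategy inside $U$ by default, but monitors its own partial mean payoff and irrevocably switches to a worst-case strategy whenever performance drops too low. First, using Lemma~\ref{lem:strategy:EC:finite-memory:expectation} on the restricted \MDP $\restrict \Game U$, extract a finite-memory unichain strategy $\hexp$ that surely stays inside $U$ and witnesses $\vec\nu \in \ExpSolP{\restrict\Game U}{s_0, \hexp}$; write $\vec\nu^{*} := \E{s_0, \hexp}{\restrict\Game U}{\MP}$, so $\vec\nu^{*} > \vec\nu \geq \vec 0$. Because $\hexp$ is unichain, the induced Markov chain has a unique BSCC visited almost surely, hence the mean payoff of a play equals $\vec\nu^{*}$ almost surely and, in particular, $\MP_n \to \vec\nu^{*}$ pointwise a.s. Fix once and for all a vector $\vec c$ with $\vec 0 < \vec c < \vec\nu^{*}$ componentwise. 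Second, since $\Game$ is pruned, Lemma~\ref{lem:strategy:worst-case} supplies, for every state $t$, a finite-memory strategy $\hwc_t$ with $\vec 0 \in \WCSolP{\Game}{t, \hwc_t}$.

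For a parameter $N_0 \in \N$ define the infinite-memory strategy $f_{N_0}$ as follows. It keeps track of $(\TP_n, n)$ and plays $\hexp$ for the first $N_0$ steps; then at each step $n > N_0$ it tests whether $\MP_n \geq \vec c$ componentwise, continuing with $\hexp$ if yes, or, if no, permanently switching to $\hwc_t$ where $t$ is the current state. Let $\tau$ denote the switching step, with $\tau = \infty$ if it never switches.

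For the worst-case, fix any play $\pi$ consistent with $f_{N_0}$. If $\tau < \infty$, then by prefix-independence of the mean payoff $\MP(\pi)$ equals the mean payoff of the $\hwc_t$-tail, which is $> \vec 0$ by construction; if $\tau = \infty$, then $\MP_n(\pi) \geq \vec c$ for every $n > N_0$, so $\liminf_n \MP_n(\pi) \geq \vec c > \vec 0$. Either way $\MP(\pi) > \vec 0$, giving $\vec 0 \in \WCSolP{\Game}{s_0, f_{N_0}}$. For the expectation, conditional on $\{\tau = \infty\}$ the play is distributed exactly as under $\hexp$, so $\MP = \vec\nu^{*}$ almost surely there, while on $\{\tau < \infty\}$ we only know $\MP > \vec 0$ surely; hence $\E{s_0, f_{N_0}}{\Game}{\MP} \geq \Prob{s_0, \hexp}{\Game}{\tau = \infty} \cdot \vec\nu^{*}$. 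Now $\Prob{s_0, \hexp}{\Game}{\tau < \infty}$ equals the probability under $\hexp$ that $\MP_n < \vec c$ for some $n > N_0$; this family of events is decreasing in $N_0$ with intersection $\{\MP_n < \vec c \text{ infinitely often}\}$, which is a null set because $\MP_n \to \vec\nu^{*} > \vec c$ almost surely. Consequently $\E{s_0, f_{N_0}}{\Game}{\MP} \to \vec\nu^{*} > \vec\nu$ as $N_0 \to \infty$, and any sufficiently large $N_0$ yields $f := f_{N_0}$ with $(\vec 0; \vec\nu) \in \BWCSolP{\Game}{s_0, f}$.

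The main obstacle, compared with the winning-EC case of Lemma~\ref{lem:BWC:synthesis:WEC}, is that $U$ need not be winning and so $\hexp$ alone cannot secure the worst-case. The switching mechanism must simultaneously force $\liminf_n \MP_n > \vec 0$ along non-switching plays, which demands a strictly positive check threshold $\vec c$, and yet keep the switching probability small enough that the expectation is not eroded. Both requirements are met only because Lemma~\ref{lem:strategy:EC:finite-memory:expectation} provides a \emph{unichain} $\hexp$, which promotes the expected-value guarantee into the almost-sure convergence $\MP_n \to \vec\nu^{*}$; this is precisely what drives the switching probability to $0$ as $N_0$ grows. Without the unichain property the probability of $\MP_n$ dipping below $\vec c$ after $N_0$ steps need not vanish, and the construction would lose an irrecoverable constant in the expectation.
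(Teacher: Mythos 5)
Your proof is correct, and it shares the paper's overall architecture: run the unichain expectation strategy from Lemma~\ref{lem:strategy:EC:finite-memory:expectation} by default, monitor the accumulated payoff, switch irrevocably to a worst-case strategy when a test fails, and tune a parameter so that the switching probability vanishes. It differs, however, in the two places that carry the technical weight. First, your switching test compares the \emph{running mean payoff} $\MP_n$ against a fixed strictly positive threshold $\vec c$ with $\vec 0 < \vec c < \vec\nu^*$, whereas the paper compares the \emph{total payoff} against thresholds $\vec N_i = \vec\nu \cdot i \cdot K / 2$ growing linearly with the phase index (conditions SC1/SC2). Your choice makes the worst-case argument for non-switching plays immediate ($\liminf_n \MP_n \geq \vec c > \vec 0$) and is robust even when some component of $\vec\nu$ is zero, since $\vec c$ is calibrated against $\vec\nu^* > \vec\nu \geq \vec 0$ rather than against $\vec\nu$ itself. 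Second, and more substantially, the paper shows that the switching probability tends to $0$ via quantitative Hoeffding-style concentration bounds (the two Claims, which rest on Lemma~\ref{lem:strategy:EC:bound:A:K} and on a bound on total-payoff drops in Markov chains) summed as a geometric series over phases; you reach the same conclusion purely qualitatively, observing that the events $\SetOf{\exists n > N_0 \cdot \MP_n \not\geq \vec c}{}$ are nested in $N_0$ with null intersection because $\MP_n \to \vec\nu^*$ almost surely in the unichain chain, and then invoking continuity from above. This soft argument is available precisely because your threshold is fixed (the paper's $K$-dependent thresholds destroy the nesting), and it is noticeably shorter; what it gives up is any effective estimate on the parameter $N_0$, whereas the paper's concentration bounds yield explicit bounds on the phase length $K$. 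Both arguments hinge on the same crucial ingredient, the unichain property supplied by Lemma~\ref{lem:strategy:EC:finite-memory:expectation}, which upgrades the expectation guarantee to almost-sure convergence of the Ces\`aro averages.
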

\begin{remark}
	The statement of the lemma holds even with $f$ a \emph{pure} strategy,
	by applying Remark~\ref{rem:pure_strategy} when constructing the expectation strategy $f^{exp}$ below.
	However, randomized strategies suffice for our purposes.
\end{remark}
\ignore{
\begin{remark}
	The theorem above shows that infinite-memory strategies can approximate achievable expected mean payoffs
	while at the same time guaranteeing the worst-case objective.
	Moreover, it is possible to show that, in a \emph{WEC},
	infinite-memory strategies can guarantee the worst-case objective \emph{and} precisely achieve Pareto-optimal expectation.
\end{remark}
}
The rest of this section is devoted to the proof of Lemma~\ref{lem:BWC:synthesis:EC}.
We proceed by combining in a non-trivial way a strategy for the expectation with a strategy for the worst-case.
Let $f^{wc}$ be a worst-case strategy s.t. $\vec 0 \in \WCSolP \Game {s, f^{wc}}$
for every state $s$,
which exists since the $\Game$ is pruned.
Let $f^{exp}$ be a expectation strategy s.t. $\vec\nu \in \ExpSolP {\restrict \Game U} {s_0, f^{exp}}$.
By Lemma~\ref{lem:strategy:EC:finite-memory:expectation},
we can assume that $f^{exp}$ is finite-memory and unichain.
For technical reasons, it is convenient to assume that $f^{exp}$ is finite-memory,
even though we are going to construct a infinite-memory strategy.
Moreover, since we are in an EC,
we can further assume that $f^{exp}$ achieves expectation $> \vec\nu$ from \emph{every} state of the EC.
%
%
%

The idea is to play according to two different modes.
In the first mode, we play according to $f^{exp}$,
and in the second mode according to $f^{wc}$.
We start in the first mode, and possibly go to the second mode according to certain conditions.
This happens with a certain probability, which we call \emph{switching probability}.
Once in the second mode, we remain in the second mode.
In order to achieve an expectation arbitrarily close to that achieved by $f^{exp}$,
we need to be able to make the switching probability arbitrarily small.
At the same time, in order to ensure that the worst-case objective is satisfied,
we need to guarantee that, when no switch occurs, the mean payoff is surely $> \vec 0$.
(If a switch occurs, the worst-case is satisfied by the definition of $f^{wc}$.)
These two constraints are conflicting and make the construction of a combined strategy non-trivial.

The \emph{combined} strategy $f_K$ is parameterized by a natural number $K > 0$.
In order to decide whether to switch to the second mode or not,
we keep track of the total payoff since the beginning of the play as a vector in $\Z^d$.
This value is unbounded in general, and this is explains why the strategy uses infinite memory.
Let $\vec N_i$ be
\begin{align*}
	\vec N_i = \frac {\vec \nu \cdot i \cdot K} 2.
\end{align*}
Thus, during the first mode the expected total payoff at the end of phase $i$ is $> 2 \cdot \vec N_{i+1}$.
The first mode is split into \emph{phases}, each of length $K$.
During phase $i \geq 0$, we play according to $f^{exp}$ for at most $K$ steps.
There are two conditions that can trigger a switch to the second mode:
\begin{itemize}
	
	\item[\textbf{[Switching condition 1 (SC1)]}]
	If we are in phase $i \geq 1$ and the total payoff since the beginning of the play
	is not always $> \vec N_i$ during the current phase,
	then switch to $f^{wc}$ permanently.
	
	\item[\textbf{[Switching condition 2 (SC2)]}]
	If the total payoff since the beginning of the play is not $> 2 \cdot \vec N_{i+1}$
	at the end of the current phase,
	then switch to $f^{wc}$ permanently.
	
\end{itemize}
What it remains to do is to show that we can choose $K > 0$ in order to satisfy the BWC objective.
First, we show that, for every choice of the parameter $K$,
the combined strategy $f_K$ guarantees the worst-case objective.
\begin{proposition}
	\label{prop:infinite:memory:wc}
	For every $K \in \N$ and state $s_0$ in the EC $U$,
	$\vec 0 \in \WCSolP \Game {s_0, f_K}$.
\end{proposition}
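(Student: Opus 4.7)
The plan is to show that every play $\pi$ consistent with $f_K$ satisfies $\MP(\pi) > \vec 0$, by case analysis on whether the combined strategy ever switches to $f^{wc}$.

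First, consider the case where a switch (triggered by SC1 or SC2) occurs at some finite step $t$. By construction, from step $t$ onward the play is consistent with $f^{wc}$ starting from the state reached at time $t$. Because $\Game$ is pruned, $f^{wc}$ guarantees a worst-case mean payoff strictly greater than $\vec 0$ from every state, so the suffix of $\pi$ starting at step $t$ has lim-inf mean payoff $> \vec 0$. Since the lim-inf mean payoff is prefix-independent, $\MP(\pi) > \vec 0$.

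Second, consider the case where no switch ever occurs. Then SC1 never triggers, so for every phase $i \geq 1$ and every step $n$ with $iK < n \leq (i+1)K$, one has $\TP_n(\pi) > \vec N_i = \vec\nu \cdot i K / 2$ componentwise. Dividing by $n \leq (i+1)K$ gives
\[
  \MP_n(\pi) \;>\; \frac{\vec \nu \cdot i}{2(i+1)} \xrightarrow[i \to \infty]{} \frac{\vec \nu}{2},
\]
and thus $\liminf_{n} \MP_n(\pi) \geq \vec \nu / 2$.

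The delicate point is obtaining a \emph{strict} inequality $\MP(\pi) > \vec 0$ in this second case when $\vec \nu$ has zero components (recall the hypothesis only gives $\vec\nu \geq \vec 0$). I would resolve this by exploiting the strict inequality built into the definition of $\mathsf{ExpSol}^+$: since $\vec\nu \in \ExpSolP {\restrict \Game U} {s_0}$, the strategy $f^{exp}$ actually achieves some expectation $\vec\nu^{\star} > \vec\nu \geq \vec 0$, hence $\vec\nu^{\star} > \vec 0$ componentwise. Using $\vec\nu^{\star}$ in place of $\vec\nu$ when defining the switching thresholds $\vec N_i$ (a modification which does not harm the construction, since $f^{exp}$ still witnesses the bound in expectation), the same computation delivers $\liminf_n \MP_n(\pi) \geq \vec\nu^{\star}/2 > \vec 0$, closing the argument.
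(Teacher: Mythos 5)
Your proof takes the same route as the paper's: split on whether a switch ever occurs, invoke prefix-independence of $\MP$ together with the guarantee of $f^{wc}$ on the pruned \MDP in the switching case, and use the SC1 invariant $\TP_n > \vec N_i$ in the non-switching case. Your division by $(i+1)K$ is in fact more careful than the paper's, which loosely claims the play length during phase $i$ is ``at most $i\cdot K$''. You are also right that there is a strictness subtlety the paper elides: when $\vec\nu[j]=0$ the argument only yields $\liminf_n \MP_n(\pi)[j] \geq 0$, whereas membership in $\WCSolP \Game {s_0, f_K}$ requires $\MP(\pi) > \vec 0$ componentwise, and a play whose weights eventually vanish in coordinate $j$ would defeat the bound. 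One caveat on your repair: if you set the thresholds $\vec N_i$ using the \emph{exact} expectation $\vec\nu^\star$ of $f^{exp}$, then SC2 demands that the empirical mean payoff exceed its own expectation, and the deviation bound on $q_{i,K}$ in the proof of Lemma~\ref{lemma:infinite:memory:exp} degenerates (it needs a positive gap $\delta = \min_j(\vec\nu^\star[j]-\vec\nu[j])$). Choosing any $\vec\nu'$ with $\vec\nu < \vec\nu' < \vec\nu^\star$, e.g.\ $\vec\nu' = (\vec\nu+\vec\nu^\star)/2 > \vec 0$, gives strict positivity for the worst-case argument while keeping the expectation analysis intact.
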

\begin{proof}
	There are two cases to consider.
	If we ever switch to the second mode,
	then the run is eventually consistent with the worst-case strategy $f^{wc}$,
	which guarantees worst-case mean payoff $>\vec 0$
	(by prefix independence).
	Otherwise, assume that we never leave the first mode.
	During phase $i \geq 1$ the total payoff is always $> \vec N_i = \frac {\vec \nu \cdot i \cdot K} 2$,
	and the total length of the play is at most $i \cdot K$.
	The average mean payoff during phase $i$ is uniformly $> \frac {\vec \nu} 2$.
	The limit inferior of the average mean payoff is also $> \frac {\vec \nu} 2 \geq \vec 0$.
\end{proof}


\noindent
We conclude by showing that $K$ can be chosen s.t. the combined strategy $f_K$
achieves expected mean payoff $> \vec \nu$.
\begin{lemma}
	\label{lemma:infinite:memory:exp}
	There exists $K \in \N$ s.t. $\vec \nu \in \ExpSolP \Game {s_0, f_K}$.
\end{lemma}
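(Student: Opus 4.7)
The plan is to bound $\E {s_0, f_K} \Game \MP$ from below by splitting according to whether a switch to $f^{wc}$ ever occurs. Write $A_K$ for the event that neither SC1 nor SC2 is ever triggered, and let $\vec\nu^*$ be the expected mean payoff of $f^{exp}$, so $\vec\nu^* > \vec\nu$. On $\lnot A_K$ the play eventually follows $f^{wc}$, so by prefix-independence of $\MP$ and Proposition~\ref{prop:infinite:memory:wc} we have $\MP > \vec 0$ almost surely, hence $\E {s_0, f_K} \Game {\MP \cdot \mathds{1}_{\lnot A_K}} \geq \vec 0$. On $A_K$ the play is produced entirely by $f^{exp}$, which is finite-memory and unichain by Lemma~\ref{lem:strategy:EC:finite-memory:expectation}; hence by the strong law of large numbers for ergodic Markov chains, $\MP = \vec\nu^*$ almost surely on $A_K$. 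Linearity of expectation then yields
\[
    \E {s_0, f_K} \Game \MP \;\geq\; \Prob{s_0, f_K} \Game {A_K} \cdot \vec\nu^*,
\]
and the lemma reduces to showing $\Prob{s_0, f_K} \Game {A_K} \to 1$ as $K \to \infty$: since $\vec\nu^* > \vec\nu \geq \vec 0$ component-wise, any value strictly above $\max_i \vec\nu[i]/\vec\nu^*[i] \in [0,1)$ will suffice.

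I would bound $\Prob{s_0, f_K} \Game {\lnot A_K}$ by summing over phases the probabilities of triggering SC1 or SC2. In phase $i$, SC2 forces $\TP_{(i+1)K}$ in the chain $\Game[f^{exp}]$ to deviate from its expectation $\vec\nu^*(i+1)K$ by at least $(\vec\nu^*-\vec\nu)(i+1)K$, while SC1 forces a similar linear deviation at \emph{some} step of the phase. Because $f^{exp}$ is finite-memory, weights are bounded by $W$, and $\Game[f^{exp}]$ has a single BSCC, I would invoke classical concentration for ergodic Markov chains---e.g., an Azuma--Hoeffding bound on the martingale obtained by subtracting conditional expected increments from the weight sequence---to get $\Prob{}{}{|\TP_n - \vec\nu^* n| > \lambda n} \leq c_1 \exp(-c_2 n)$ for any fixed $\lambda > 0$ and constants depending only on $\lambda$, $W$, and the mixing time. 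A union bound over the $K$ steps of phase $i$, and then over all $i \geq 0$, should give $\Prob{s_0, f_K} \Game {\lnot A_K} = O(K \exp(-c_2 K))$, which tends to $0$ as $K \to \infty$.

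The main obstacle is precisely this concentration step: I need \emph{uniform} control of $\TP_n$ across an entire phase of length $K$, not merely at its endpoint, and with an exponentially small tail in $iK$ so that summing over all phases still converges. The strict gap $\vec\nu^* > \vec\nu$ supplied by the hypothesis is exactly the linear slack needed to turn typical $O(\sqrt n)$ fluctuations into exponentially rare events; mild additional bookkeeping is required to absorb the almost-surely finite transient of $\Game[f^{exp}]$ before it enters its BSCC into the constants of the bound. Once this estimate is in hand, choosing any $K$ large enough that $\Prob{s_0, f_K}\Game{A_K} > \max_i \vec\nu[i]/\vec\nu^*[i]$ yields $\E {s_0, f_K} \Game \MP > \vec\nu$ and completes the proof with that $K$.
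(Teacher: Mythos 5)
Your proposal is correct and follows essentially the same route as the paper's own proof: condition on whether a switch to $f^{wc}$ ever occurs, use prefix-independence and the worst-case guarantee so that the switching branch contributes $\geq \vec 0$ to the expectation, and reduce the lemma to showing that the total switching probability vanishes as $K \to \infty$ by summing exponential tail bounds on the per-phase probabilities of triggering SC1 and SC2 (the paper's two Claims). The only divergence is in how SC1 is controlled: the paper invokes a uniform-in-time bound on total-payoff drops (Lemma~\ref{lem:Etessami:approx}, imported from Br\'azdil \emph{et al.}) applied to the unichain $\Game[f^{exp}]$, whereas you take a union bound over the $K$ steps of each phase combined with single-time Hoeffding concentration as in Lemma~\ref{lem:Hoeffding:multidimensional}; since the slack $\vec\nu^* > \vec\nu \geq \vec 0$ makes the required deviation grow linearly in the step index, the extra factor of $K$ from the union bound is absorbed by the exponential and both routes close.
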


\begin{proof}
	We show that we can choose a $K > 0$ large enough s.t. the switching probability is negligible,
	and thus the impact of switching to the worst-case strategy $f^{wc}$ on the expected mean payoff is also negligible.
	For now fix an arbitrary $K > 0$,
	and consider the Markov chain $\Game[f_K]$.
	Let $p_K$ be the probability to switch to the second mode due to SC1 in any phase $i \geq 1$,
	and let $q_K$ be the probability to switch to the second mode due to SC2 in any phase $i \geq 0$.
 	Thus, with probability at most $1-(1-p_K) \cdot (1-q_K)$ we switch to the second mode.
	By prefix independence of the mean payoff objective,
	the expected mean payoff achieved by $f_K$ satisfies:
	\begin{align*}
		\E {s_0, f_K} {\Game} \MP \geq 
			\quad 	&(1-(1-p_K) \cdot (1-q_K)) \cdot \E {s_0, f^{wc}} {\Game} \MP + \\
		    +	&(1-p_K) \cdot (1-q_K) \cdot \E {s_0, f^{exp}} {\Game} \MP
	\end{align*}
	%
	Since $\E {s, f^{exp}} {\restrict \Game U} \MP > \vec\nu$ by definition,
	it suffices to show that both probabilities $p_K$ and $q_K$ can be made arbitrarily small.
	We argue about them separately.
	
	Let $p_{i, K}$ be the probability of switching to the second mode due to SC1 during phase $i \geq 1$,
	i.e, the probability that the total payoff goes below $\vec N_i$ in any component:
	\begin{align*}
		p_{i, K} = \Prob {s_0} {\Game[f_K]}
			{\exists (K \cdot i \leq h < K \cdot (i+1)) \cdot \TP_h \not> \vec N_i}
	\end{align*}
	Then, $p_K = p_{1, K} + (1-p_{1, K}) \cdot p_{2, K} + (1-p_{1, K})\cdot(1-p_{2, K})\cdot p_{3, K} + \dots$,
	and thus $p_K \leq p_{1, K} + p_{2, K} + \dots$.
	We claim the following exponential upper bound on $p_{i, K}$.
	\begin{restatable}{claim}{claimUpperBoundA}
		There are rational constants $a$ and $b$ with $b < 1$
		s.t., for every $i \geq 1$ and for sufficiently large $K$,
		%
			$p_{i, K} \leq a \cdot b^{K \cdot i}$.
		%
		Note that $a$ and $b$ do not depend neither on $K$, nor on $i$.
	\end{restatable}
	\noindent
	By the claim, $p_K \leq a \cdot (b^{K} + b^{K \cdot 2} + \dots) \leq a \cdot b^K / (1 - b^K)$,
	and thus $\lim_K p_K = 0$ since $b^K < 1$.
	
	Let $q_{i, K}$ be the probability of switching to the second mode due to SC2 at the end of phase $i \geq 0$.
	Thus, $q_{i, K}$ is the probability that, at the end of phase $i$,
	the total payoff is less than $2 \cdot \vec N_{i+1} = \vec\nu \cdot (i+1) \cdot K$ in any component:
	\begin{align*}
		q_{i, K} = \Prob {s_0} {\Game[f_K]} {\TP_{K \cdot (i+1)} \not> 2 \cdot \vec N_{i+1}}
	\end{align*}
	We have $q_K = q_{0, K} + (1-q_{0, K}) \cdot q_{1, K} + (1-q_{0, K})\cdot(1-q_{1, K})\cdot q_{2, K} + \dots$.
 	We show $\lim_K q_K = 0$ as in the last paragraph, by the following claim.
	\begin{restatable}{claim}{claimUpperBoundB}
		There exist rational constants $a$ and $b$ with $b < 1$
		s.t., for every $i \geq 0$ and sufficiently large $K$,
		%
			$q_{i, K} \leq a \cdot b^{K \cdot (i+1)}$.
		%
		Note that $a$ and $b$ do not depend neither on $K$, nor on $i$. 
	\end{restatable}
	\noindent
	Both claims are proved in the appendix.
\end{proof}

\ignore{
\begin{proof}[Proof of Lemma~\ref{lem:BWC:synthesis:EC}]
	Take as $f$ the strategy $f_K$ constructed above,
	where $K$ is as given by Lemma~\ref{lem:infinite:memory:exp}.
	Then, by definition $\nu \in \ExpSolP \Game {s_0, f}$,
	and by Proposition~\ref{prop:infinite:memory:wc},
	$0 \in \WCSolP \Game {s_0, f}$.
	Therefore, $(\vec 0; \vec \nu) \in \BWCSolP \Game {s_0, f}$.
\end{proof}
}

\subsubsection{The general case}

\begin{figure*}

	\begin{align}
		\tag*{(A1)}
		1_{s_0}(s) + \sum_{(r, s) \in E} y_{rs} &= \sum_{(s, t) \in E} y_{st} + y_s
			&& \forall s \in S
		\label{eq:y:flow'}
		\\
		\tag*{(A1')}
		y_{st} &= R(s)(t) \cdot ( \!\! \sum_{(r, s) \in E} \!\!\!\! y_{rs} - y_s)
			&& \forall (s, t) \in E \textrm{ with } s \in S^R
		\label{eq:y:random'}
		\\
		\tag*{(A2\textrm{-bis})}
		\sum_{ \textrm{ MEC } U }\sum_{s \in U} y_s &= 1
		\label{eq:y:sum'}
		\\
		\tag*{(B\textrm{-bis})}
		\sum_{s \in U} y_s &= \sum_{(r, s) \in E\cap U \times U} x_{rs}
		 	&& \forall \textrm{ MEC } U 
		\label{eq:x:y'}
		\\
		\tag*{(C1)}
		\sum_{(r, s) \in E} x_{rs} &= \sum_{(s, t) \in E} x_{st}
			&& \forall s \in S 
		\label{eq:x:flow'}
		\\
		\tag*{(C1')}
		x_{st} &= R(s)(t) \cdot \sum_{(r, s) \in E} x_{rs}
			&& \forall (s, t) \in E \textrm{ with } s \in S^R 
		\label{eq:x:random'}
		\\
		\tag*{(C2)}
		\sum_{(s, t) \in E} x_{st} \cdot w(s, t)[i] & > \vec \nu[i] 
			&& \forall (1 \leq i \leq n) 
		\label{eq:MP:global'}
		\\
		\tag*{(C3\textrm{-bis})}
		\sum_{(s, t) \in E\cap U \times U} \!\!\!\!\!\!\!\! x_{st} \cdot w(s, t)[i] & > 0
			&& \forall \textrm{ MEC } U, 1 \leq i \leq n
		\label{eq:MP:local'}
	\end{align}
	
	\caption{Linear system $T'$ for the BWC infinite-memory threshold problem.}
	\label{fig:BWC:infinite_memory:system}
	
\end{figure*}

As in the synthesis for finite-memory strategies (cf. Sec.~\ref{sec:finite-memory}),
we reduce the infinite-memory BWC problem to the solution of a system of linear inequalities.
The new system of equations $T'$ is shown in Fig.~\ref{fig:BWC:infinite_memory:system}.
It is obtained as a modification of system $T$ from the finite-memory case shown in Fig.~\ref{fig:BWC:system}:
Specifically, $T'$ is the same as $T$,
except that Equations~\ref{eq:y:sum}, \ref{eq:x:y}, and \ref{eq:MP:local} are interpreted w.r.t. MEC (instead of MWEC).
The correctness of the reduction is stated in the lemma below.
\begin{restatable}{lemma}{lemInfiniteMemoryGeneral}
	\label{lem:infinite-memory:general}
	Let $\Game$ be a pruned multidimensional mean-payoff \MDP,
	let $\vec \nu \geq \vec 0$, and let $s_0 \in S$.
	There exists a (possibly infinite-memory) strategy $h$ s.t. $(\vec 0; \vec \nu) \in \BWCSolP \Game {s_0, h}$,
	if, and only if, the system $T'$ has a non-negative solution.
\end{restatable}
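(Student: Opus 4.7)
The plan is to mirror the proof of Lemma~\ref{lem:finite-memory:general} closely, substituting MECs for MWECs and the infinite-memory Lemma~\ref{lem:BWC:synthesis:EC} for the finite-memory Lemma~\ref{lem:BWC:synthesis:WEC}. The essential structural change in the two systems is that Equations~\ref{eq:y:sum'}, \ref{eq:x:y'}, and \ref{eq:MP:local'} now quantify over MECs rather than MWECs, which is precisely what Proposition~\ref{prop:EC} affords in the general (possibly infinite-memory) setting, in contrast to Proposition~\ref{prop:finite-mem:WEC} used before.

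For the right-to-left direction, I would assume a non-negative solution of $T'$ and first re-prove Proposition~\ref{prop:finite-memory:general:1:analysis} over MECs (the argument is identical, using Proposition~\ref{prop:EC} in place of Proposition~\ref{prop:finite-mem:WEC}) to extract a finite-memory $\hat h$ with $\vec\nu \in \ExpSolP \Game {s_0, \hat h}$, together with, for every MEC $U$, a positive switching probability $y^*_U > 0$ and a strictly positive local expectation $\vec\nu_U > \vec 0$ achievable once inside $U$. For every MEC $U$ and state $t \in U$, Lemma~\ref{lem:BWC:synthesis:EC} (whose hypothesis $\vec\nu_U \geq \vec 0$ is satisfied) yields a possibly infinite-memory strategy $h_{U,t}$ with $(\vec 0; \vec\nu_U) \in \BWCSolP \Game {t, h_{U,t}}$. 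I then compose the global strategy $\hcmb_N$ exactly as in the finite-memory proof: play $\hat h$ for $N$ steps, then switch to $h_{U,t}$ if the current state $t$ belongs to a MEC $U$, or to $\hwc$ otherwise. Prefix-independence of the mean payoff gives the worst-case component for every $N$, and for $N$ large the expected mean payoff gets arbitrarily close to $\sum_{\textrm{MEC } U} y^*_U \cdot \vec\nu_U > \vec\nu$.

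For the left-to-right direction, given an arbitrary BWC-winning $h$, Proposition~\ref{prop:EC} ensures that the set of states visited infinitely often is almost surely an EC, hence a subset of some MEC. The standard extraction of~\cite{BrazdilBrozekChatterjeeForejtKucera:TwoViews:LMCS:2014} then produces a non-negative solution of Equations~\ref{eq:y:flow'}--\ref{eq:MP:global'}, with the link Equation~\ref{eq:x:y'} naturally interpreted over MECs (we no longer need WECs, precisely because the finite-memory assumption has been dropped). The remaining task is Equation~\ref{eq:MP:local'}, for which I will prove an infinite-memory analog of Proposition~\ref{prop:BWC:strategy:all:MWEC}: for each MEC $U$, build a visitor strategy that reaches $U$ and, while inside, uses the BWC combined construction of Lemma~\ref{lem:BWC:synthesis:EC} (in particular, it remains in $U$ with positive probability while still winning worst-case everywhere). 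Mixing these visitors with $h$ at small probability $p > 0$ yields an $h^*$ that still achieves BWC $(\vec 0; \vec\nu)$ and visits every MEC infinitely often with positive probability. Since $h^*$ is worst-case winning, every BSCC of $\Game[h^*]$ has mean payoff $> \vec 0$ almost surely, so the frequency-weighted payoff inside each MEC carrying positive $x^*$-mass is $> \vec 0$, yielding Equation~\ref{eq:MP:local'}.

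The main obstacle is this infinite-memory analog of Proposition~\ref{prop:BWC:strategy:all:MWEC}. In the finite-memory case, a visitor simply reached the target WEC and then played a worst-case strategy confined to it; for an arbitrary MEC $U$ no such confined worst-case strategy need exist, and one must appeal instead to Lemma~\ref{lem:BWC:synthesis:EC}. Its application, however, requires that every MEC $U$ admits some achievable expectation $\vec\nu_U \geq \vec 0$ in $\restrict \Game U$, a property not guaranteed by the pruning hypothesis alone. I would handle this either by a preliminary strengthening of the pruning step that removes MECs failing this property (they cannot host a BSCC of any BWC-winning strategy and hence are irrelevant), or equivalently by restricting the universal quantifier in Equation~\ref{eq:MP:local'} to MECs that do admit such $\vec\nu_U$, and verifying that the two formulations are equivalent for BWC-solvability.
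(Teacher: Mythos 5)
Your proof follows the paper's own argument for this lemma: both directions are obtained from the finite-memory proof by substituting MECs for MWECs, Proposition~\ref{prop:EC} for Proposition~\ref{prop:finite-mem:WEC}, and Lemma~\ref{lem:BWC:synthesis:EC} for Lemma~\ref{lem:BWC:synthesis:WEC}, with an ``all-MEC visiting'' strategy (the paper's Proposition~\ref{prop:BWC:strategy:all:MEC}) supplying Eq.~\ref{eq:MP:local'}. The obstacle you flag at the end is genuine and is not discharged in the paper, which states the MEC analogue of Proposition~\ref{prop:BWC:strategy:all:MWEC} without proof even though the finite-memory construction does not transfer (a non-winning MEC need not admit a confined worst-case strategy, and a MEC in a pruned MDP need not admit any positive expectation achievable while staying inside it, in which case Eq.~\ref{eq:MP:local'} as written is infeasible even when a BWC strategy exists). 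Your proposed repair --- restricting the switching constraints and the local-positivity constraint to those MECs that admit an achievable expectation $> \vec 0$ inside them, and observing that only such MECs can host the trapping BSCCs of a worst-case-winning strategy --- is the right way to complete the intended argument.
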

\begin{proof}[Proof sketch]
	The proof is analogous to the proof of Lemma~\ref{lem:finite-memory:general}.
	The crucial difference is that, by the modifications performed to obtain $T'$ from $T$,
	we obtain strategies which almost surely stay forever inside ECs, instead of WECs.
	Since we are allowed infinite-memory, we can approximate the BWC objective inside ECs
	by replacing Lemma~\ref{prop:finite-mem:WEC} with Lemma~\ref{lem:BWC:synthesis:EC}.
\end{proof}

\begin{runningexample}
	An infinite-memory strategy can benefit both from the expectation $(5, 15)$ in $U$
	and from $(15, 5)$ in $V$ (which is not a WEC).
	By going to either EC with equal probability and playing according to a local BWC strategy,
	an infinite-memory strategy can ensure, for every $\varepsilon > 0$,
	the BWC threshold $((0, 0);(10-\varepsilon, 10-\varepsilon))$.
\end{runningexample}

\subsubsection{Complexity}

We obtain the following complexity result for the threshold problem for arbitrary controllers.
\begin{theorem}
	The multidimensional mean-payoff BWC threshold problem is \coNPc.
\end{theorem}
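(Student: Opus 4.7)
The plan is to mirror the complexity argument given for the finite-memory variant in Section~\ref{sec:BWC:finite-memory:complexity}, replacing system $T$ by $T'$ and MWECs by MECs, and invoking Lemma~\ref{lem:infinite-memory:general} in place of Lemma~\ref{lem:finite-memory:general}.

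For the upper bound (membership in \coNP), I proceed in four steps. First, I prune $\Game$ by removing all states from which the worst-case objective cannot be ensured; by Theorem~\ref{thm:multidimensional:worst-case:complexity}, this amounts to solving a multidimensional mean-payoff game and is a single \coNP~computation. Second, I compute the maximal end-components of the pruned MDP: in contrast to the finite-memory case, where the MWEC decomposition required an oracle for mean-payoff games (since being winning depends on the worst-case condition), the MEC decomposition is purely graph-theoretic and can be carried out in polynomial time by the standard attractor-based algorithm. Third, I instantiate the linear system $T'$ of Fig.~\ref{fig:BWC:infinite_memory:system}; it has size polynomial in $\Game$ because there are only polynomially many MECs and each equation is of polynomial length. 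Finally, I decide whether $T'$ admits a non-negative solution by linear programming in \PTIME. By Lemma~\ref{lem:infinite-memory:general}, this answers the BWC threshold problem exactly, so the overall procedure sits in \coNP, dominated by the pruning step.

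For the lower bound (\coNP-hardness), I observe that the multidimensional BWC threshold problem strictly subsumes the plain multidimensional worst-case threshold problem: by choosing the expectation threshold $\vec\nu = (-W, \ldots, -W)$, which is trivially entailed by any strategy ensuring the worst-case objective, an arbitrary multidimensional worst-case instance reduces to a BWC instance. The latter is \coNPh by Theorem~\ref{thm:multidimensional:worst-case:complexity}.

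The main point to be careful about is not any single step --- each is routine given the machinery already developed --- but to verify that correctness genuinely rests only on Lemma~\ref{lem:infinite-memory:general} and not on any property of \emph{winning} sub-components. This is precisely the content of Lemma~\ref{lem:BWC:synthesis:EC}: with infinite memory, Controller can exploit \emph{any} EC, so the MEC-based formulation $T'$ is sufficient and no analogue of Proposition~\ref{prop:finite-mem:WEC} is needed. Note that the \coNP~upper bound cannot be improved without improving the currently known complexity of multidimensional mean-payoff games, since the pruning step is already \coNPc.
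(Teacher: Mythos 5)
Your proposal is correct and follows essentially the same route as the paper: prune the losing states in \coNP{} via the multidimensional worst-case game, invoke Lemma~\ref{lem:infinite-memory:general} to reduce to the polynomial-size system $T'$ (polynomial because there are only polynomially many MECs), solve it by linear programming, and obtain hardness because the worst-case threshold problem is a subproblem. The only nitpick is the lower-bound threshold choice: with strict inequalities, $\vec\nu = (-W,\dots,-W)$ is not quite trivially entailed when all weights equal $-W$, so one should take $\vec\nu$ strictly below $-W$ (or simply $\vec\nu = \vec\mu$, as the paper's normalization remark implicitly does).
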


\begin{proof}
	Pruning the game to remove states which are losing for the worst-case objective requires solving a multidimensional mean-payoff game, which is \coNPc by Theorem~\ref{thm:multidimensional:worst-case:complexity}.
	Then, by Lemma~\ref{lem:infinite-memory:general},
	it suffices to solve system $L'$.
	Notice that system $L'$ is of polynomial size since there are only polynomially many \emph{maximal} ECs.
\end{proof}

Again, it is the worst-case problem that dominates the complexity of the BWC problem.
By restricting to essentially worst-case unidimensional thresholds we obtain a better complexity,
as in Sec.~\ref{sec:BWC:finite-memory:complexity}.
\begin{corollary}
	The multidimensional mean-payoff BWC threshold problem
	w.r.t. \emph{essentially worst-case unidimensional thresholds} is in \NPcoNP.
\end{corollary}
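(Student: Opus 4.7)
The plan is to mirror the argument already given for the analogous finite-memory corollary in Sec.~\ref{sec:BWC:finite-memory:complexity}, and observe that every step outside of pruning is already in \PTIME. The only obstacle to obtaining a \coNP upper bound in the general case was the cost of pruning states that are losing for the multidimensional worst-case objective: that subproblem is itself \coNPc by Theorem~\ref{thm:multidimensional:worst-case:complexity}. Under the essentially worst-case unidimensional assumption, however, the worst-case objective collapses to a one-dimensional problem, which changes the complexity of this single bottleneck.

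More precisely, I would argue as follows. Let $(\vec\mu;\vec\nu)$ be an essentially worst-case unidimensional threshold. By definition, at most one coordinate $i^*$ of $\vec\mu$ is non-trivial, all other coordinates being equal to $-W$; since no play can ever have mean payoff less than $-W$ in any coordinate, the worst-case constraints on those trivial coordinates are vacuously satisfied. Hence the worst-case pruning step, i.e., the computation of the set of states $s$ with $\vec\mu \in \WCSolP \Game s$, reduces to solving a \emph{unidimensional} mean-payoff game on $\Graph$ projected on coordinate $i^*$ (treating $S^R$ as adversarial). By Theorem~\ref{thm:unidimensional:worst-case:complexity}, this can be done in \NPcoNP, so the pruning step runs in \NPcoNP.

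Once $\Game$ has been pruned, I would follow exactly the scheme in the proof of Theorem~\ref{thm:multidimensional:worst-case:complexity}'s BWC counterpart for infinite-memory strategies: compute the MEC decomposition of $\Game$ (in \PTIME), produce the linear system $T'$ of Fig.~\ref{fig:BWC:infinite_memory:system} (of polynomial size since there are polynomially many maximal ECs), and test feasibility by linear programming in \PTIME. By Lemma~\ref{lem:infinite-memory:general}, $T'$ has a non-negative solution iff there is an infinite-memory strategy witnessing the BWC threshold. All these post-pruning steps live in \PTIME, hence are absorbed by the \NPcoNP cost of pruning.

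The only delicate point, and the one I would flag as the main conceptual obstacle, is the composability with the oracle: the \NPcoNP pruning is invoked as a preprocessing step and its output (the pruned MDP) is then used by a \PTIME linear-programming procedure. One must therefore verify that the overall algorithm is of the form ``guess a certificate, then run a \PTIME verification that in turn may call a \coNP oracle for unidimensional mean-payoff games'', which fits in \NPcoNP by the standard closure of this class under \PTIME Turing reductions to \NPcoNP oracles. Once this is spelled out, the upper bound follows immediately, and matching the lower bound is inherited from the unidimensional BWC problem recalled in Theorem~\ref{thm:unidimensional:BWC:complexity}.
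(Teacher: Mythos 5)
Your proposal is correct and follows essentially the same route as the paper, which simply observes that the only \coNP-hard step is the worst-case pruning, that trivial components can be ignored so pruning reduces to a unidimensional mean-payoff game solvable in \NPcoNP (Theorem~\ref{thm:unidimensional:worst-case:complexity}), and that the remaining MEC decomposition, construction of $T'$, and LP feasibility test are all in \PTIME. Your explicit remark on the closure of \NPcoNP under polynomial-time Turing reductions makes precise a step the paper leaves implicit, but it is the same argument.
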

\noindent
This solves with optimal complexity the infinite-memory unidimensional BWC problem,
which was left open in \cite{Bruyere:Filiot:Randour:Raskin:BWC:2014}.

\section{Beyond almost-sure synthesis}

\label{sec:BAS}

We introduce a natural relaxation of the BWC problem which enjoys a better complexity.
Intuitively, we replace the worst-case objective in the BWC problem with a weaker almost sure objective.
While the BWC problem is \coNPc, we show that this relaxation can be solved in \PTIME,
even in the multidimensional setting.
A similar result has recently been obtained in \cite{ChatterjeeKomarkovaKretinsky:Unifying:2015}. 
Given an \MDP $\Game$,
a starting state $s_0$ therein,
and a Controller's strategy $f \in \All \Game$,
the set of \emph{almost sure achievable solutions for $f$},
denoted $\ASSolP \Game {s_0, f}$,
is the set of vectors $\vec\mu \in \R^d$ s.t.
Controller can almost surely guarantee mean payoff $>\vec\mu$ when playing according to $f$, i.e.,
$\ASSolP \Game {s, f} = \SetOf {\vec \mu \in \R^d}
	{\Prob {s, f} \Game {\MP > \vec \mu} = 1}$.
The set of \emph{beyond almost-sure achievable solutions for $f$},
denoted $\BASSolP \Game {s_0, f}$,
is the set of pairs of vectors $(\vec \mu; \vec \nu) \in \R^{2d}$ s.t.
Controller can almost surely guarantee mean payoff $>\vec \mu$
and achieve expected mean payoff $>\vec \nu$
when starting from $s_0$ and playing according to $f$, i.e.,
\begin{align*}
	\BASSolP \Game {s_0, f} = \SetOf {(\vec \mu; \vec \nu) \in \R^{2d}}
	{\begin{array}{c}
		\vec \mu \in \ASSolP \Game {s_0, f} \\
			\textrm { and } \\
		\vec \nu \in \ExpSolP \Game {s_0, f}
	\end{array}}
\end{align*}
The set of \emph{beyond almost-sure achievable solutions} is
$\BASSolP \Game {s_0} = \bigcup_{f \in \All \Game} \BASSolP \Game {s_0, f}$.
%
%
Given $(\vec \mu; \vec \nu) \in \R^{2d}$ and a state $s_0$,
the \emph{beyond almost-sure threshold problem} asks whether $(\vec \mu; \vec \nu) \in \BASSolP \Game {s_0}$.

\begin{remark}
	We assume w.l.o.g. that $\vec \mu = \vec 0$
	and $\vec \nu \geq \vec 0$.
	The first condition is ensured by subtracting $\vec\mu$ everywhere.
	The second condition follows from the observation that,
	if the mean payoff is $> \vec 0$ almost surely,
	then also the expectation is $> \vec 0$ surely.
\end{remark}

\ignore{
\begin{remark}
	We can assume w.l.o.g. that $\vec \mu \leq \vec \nu$.
	Indeed, let $\vec \nu'$ be the component-wise maximum of $\vec \mu$ and $\vec \nu$.
	Then, clearly $\vec \mu \leq \vec \nu'$, and
	\begin{align*}
		(\vec \mu, \vec \nu) \in \BASSolP \Game s
			\quad \textrm{ iff } \quad
		(\vec \mu, \vec \nu') \in \BASSolP \Game s
	\end{align*}
	since the mean payoff is $\geq \vec\mu$ almost surely,
	and therefore the expected value of the mean payoff is $\geq \vec\mu$.
\end{remark}
}

\ignore{

An EC $U$ is \emph{almost surely safe} if, and only if,
Controller can almost surely guarantee mean payoff $> \vec 0$
when constrained to remain in $U$.

\begin{lemma}
	\label{lem:finite-mem:GEC}
	Let $f$ be a \emph{finite-memory} strategy almost surely winning for mean payoff threshold $> \vec 0$.
	The set of states visited infinitely often under $f$ is almost surely an \emph{almost sure safe} EC.
\end{lemma}

\begin{proof}
	By Proposition~\ref{prop:EC}, the set of states visited infinitely often by $f$ is an EC $U$ almost surely.
	By contradiction, assume that $U$ is not almost sure safe with some positive probability.
	Since $f$ is finite-memory, $\Game[f]$ is finite.
	Since $U$ is visited with positive probability,
	there exists a bottom strongly connected component $B$ in $\Game[f]$ which projected to $\Game$ is a subset of $U$,
	and	which is reached with some positive probability $p > 0$.
	Since $U$ is not almost sure safe, plays in $B$ have mean payoff $\not> \vec 0$ with some positive probability $q > 0$.
	By prefix independence of the mean-payoff value function,
	plays in $\Game[f]$ have mean payoff $\not> \vec 0$ with probability at least $p \cdot q > 0$,
	contradicting that $f$ is almost surely winning.
\end{proof}

}

\noindent
We observe that, inside an EC, there is no trade-off between the almost sure and the expectation objective.
%
\begin{lemma}
	\label{lem:BAS:synthesis:EC}
	Let $\Game$ be a multidimensional mean-payoff \MDP,
	let $s_0$ be a state in an EC $U$ thereof,
	and let $\vec \nu \in \ExpSolP {\restrict \Game U} {s_0}$
	be an expectation achievable while remaining inside $U$.
	There exists a \emph{finite-memory} strategy $g \in \FiniteMemory \Game$
	s.t. $(\vec \nu; \vec \nu) \in \BASSolP {\restrict \Game U} {s_0, g}$
	which also remains inside $U$.
\end{lemma}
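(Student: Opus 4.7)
The plan is to exploit the fact that, by Lemma~\ref{lem:strategy:EC:finite-memory:expectation}, achievable expectations inside an EC can be realized by a finite-memory \emph{unichain} strategy, and that a unichain induced Markov chain concentrates its mean-payoff distribution on a single value almost surely. This observation converts an expectation guarantee into an almost-sure guarantee for free, with no trade-off between the two objectives.

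More concretely, since $\vec\nu \in \ExpSolP{\restrict\Game U}{s_0}$, by definition there is some strategy $f$ remaining in $U$ with $\E{s_0,f}{\restrict\Game U}{\MP} > \vec\nu$; pick a vector $\vec\nu^{*} > \vec\nu$ that is still achievable. By Lemma~\ref{lem:strategy:EC:finite-memory:expectation} applied to $\vec\nu^{*}$, there exists a finite-memory \emph{unichain} strategy $g \in \FiniteMemory\Game$ that remains in $U$ and satisfies $\vec\nu^{*} \in \ExpSolP{\restrict\Game U}{s_0, g}$, i.e.\ $\E{s_0, g}{\restrict\Game U}{\MP} > \vec\nu^{*} > \vec\nu$. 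This already yields $\vec\nu \in \ExpSolP{\restrict\Game U}{s_0, g}$.

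For the almost-sure part, consider the induced Markov chain $(\restrict\Game U)[g]$. Since $g$ is unichain, this chain has exactly one bottom strongly connected component $B$, which is reached almost surely from $(s_0, \alpha)$. Inside $B$, the standard ergodic theorem for finite Markov chains implies that the long-run average of the weight vector along almost every play equals the stationary expectation of the weights in $B$, which coincides with $\E{s_0, g}{\restrict\Game U}{\MP}$. Hence
\begin{align*}
\Prob{s_0, g}{\restrict\Game U}{\MP = \E{s_0, g}{\restrict\Game U}{\MP}} = 1.
\end{align*}
Combined with $\E{s_0, g}{\restrict\Game U}{\MP} > \vec\nu$ component-wise, this gives $\Prob{s_0, g}{\restrict\Game U}{\MP > \vec\nu} = 1$, i.e.\ $\vec\nu \in \ASSolP{\restrict\Game U}{s_0, g}$. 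Combining with the expectation guarantee above yields $(\vec\nu; \vec\nu) \in \BASSolP{\restrict\Game U}{s_0, g}$, as required.

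There is essentially no obstacle in this proof beyond invoking Lemma~\ref{lem:strategy:EC:finite-memory:expectation}; the only subtle point is to explicitly use the strict inequality built into $\ExpSolP$ and $\ASSolP$ (choosing $\vec\nu^{*} > \vec\nu$ so that the ergodic limit value, which equals the expectation exactly, still strictly dominates $\vec\nu$). The heavy lifting---namely, turning an arbitrary finite-memory strategy achieving the expectation into a unichain one at the cost of an arbitrarily small loss---has already been performed in Lemma~\ref{lem:strategy:EC:finite-memory:expectation}.
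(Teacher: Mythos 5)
Your proposal is correct and follows the same route as the paper: both invoke Lemma~\ref{lem:strategy:EC:finite-memory:expectation} to obtain a finite-memory \emph{unichain} strategy achieving the expectation, and then use the fact that in a unichain Markov chain the mean payoff equals its expected value almost surely, so the expectation guarantee $> \vec\nu$ transfers to an almost-sure guarantee. Your extra step of first picking an achievable $\vec\nu^{*} > \vec\nu$ is harmless but unnecessary, since the lemma already delivers a unichain strategy $g$ with $\vec\nu \in \ExpSolP{\restrict\Game U}{s_0,g}$ directly (the strict inequality is built into $\ExpSolP$).
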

\begin{proof}
	By Lemma~\ref{lem:strategy:EC:finite-memory:expectation},
	there exists a finite-memory strategy $g$ s.t.
	$\vec \nu \in \ExpSolP {\restrict \Game U} {s_0, g}$
	and $\Game[g]$ is unichain.
	Consequently, the mean payoff is $> \vec\nu$ almost surely.
\end{proof}

Thus, most of the effort goes in analyzing the general case.
As in the BWC problem, we reduce the BAS problem to the solution of a system of linear inequalities.
We assume that from state $s_0$ all ECs are reachable with positive probability.
It turns out that the same system of equations $T'$ used in the infinite-memory BWC threshold problem also solves the BAS problem.
We obtain a better complexity since we do not require the MDP to be pruned (which avoids solving an expensive mean-payoff game).

\begin{theorem}
	\label{thm:robust-expectation}
	The multidimensional mean-payoff BAS threshold problem is in \PTIME.
\end{theorem}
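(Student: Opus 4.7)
The plan is to mimic the reduction of Lemma~\ref{lem:infinite-memory:general}, but in a way that entirely avoids solving a mean-payoff game, so that every step fits in \PTIME. The key observation enabling this is Lemma~\ref{lem:BAS:synthesis:EC}: inside an EC where expected mean payoff $>\vec 0$ is achievable, a single finite-memory unichain strategy delivers both the almost-sure threshold and the expectation bound, with no need for the delicate combined construction of Lemma~\ref{lem:BWC:synthesis:EC}. Assume without loss of generality that $\vec\mu=\vec 0$ and $\vec\nu\geq\vec 0$.

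First, I would preprocess $\Game$ by computing its MEC decomposition and, for each MEC $U$, testing in polynomial time whether $\vec 0 \in \ExpSolP{\restrict \Game U}{s}$ for some (equivalently, every) $s\in U$, via the \PTIME multidimensional expected mean-payoff algorithm of \cite{BrazdilBrozekChatterjeeForejtKucera:TwoViews:LMCS:2014}. Call the successful MECs \emph{good}; by Lemma~\ref{lem:BAS:synthesis:EC} these are exactly the MECs that admit a finite-memory strategy enforcing mean payoff $>\vec 0$ almost surely together with a strictly positive expectation.

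Second, I would instantiate the linear system $T'$ of Fig.~\ref{fig:BWC:infinite_memory:system} with equations \ref{eq:y:sum'}, \ref{eq:x:y'}, and \ref{eq:MP:local'} restricted to range over \emph{good} MECs only (non-good MECs carry no $y$- or $x$-mass). Correctness is argued in the style of Lemma~\ref{lem:infinite-memory:general}. In the forward direction, given a BAS-winning strategy $f$, Proposition~\ref{prop:EC} makes the set of states visited infinitely often an EC almost surely, and the almost-sure requirement on $\MP$ forces this EC to be good; a variant of Proposition~\ref{prop:BWC:strategy:all:MWEC} then modifies $f$ to visit every good MEC with strictly positive probability, and the resulting transient and recurrent frequencies furnish a non-negative solution of the restricted $T'$. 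In the backward direction, from a non-negative solution I extract a finite-memory two-phase strategy: Phase~I realises the prescribed reach probabilities of good MECs, and Phase~II plays, inside each reached good MEC, the finite-memory unichain strategy of Lemma~\ref{lem:BAS:synthesis:EC}. This single strategy simultaneously meets the almost-sure threshold $>\vec 0$ and the expectation bound $>\vec\nu$, since the mean payoff inside each good MEC equals its expectation almost surely.

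The main obstacle, and the very reason the complexity drops from \coNPc (as in the BWC setting) to \PTIME, is eliminating every call to a mean-payoff-game oracle. This is achieved by replacing the worst-case notion of MWEC, which hinges on a \coNP-hard test, with the \PTIME-checkable notion of good MEC defined via expectation only. Since MEC decomposition, the per-MEC expectation test, construction of the polynomial-size restricted $T'$, and LP feasibility are all polynomial, the overall procedure runs in \PTIME.
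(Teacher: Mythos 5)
Your proof follows the same route as the paper: reduce to the linear system $T'$ of Fig.~\ref{fig:BWC:infinite_memory:system}, argue correctness as in Lemma~\ref{lem:infinite-memory:general} with Lemma~\ref{lem:BAS:synthesis:EC} taking the place of Lemma~\ref{lem:BWC:synthesis:EC} inside ECs, and observe that every step (MEC decomposition, building the polynomial-size system, LP feasibility) is polynomial because no mean-payoff game ever has to be solved. The one genuine difference is your preprocessing step: you restrict the MEC-indexed constraints \ref{eq:y:sum'}, \ref{eq:x:y'} and \ref{eq:MP:local'} to the \emph{good} MECs, i.e.\ those $U$ with $\vec 0 \in \ExpSolP{\restrict \Game U}{s}$, detected by the \PTIME expectation test of \cite{BrazdilBrozekChatterjeeForejtKucera:TwoViews:LMCS:2014}, whereas the paper uses $T'$ verbatim with \ref{eq:MP:local'} ranging over \emph{all} MECs. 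Your variant is the more careful one: since \ref{eq:MP:local'} is a strict inequality, the unrestricted system is infeasible as soon as some reachable MEC admits no strictly positive local expectation (e.g.\ an absorbing state with a negative self-loop), even though the BAS objective may still be achievable by a strategy that never gets trapped there. In the BWC setting this issue is masked because pruning, together with the argument of Proposition~\ref{prop:BWC:strategy:all:MWEC}, guarantees that every relevant component supports a positive local expectation; no such guarantee is available for BAS precisely because one does not prune. Your filtering step supplies the missing argument at no cost in complexity, and your forward direction (a BAS strategy can only be trapped, with positive probability, in a good MEC, and can be perturbed to visit every good MEC) and backward direction (Phase~I reaching, Phase~II playing the unichain strategies of Lemma~\ref{lem:BAS:synthesis:EC}) are both sound.
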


\begin{proof}
	The proof of correctness is the same as in Lemma~\ref{lem:infinite-memory:general},
	where Lemma~\ref{lem:BAS:synthesis:EC} replaces Lemma~\ref{lem:BWC:synthesis:EC} in the analysis of ECs.
	Crucially for complexity,
	we do not need to assume that the MDP is pruned.
	Therefore, system $T'$ can be built (and solved) in \PTIME.
	Since Lemma~\ref{lem:BAS:synthesis:EC} even yields finite-memory strategies inside an EC,
	the construction of Lemma~\ref{lem:infinite-memory:general} shows that finite-memory strategies suffice for the BAS threshold problem.
	(This relies on the strict BAS semantics.
	If non-strict inequalities are used, then the problem can still be solved in \PTIME
	but the construction above yields an infinite-memory strategy,
	and infinite-memory strategies are more powerful than finite-memory ones for the non-strict BAS problem;
	cf. also \cite{ChatterjeeKomarkovaKretinsky:Unifying:2015}.)
	%
\end{proof}

\begin{runningexample}
	The BAS problem is strictly weaker than the BWC problem.
	Consider the \MDP from Fig.~\ref{fig:running_example} without the edge $(u, t)$.
	This modification makes both states $u$ and $v$ losing for the worst-case,
	thus they are pruned away when solving the BWC problem (even with infinite memory).
	On the other hand, the mean payoff is almost surely $(5, 15)$ from $V$,
	and thus it satisfies the almost sure objective $>(0, 0)$.
	Therefore, for every $\varepsilon > 0$,
	we can achieve the BAS threshold $((0, 0);(10-\varepsilon, 10-\varepsilon))$
	by going to $t$ or $u$ with equal probability.
\end{runningexample}

\section{Conclusions}

\label{sec:conclusions}

In this paper, we studied the multidimensional generalization of the beyond worst-case problem
introduced by Bruy\`ere \emph{et al.}~\cite{Bruyere:Filiot:Randour:Raskin:BWC:2014}.
We have provided tight \coNP-completeness results for the this problem under both finite-memory and general strategies.
Since multidimensional mean-payoff games are already \coNPc,
our upper bound shows that we can add a multidimensional expectation optimization objective
on top of a worst-case requirement without a corresponding increase in complexity.
Notice that, while infinite-memory strategies were known to be more powerful than finite-memory ones already in the unidimensional setting~\cite{Bruyere:Filiot:Randour:Raskin:BWC:2014},
the corresponding synthesis problem was left open.
Our results thus complete the complexity picture for this problem.
Moreover, we showed that, when the worst-case objective is unidimensional,
the complexity reduces to \NPcoNP,
and this holds even for multidimensional expectations.
This generalizes with optimal complexity the \NPcoNP upper bound for the unidimensional beyond worst-case problem~\cite{Bruyere:Filiot:Randour:Raskin:BWC:2014}.
From a practical point of view,
our reductions to linear programming can be performed in pseudo-polynomial time
by using the results of \cite{DBLP:journals/fmsd/BrimCDGR11} for unidimensional mean-payoff games,
and \cite{DBLP:conf/concur/ChatterjeeV13} for \emph{fixed} number of dimensions.
%
Furthermore, we introduced the beyond almost-sure problem as a natural relaxation of the beyond worst-case problem,
by weakening the worst-case requirement to an almost-sure one.
This natural relaxation enjoys a polynomial time solution and finite memory strategies always suffice.
Moreover, our reduction to linear programming shows that the beyond almost-sure problem is amenable to be solved efficiently in practice,
and thus it has the strongest appeal for practical applications.

\bibliographystyle{IEEEtran}
\bibliography{bibliography}

\clearpage
\onecolumn
\appendices

\section{Preliminaries}

\label{app:preliminaries}

\subsection{Hoeffding-style bounds}

In this section, we prove an Hoeffding-style bound for multidimensional Markov chains which will be used repeatedly in later proofs.
Let $\Game$ be a Markov chain which is unichain.
Recall that a Markov chain is unichain when it consists of transient states and a unique bottom strongly connected component.
Thus, any run in $\Game$ will be trapped almost surely in the bottom component,
and the mean payoff will be almost surely equal to the expected mean payoff.
Moreover, the expected mean payoff is the same from every starting state of $\Game$.
Below, we present a bound on the probability that the mean payoff deviates from the expected mean payoff for sufficiently long runs.
Let $d \geq 1$ be the dimension of the weights in the Markov chain $\Game$,
and let $\vec\nu$ be the expected mean-payoff vector from any state in $\Game$.
\begin{restatable}{lemma}{hoeffding}
	\label{lem:Hoeffding:multidimensional}
	For any $\delta > 0$,
	there exists $K_0 = O(\frac 1 {\delta^2}) \in \N$ and constants $a, b > 0$ s.t.,
	for all $K \geq K_0$ and state $s$,
	\begin{align*}
		\Prob s \Game
		{ \exists (1 \leq j \leq d) \cdot | (\MP_K - \vec\nu)[j] | \geq \delta}
		\leq 
		\G K \delta := 2^d \cdot a \cdot e^{-b \cdot K \cdot \delta^2} \ .
	\end{align*}
	%
	%
	Moreover, $a$ and $b$ are polynomial in the parameters of the Markov chain,
	$a$ is exponential in $\delta$,
	and $K_0$ is polynomial in the size of the Markov chain and in the largest absolute weight $W$
	(and thus exponential in its encoding).	
\end{restatable}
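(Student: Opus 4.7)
The plan is to reduce the multidimensional bound to $d$ unidimensional ones by a union bound, and then to prove each unidimensional bound via the classical martingale decomposition based on the Poisson equation for Markov chains, combined with the Azuma--Hoeffding inequality. Concretely, for every state $s$,
\[
\Prob s \Game { \exists (1 \leq j \leq d) \cdot \abs{(\MP_K - \vec\nu)[j]} \geq \delta} \;\leq\; \sum_{j=1}^d \Prob s \Game { \abs{(\MP_K - \vec\nu)[j]} \geq \delta},
\]
so it suffices to prove, for each fixed $j$, a unidimensional bound of the form $\Prob s \Game {\abs{(\MP_K - \vec\nu)[j]} \geq \delta} \leq 2 a' \cdot e^{-b K \delta^2}$. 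The resulting total factor $2d$ is then absorbed into the $2^d$ of the statement (loose, but valid for every $d \geq 1$).

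Fix a coordinate $j$ and set $X_i := w(s_i, s_{i+1})[j] \in [-W, W]$, so that $K \cdot \MP_K[j] = \sum_{i = 0}^{K-1} X_i$. Since $\Game$ is unichain, the expected one-step reward $r(s) := \sum_{s'} R(s)(s') \cdot w(s, s')[j]$ and the common expected mean payoff $\vec\nu[j]$ are linked by the Poisson equation: there is a bounded function $h_j : S \to \R$ with $h_j(s) - \sum_{s'} R(s)(s') \cdot h_j(s') = r(s) - \vec\nu[j]$ for every $s \in S$, and $\|h_j\|_\infty \leq C$ for some $C$ polynomial in $|S|$, $W$ and $Q$. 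The polynomial bound on $C$ comes from the fundamental matrix of the unique BSCC, together with polynomial hitting-time bounds from transient states, using that every nonzero transition probability is at least $1/Q$. Setting
\[
M_k := \sum_{i = 0}^{k-1} \bigl(X_i - \vec\nu[j] + h_j(s_{i+1}) - h_j(s_i)\bigr)
\]
yields, as an immediate consequence of the Poisson identity, a martingale with respect to the natural filtration whose increments are bounded in absolute value by $2(W + C)$. Telescoping gives $K(\MP_K[j] - \vec\nu[j]) = M_K - (h_j(s_K) - h_j(s_0))$, where the residual on the right is deterministically bounded by $2C$.

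Applying Azuma--Hoeffding to $M_K$ at level $K\delta/2$ yields $\Prob s \Game {\abs{M_K} \geq K\delta/2} \leq 2 e^{-b K \delta^2}$ with $b := 1/(8(W + C)^2)$; as soon as $K \geq K_0 := 4 C / \delta$ the deterministic residual $2C$ is dominated by $K\delta/2$, so the event $\abs{\MP_K[j] - \vec\nu[j]} \geq \delta$ is contained in $\abs{M_K} \geq K\delta/2$. Combining with the coordinate-wise union bound gives the statement, with $K_0 = O(1/\delta)$ which is a fortiori $O(1/\delta^2)$ as claimed. The main obstacle is the polynomial boundedness of $\|h_j\|_\infty$: once $h_j$ is under control, the remainder is textbook Azuma--Hoeffding, but the fact that $\Game$ is merely unichain rather than irreducible forces one to control both the fundamental matrix of the BSCC and the transient hitting times to reach it; these are standard polynomial-size estimates for finite Markov chains, and they deliver precisely the polynomial dependence of $a$ and $b$ on the parameters of $\Game$ promised in the conclusion.
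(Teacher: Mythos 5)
Your proposal is correct in substance, and it shares the paper's outer skeleton -- reducing the $d$-dimensional event to $d$ coordinate-wise events and absorbing the resulting factor into $2^d$ (the paper phrases this via a lower bound $(1-\F{a,b} K \delta)^d$ on the probability of being good in every coordinate and the inequality $(1-x)^d \geq 1 - 2^d x$, which amounts to the same union bound you use). Where you genuinely diverge is in the unidimensional core: the paper simply \emph{cites} the one-dimensional Hoeffding-style bound (Lemma~\ref{lem:Hoeffding}, imported from Lemma~9 of \cite{BruyereFiliotRandourRaskin:BWC:arxiv}, which in turn rests on Tracol), whereas you prove it from scratch via the Poisson equation, the associated bias martingale, and Azuma--Hoeffding. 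That buys self-containedness and an explicit form for $b$ in terms of $W$ and the sup-norm of the bias function. Two points of care: first, your Azuma constant should be $b = 1/(32(W+C)^2)$ rather than $1/(8(W+C)^2)$ (the deviation level is $K\delta/2$ and the increments are bounded by $2(W+C)$), which is immaterial. Second, and more substantively, your claim that $\|h_j\|_\infty \leq C$ with $C$ polynomial in $|S|$, $W$ and $Q$ via ``polynomial hitting-time bounds'' is optimistic: with transition probabilities as small as $1/Q$, expected hitting times of the BSCC from transient states can be of order $|S| \cdot Q^{|S|}$, i.e.\ exponential in the number of states, so the bias function is only guaranteed to be polynomial in the \emph{numerical} parameters in a weaker sense. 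This affects only the quantitative addenda about $a$, $b$, $K_0$ -- which the paper states loosely and inherits wholesale from the cited lemma -- not the validity of the concentration bound itself.
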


\noindent
We prove the lemma above by reducing to the unidimensional case $d = 1$.
The latter case was already dealt with in \cite[Lemma 9]{BruyereFiliotRandourRaskin:BWC:arxiv},
which in turns relies on \citeappendix[Proposition 2]{Tracol:2009}.
\begin{lemma}[cf. {\cite[Lemma 9]{BruyereFiliotRandourRaskin:BWC:arxiv}}]
	\label{lem:Hoeffding}
	For any $\delta > 0$,
	there exist $K_0 = O(\frac 1 {\delta}) \in \N$ and constants $a, b > 0$ s.t.,
	for all $K \geq K_0$ and state $s$,
	\begin{align*}
		\Prob s \Game
		{ | \MP_K - \nu | \geq \delta}
		\leq \F {a, b} K \delta := a \cdot e^{-b \cdot K \cdot \delta^2} \ .
	\end{align*}
	Moreover, $a$ and $b$ are polynomial in the parameters of the Markov chain,
	$a$ is exponential in $\delta$.
	and, $K_0$ is polynomial in the size of the Markov chain and in the largest weight $W$
	(and thus exponential in its encoding).
\end{lemma}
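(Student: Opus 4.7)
The plan is to split the $K$-step trajectory into a transient phase (before hitting the unique bottom strongly connected component $B$ of $\Game$) and a stationary phase (inside $B$), and to control each piece separately. Since $\Game$ is unichain, $B$ is reached almost surely from any starting state $s$, and the target value $\nu$ coincides with the stationary expected weight of a transition inside $B$ under the unique stationary distribution $\pi$ on $B$.

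First I would handle the transient phase. Standard geometric-tail bounds for finite Markov chains give constants $c > 0$ and $\rho \in (0,1)$, polynomial in the parameters of $\Game$, such that $\Prob s \Game {\tau_B > t} \leq c \cdot \rho^t$, where $\tau_B$ is the hitting time of $B$. Choosing a burn-in length $K_1 = O(\log(1/\delta))$ makes the ``not entered $B$'' event already negligible relative to the target bound; since weights are bounded by $W$, the prefix of length at most $K_1$ contributes at most $W K_1 / K$ to $\MP_K$, which is below $\delta/2$ as soon as $K$ is large enough (this is where the dependence of $K_0$ on $W$ enters). Next, inside $B$ I would invoke a Chernoff--Hoeffding style concentration result for ergodic finite Markov chains (for instance Lezaud's spectral-gap inequality, or equivalently a regenerative decomposition of the trajectory into i.i.d.\ excursions between successive visits to a fixed reference state $s^* \in B$). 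Either route yields, for runs starting inside $B$, a tail of the form $a'' \cdot e^{-b'' K \delta^2}$ with constants $a'', b'' > 0$ depending only on the chain. A union bound over ``transient fails'' and ``BSCC deviates'' produces the claimed inequality, with $K_0 = O(1/\delta^2)$ since the ergodic-phase threshold dominates the burn-in threshold; the final constants $a$ and $b$ are obtained by absorbing $c, \rho, a'', b''$.

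The main obstacle is obtaining precisely the exponential rate $e^{-bK\delta^2}$ in the stationary phase, since the increments $w(X_i, X_{i+1})$ form a \emph{dependent} sequence and plain Azuma--Hoeffding does not apply directly. The regenerative route is conceptually simplest but requires pairing Hoeffding's inequality on the i.i.d.\ excursion sums with a separate concentration statement for the random number of excursions completed within $K$ steps; the spectral-gap route sidesteps this but forces one to estimate the second-largest eigenvalue of the transition matrix in terms of the chain parameters. Either way, one has to keep careful track of how $a$ and $b$ depend on the chain so as to match the quantitative claims (polynomial in the parameters, exponential in $\delta$ for $a$, and $K_0$ polynomial in the size of the chain and in $W$), since these details feed into the downstream use of the lemma and ultimately into the proof of the multidimensional version via the coordinate-wise union bound already sketched in the surrounding text.
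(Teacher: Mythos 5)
The paper does not actually prove this lemma: it is imported verbatim as Lemma~9 of the arXiv version of Bruy\`ere \emph{et al.}, which in turn rests on Proposition~2 of Tracol. So you are supplying a proof where the paper supplies a citation. Your overall plan (transient phase controlled by geometric hitting-time tails, stationary phase controlled by a Chernoff--Hoeffding bound for ergodic chains via regeneration or a spectral-gap inequality) is exactly the standard argument underlying those references, and you correctly identify the two real technical points: the increments are dependent, and in the regenerative route the number of completed excursions in $K$ steps is random, so Hoeffding on the excursion sums must be paired with concentration of the excursion count (and, strictly speaking, the excursion sums are only sub-exponential, not bounded, so a Bernstein-type inequality is what actually delivers the $e^{-bK\delta^2}$ rate for small $\delta$).

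There is, however, one concrete quantitative error in your handling of the transient phase. You propose a burn-in of length $K_1 = O(\log(1/\delta))$ and claim the event $\{\tau_B > K_1\}$ is ``negligible relative to the target bound.'' It is not: $\Prob s \Game {\tau_B > K_1} \leq c\,\rho^{K_1}$ is a quantity depending on $\delta$ but \emph{not on $K$}, whereas the target bound $a\,e^{-bK\delta^2}$ tends to $0$ as $K \to \infty$ for fixed $\delta$. A $K$-independent failure probability can never be absorbed into a bound that decays in $K$, so the union bound does not close. The fix is to let the burn-in scale with $K$: take $K_1 = \lceil \delta K / (2W) \rceil$, say. The prefix then still contributes at most $W K_1 / K \leq \delta/2 + W/K$ to $\MP_K$, while the failure probability becomes $c\,\rho^{\delta K/(2W)} = c\,e^{-(\delta \ln(1/\rho)/(2W))K}$, which \emph{is} of the form $a'e^{-b'K\delta^2}$ because $\delta$ is bounded above by $2W$ (larger deviations are impossible), so $\delta \geq \delta^2/(2W)$. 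With that correction the argument goes through and yields constants with the claimed dependence on the chain parameters; note only that your resulting threshold is $K_0 = O(1/\delta^2)$, which matches the multidimensional Lemma~\ref{lem:Hoeffding:multidimensional} rather than the $O(1/\delta)$ written in the statement of Lemma~\ref{lem:Hoeffding} (the paper itself is not consistent on this point).
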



\begin{proof}[Proof of Lemma~\ref{lem:Hoeffding:multidimensional}]
	In the following, fix an error $\delta > 0$ and a number of steps $K > 0$.
	For a component $1 \leq j \leq d$,
	we say that a run $\pi$ is \emph{$j$-bad} iff
	the $j$-th component of the mean payoff deviates from $\vec\nu[j]$ at least by $\delta$ after $K$ steps,
	i.e., if
	\begin{align*}
		|(\MP(\pi(K)) - \vec\nu)[j] | \geq \delta
	\end{align*}
	and $\pi$ is $j$-good otherwise.
	Moreover, we say that $\pi$ is \emph{bad} if it is $j$-bad for some $1 \leq j \leq d$,
	and we say that $\pi$ is \emph{good} it is $j$-good for every $1 \leq j \leq d$.
	In other words, $\pi$ is good if for every component $j$,
	$|(\MP(\pi(K)) - \vec\nu)[j] | < \delta$.

	For a fixed dimension $1 \leq j \leq d$, we are in the unidimensional case, dealt with in Lemma~\ref{lem:Hoeffding}.
	Let $j$ be a fixed dimension.
	By Lemma~\ref{lem:Hoeffding}, there exist constants $a_j, b_j > 0$, and $K_j = O(\frac 1 {\delta}) \in \N$ s.t.,
	for every $K \geq K_j$ and state $s$,
	$\F {a_j, b_j} K \delta$ is an upper bound on the probability that $\pi$ is $j$-bad. 
	We want to choose uniform $a, b > 0$ and $K_0$ s.t.
	\begin{align*}
		\F {a_j, b_j} K \delta \leq \F {a, b} K \delta < 1
	\end{align*}
	for every $K \geq K_0$.
	To this end, let
	\begin{align*}
		a		&:= \max_j a_j, \\
		b		&:= \min_j b_j, \textrm{ and } \\
		K_0		&:= \max \set{\max_j K_j, \ceiling{\frac {\ln a}{b \cdot \delta^2}} + 1} \ .
	\end{align*}
	Note that $K_0 = O(\frac 1 {\delta^2})$.
	Then, $1 - \F {a, b} K \delta$ is a lower bound on the probability that $\pi$ is $j$-good, for any fixed $j$ and $K \geq K_0$.
	Then, $(1 - \F {a, b} K \delta)^d$ is a lower bound on the probability that $\pi$ is good.
	We derive the following simple lower bound on the latter quantity:
	\begin{align*}
		(1 - \F {a, b} K \delta)^d & = \sum_{i=0}^d {d \choose i}(-\F {a, b} K \delta)^i = \\
		& = 1 + \sum_{i=1}^d {d \choose i}(-\F {a, b} K \delta)^i \\
		& \geq 1 - \sum_{i=1}^d {d \choose i}(\F {a, b} K \delta)^i \\
		& \geq 1 - \sum_{i=1}^d {d \choose i}\F {a, b} K \delta \\
		& \geq 1 - 2^d \cdot \F {a, b} K \delta
	\end{align*}
	Finally, $1 - (1 - \F {a, b} K \delta)^d$ is an upper bound on the probability that $\pi$ is bad.
	We define
	\begin{align*}
		\G K \delta := 2^d \cdot \F {a, b} K \delta = 2^d \cdot a \cdot e^{-b \cdot K \cdot \delta^2}
	\end{align*}
	By the inequality above, $\G K \delta \geq 1 - (1 - \F {a, b} K \delta)^d$,
	and thus $\G K \delta$ is an upper bound on the probability that a period is bad
	for every $K \geq K_0$.
\end{proof}

\subsection{Mean-payoff value function}

We make a couple of simple observations on the relationship between the mean-payoff value function and the total-payoff value function.

\begin{lemma}
	\label{lem:MP:TP:basic}
	Let $\pi$ be a play in a graph $\Graph$.
	\begin{enumerate}[1)]
		
		\item If $\MP(\pi) > 0$, then $\TP(\pi) = + \infty$.
		Therefore, if $\TP(\pi) < + \infty$ (with possibly $\TP(\pi) = -\infty$),
		then $\MP(\pi) \leq 0$.
		
		\item Similarly, if $\MP(\pi) < 0$, then $\TP(\pi) = - \infty$.
		
		\item There exists a play $\pi_0$ in a finite graph $\Graph_0$
		s.t. $\MP(\pi_0) = 0$, but $\TP(\pi_0) = + \infty$.
		
		\item There exists  a play $\pi_1$ in a finite graph $\Graph_1$
		s.t. $\MP(\pi_1) = 0$, but $\TP(\pi_1) = - \infty$.
	\end{enumerate}
\end{lemma}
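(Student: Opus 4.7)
The plan is to prove the four items directly from the definitions $\MP(\pi) = \liminf_n \frac 1 n \TP_n(\pi)$ and $\TP(\pi) = \liminf_n \TP_n(\pi)$. For item~1), assuming $\MP(\pi) > 0$, I would extract, by the definition of $\liminf$, some $\varepsilon > 0$ and $N \in \N$ such that $\frac 1 n \TP_n(\pi) > \varepsilon$ for every $n \geq N$; multiplying by $n$ gives $\TP_n(\pi) > n \varepsilon$ for $n \geq N$, so $\TP_n(\pi) \to +\infty$ and hence $\TP(\pi) = +\infty$. The contrapositive is the ``therefore'' statement.

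For item~2), the argument is analogous but needs a little more care because $\TP$ is also defined as a $\liminf$ (not an ordinary limit): from $\liminf_n \frac 1 n \TP_n(\pi) < 0$ I would extract $\varepsilon > 0$ such that $\frac 1 n \TP_n(\pi) < -\varepsilon$ holds for \emph{infinitely many} $n$. Along this subsequence $\TP_n(\pi) < -n\varepsilon \to -\infty$, so $\inf_{m \geq N} \TP_m(\pi) = -\infty$ for every $N$, giving $\TP(\pi) = -\infty$. The subtlety is that the statement ``$\MP(\pi) < 0$ implies $\frac 1 n \TP_n(\pi) < -\varepsilon$ infinitely often'' is the weaker conclusion available from $\liminf$, but it is exactly what is needed.

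For item~3), I would exhibit an explicit finite graph $\Graph_0$ with two states $s,t$ and edges $(s,s)$ of weight $+1$, $(t,t)$ of weight $-1$, $(s,t)$ and $(t,s)$ of weight $0$. Define $\pi_0$ to loop $k$ times in $s$, then $k-1$ times in $t$, then $k+1$ times in $s$, then $k$ times in $t$, and so on, cycling through phases $k = 1, 2, 3, \dots$. Each phase has length $2k-1$ and increases the total payoff by exactly $1$, so $\TP_n(\pi_0)$ at the end of phase $k$ equals $k$ and stays in the interval $[k, 2k-1]$ throughout phase $k$. Since the total length after $K$ phases is $K^2$, we have $\TP_n(\pi_0) \in [\sqrt{n}, 2\sqrt{n}]$ eventually, so $\TP_n \to +\infty$ (hence $\TP(\pi_0) = +\infty$) while $\frac{\TP_n}{n} \to 0$ from above (hence $\MP(\pi_0) = 0$).

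For item~4), a symmetric construction works: take the same graph $\Graph_1 = \Graph_0$ and instead let the play loop $k-1$ times in $s$ then $k$ times in $t$ in the $k$-th phase, so that each phase decreases the total payoff by $1$; the same length bookkeeping gives $\TP(\pi_1) = -\infty$ and $\MP(\pi_1) = 0$. No step is truly hard, but the one requiring the most attention is item~2), where one must resist the temptation to upgrade $\liminf$ to ``eventually'' and instead work carefully with an infinite subsequence.
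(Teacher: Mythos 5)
Your proof is correct and follows essentially the same route as the paper: items 1) and 2) by direct manipulation of the $\liminf$ definitions, and items 3) and 4) by explicit witness plays. The only differences are that your witnesses for 3) and 4) (balanced $\pm 1$ phases of quadratically growing length) differ from the paper's (sparse $\pm 1$ payoffs placed at powers of two), both being equally valid, and that your extra care in item 2) --- extracting only an infinite subsequence with $\TP_n(\pi) < -n\varepsilon$ rather than an eventual bound, which suffices because $\TP$ is itself a $\liminf$ --- is exactly the right amount of caution where the paper merely says ``analogous''.
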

\begin{proof}
	We first prove Point 1).
	Assume $\MP(\pi) = a$ for some $a > 0$.
	By the definition of $\liminf$,
	for every $\varepsilon > 0$ there exists $m_0$ s.t. for every $m \geq m_0$
	we have $\MP(\pi(m)) \geq a - \varepsilon$.
	To show $\TP(\pi) = + \infty$ we show that for every bound $b > 0$,
	there exists $n_0$ s.t. for every $n \geq n_0$,
	$\TP(\pi(n)) \geq b$.
	Let $b > 0$. If we take $\varepsilon := a / 2$ in the definition above,
	we have that there exists $m_0$ s.t. for every $m \geq m_0$
	$\MP(\pi(m)) \geq a/2 > 0$.
	We take $n_0 := \max \set {m_0, 2 b / a}$, and let $n \geq n_0$.
	Since $\TP(\pi(n)) = n \cdot \MP(\pi(n))$,
	we have $\TP(\pi(n)) \geq n a / 2 \geq b$,
	where the latter inequality follows from the definition of $n_0$.
	
	The proof of Point 2) is analogous to the proof of Point 1).
	
	For Point 3), consider a play $\pi_0$ inducing the following sequence of payoffs:
	\begin{align*}
		\underbrace{1}_{2^0}
		\underbrace{1 0}_{2^1}
		\underbrace{1 0 0 0}_{2^2}
		\underbrace{1 0 0 0 0 0 0 0}_{2^3} \cdots
	\end{align*}
	i.e., the $n$-th payoff is $1$ if $n$ is a power of $2$, and $0$ otherwise.
	Then, $\TP(\pi_0(n)) = k$ where $k$ is the largest exponent s.t. $2^k \leq n$,
	i.e., $k = \lfloor \lg n \rfloor$.
	Thus, $\TP(\pi_0) = + \infty$.
	However, $\MP(\pi_0(n)) = \frac {\lfloor \lg n \rfloor} n$
	goes to $0$ as $n$ goes to $ + \infty$.
	Thus, $\MP(\pi_0) = 0$.
	Point 4) is proved analogously by taking the sequence
	\begin{align*}
		\underbrace{(-1)}_{2^0}
		\underbrace{(-1) 0}_{2^1}
		\underbrace{(-1) 0 0 0}_{2^2}
		\underbrace{(-1) 0 0 0 0 0 0 0}_{2^3} \cdots
	\end{align*}%
\ignore{	
	Finally, for Point 4), let $\pi_1$ be a play inducing the following sequence of payoffs:
	\begin{align*}
		\underbrace{-1 +1}_2
		\underbrace{-1 -1 +1 +1}_4
		\underbrace{-1 -1 -1 +1 +1 +1}_6 \cdots
	\end{align*}
	i.e., the sequence is split into blocks,
	where the $k$-th block (starting at $k = 1$)
	has length $2k$ and starts at position $\sum_{i=1}^{k-1}2i = k \cdot (k-1)$.
	Thus, position $n$ (starting at $n = 0$) belongs to block $k = \left\lfloor \frac {1 + \sqrt{1+4n}} 2 \right\rfloor$.
	Clearly, $\TP(\pi) = - \infty$ since there are longer and longer stretches of payoffs $-1$ along the sequence,
	so the total payoff gets arbitrarily low.
	Since at the beginning of each block (at positions of the form $k \cdot(k-1)$)
	the total payoff is $0$, at position $n$ in block $k$ the total payoff is at least
	$\TP(\pi_1(n)) \geq -k = - \left\lfloor \frac {1 + \sqrt{1+4n}} 2 \right\rfloor$
	(with equality if we are in the middle of the block),
	and thus the mean payoff is at least
	$\MP(\pi_1(n)) \geq -\frac k n = - \frac 1 n \cdot \left\lfloor \frac {1 + \sqrt{1+4n}} 2 \right\rfloor$
	for every $n$.
	Since the latter expression goes to $0$ as $n$ goes to $ + \infty$,
	we deduce $\MP(\pi_1) \geq 0$.
	But the mean payoff is $0$ infinitely often (at the beginning of each block),
	thus $\MP(\pi_1) = 0$.
}%
\end{proof}

As an application of Lemma~\ref{lem:Hoeffding},
we show that if the total payoff is $-\infty$ almost surely,
then the mean payoff is strictly negative almost surely.
This contrasts with Point 4) in Lemma~\ref{lem:MP:TP:basic},
which showed that there are infinite runs with total payoff equal to $-\infty$,
but which have nonetheless zero mean payoff.
(Notice that the infinite play constructed in the proof of the latter lemma with this property was non-periodic.)
We use the lemma below later in the proof of Lemma~\ref{lem:Etessami:approx}.

\begin{lemma}
	\label{lem:MP:TP:basic2}
	Let $\Game$ be a Markov chain.
	For every state $s_0$,
	\begin{align*}
		\Prob {s_0} \Game { \TP = -\infty } = \Prob {s_0} \Game { \MP < 0 } \ .
	\end{align*}
	In particular, if the mean payoff is non-negative almost surely,
	then the total payoff is $> -\infty$ almost surely.
\end{lemma}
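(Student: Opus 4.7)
The easy inclusion $\set{\MP<0}\subseteq\set{\TP=-\infty}$ is pointwise by Point~2 of Lemma~\ref{lem:MP:TP:basic}, yielding $\Prob {s_0} \Game {\MP<0}\leq\Prob {s_0} \Game {\TP=-\infty}$. What remains is the reverse inequality, which I will recast as showing $\Prob {s_0} \Game {\TP=-\infty,\ \MP\geq 0} = 0$.

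The plan is to decompose the finite Markov chain $\Game$ into transient states and BSCCs and handle each BSCC separately. By Proposition~\ref{prop:EC} (specialised to a Markov chain, where all ECs are BSCCs), every play is almost surely eventually trapped in some BSCC $B$ and visits every state of $B$ infinitely often, so by prefix-independence of $\TP$ and $\MP$ it suffices to bound the conditional probability given trapping in $B$. Restricted to $B$ the chain is unichain, making Lemma~\ref{lem:Hoeffding} applicable. Writing $\nu_B$ for the expected mean payoff inside $B$, a Borel--Cantelli argument on $\Prob s \Game {|\MP_K-\nu_B|\geq\delta}\leq a\cdot e^{-bK\delta^2}$ (summed over $K$ along a sequence $\delta_k\to 0$ chosen to decrease slowly enough) gives $\MP_K\to\nu_B$ almost surely on runs trapped in $B$. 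Case analysis on the sign of $\nu_B$ then dispatches two of three subcases: if $\nu_B>0$, Point~1 of Lemma~\ref{lem:MP:TP:basic} gives $\TP=+\infty$ pointwise on such runs, so $\set{\TP=-\infty}$ is conditionally null; if $\nu_B<0$, then $\MP=\nu_B<0$ a.s., so $\set{\MP\geq 0}$ is conditionally null.

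The delicate subcase is $\nu_B=0$, and I expect this to be the main obstacle. Here $\MP=0$ a.s., yet the goal is to show $\TP>-\infty$ a.s. A direct substitution $\delta=M/K$ in Lemma~\ref{lem:Hoeffding} gives only $\Prob s \Game {\TP_K\leq -M}\leq a\cdot e^{-bM^2/K}$, which does \emph{not} decay as $K\to\infty$, so Borel--Cantelli applied to the events $\set{\TP_K\leq -M}$ fails to yield $\liminf_K \TP_K>-\infty$ directly. To close this gap I would try to invoke Lemma~\ref{lem:Hoeffding} over windows of bounded length and combine the resulting fixed-scale tail estimates through a regeneration/return-time decomposition of $B$: fix a state $s\in B$, consider the successive return times $T_1<T_2<\cdots$ to $s$, and study the i.i.d.\ excursion payoffs $Y_k=\TP_{T_{k+1}}-\TP_{T_k}$, which have mean zero when $\nu_B=0$. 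Using finiteness of the state space and bounded weights to interpolate between return times, a sufficiently sharp tail bound on the $Y_k$'s (derived from a block-level application of Lemma~\ref{lem:Hoeffding}) should keep $\liminf_n \TP_n$ finite almost surely. Carrying this refined argument through cleanly from Lemma~\ref{lem:Hoeffding} is, in my view, the real technical heart of the proof, and the reason Lemma~\ref{lem:Hoeffding} is singled out as the key tool rather than the easier sign cases handled above.
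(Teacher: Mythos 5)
Your overall architecture coincides with the paper's: both arguments reduce to BSCCs, observe that inside a BSCC the mean payoff equals its expectation $\nu_B$ almost surely, and dispatch the cases $\nu_B>0$ (Point~1 of Lemma~\ref{lem:MP:TP:basic} gives $\TP=+\infty$) and $\nu_B<0$ (Point~2 gives $\TP=-\infty$, and $\MP<0$ a.s.) exactly as you do. You are also right that $\nu_B=0$ is the only real content, and your observation that substituting $\delta=M/K$ into Lemma~\ref{lem:Hoeffding} yields a bound that does not decay in $K$ is a more careful accounting than the paper's, whose treatment of this case consists of the chain $\Prob{s_1}{\Game}{\TP_K\le -K}\le\Prob{s_1}{\Game}{|\MP_K|\ge 1}\le\F{a,b}{K}{1}\to 0$ followed immediately by the conclusion $\Prob{s_1}{\Game}{\TP=-\infty}=0$.

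The repair you propose for the $\nu_B=0$ case cannot, however, be carried out. In your regeneration decomposition the excursion payoffs $Y_k$ are i.i.d.\ with mean zero; if they are not almost surely equal to $0$, the partial sums $\sum_{k\le n}Y_k$ form a non-degenerate mean-zero random walk, and such a walk has $\liminf_n\sum_{k\le n}Y_k=-\infty$ almost surely (Chung--Fuchs recurrence). No tail bound on the $Y_k$, however sharp, can prevent this, so the target of your final step --- keeping $\liminf_n\TP_n$ finite almost surely --- is unattainable for non-degenerate mean-zero excursions. Concretely, a BSCC in which each return to a fixed state contributes $+1$ or $-1$ with probability $\frac 1 2$ satisfies $\MP=0$ almost surely yet $\TP=\liminf_n\TP_n=-\infty$ almost surely, so your own framework, pushed to its conclusion, produces a counterexample to the equality being proved rather than a proof of it. Note that the paper's one-line inference does not close this gap either: knowing that $\TP_K>-K$ for all large $K$ (which is what Borel--Cantelli extracts from the summable bound $\F{a,b}{K}{1}$) is compatible with $\TP_K\to-\infty$. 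The statement is only safe under the stronger hypothesis $\MP>0$ almost surely, which is what is actually available where the lemma is invoked (for the unichain $\Game[f^{exp}]$ with strictly positive expected mean payoff); if you want a sound write-up you should prove and use that weaker claim instead.
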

\begin{proof}
	Fix a state $s_0$.
	Let $p = \Prob {s_0} \Game { \MP < 0 }$,
	$q = \Prob {s_0} \Game { \TP = -\infty }$,
	and, for a BSCC $B$, let $p_B$ be the probability of reaching $B$.
	In a BSCC $B$, $\MP$ and $\TP$ take value equal to their respective expectations almost surely,
	and this value is the same from every state in the BSCC.
	Let this value be $\E B \Game \MP$ and $\E B \Game \TP$, respectively.
	We thus have the following decomposition:
	\begin{align*}
		p &= \sum_{\textrm{BSCC } B \textrm{ s.t. } \E B \Game \MP < 0 } p_B, \textrm{ and } \\
		q &= \sum_{\textrm{BSCC } B \textrm{ s.t. } \E B \Game \TP = -\infty } p_B \ .
	\end{align*}
	It suffices to show that
	$\E B \Game \MP < 0$ if, and only if, $\E B \Game \TP = -\infty$
	for all BSCC's $B$.
	
	If $\E B \Game \MP < 0$, then $\MP < 0$ almost surely since $B$ is a BSCC.
	By Point 2) of Lemma~\ref{lem:MP:TP:basic},
	$\MP < 0$ implies $\TP = -\infty$ surely,
	and thus $\TP = -\infty$ holds almost surely,
	and consequently $\E B \Game \TP = -\infty$.
	
	For the other direction, let $\E B \Game \MP \geq 0$.
	If $\E B \Game \MP > 0$, then by Point 1) of Lemma~\ref{lem:MP:TP:basic} 
	and reasoning as above, we obtain $\E B \Game \TP = +\infty$.
	It remains to prove the case $\E B \Game \MP = 0$.
	This makes use of the bound provided by Lemma~\ref{lem:Hoeffding},
	and it does not hold in a non-probabilistic setting
	(cf. the counter-example in Point 4) of Lemma~\ref{lem:MP:TP:basic}).
	Assume $\E B \Game \MP = 0$.
	We prove $\E B \Game \TP > -\infty$.
	For every $s_1 \in B$ and $K$, we have
	\begin{align*}
		\Prob {s_1} \Game {\TP_K \leq -K}
			&\leq \Prob {s_1} \Game {\MP_K \leq -1} \\
			&\leq \Prob {s_1} \Game {|\MP_K| \geq 1} \\
			&\leq \F {a, b} K 1
	\end{align*}
	where the last inequality follows from Lemma~\ref{lem:Hoeffding}
	applied with $\nu = \E B \Game \MP = 0$, $\delta = 1$,
	for some constants $a, b > 0$ and for $K$ sufficiently large.
	Since $\F {a, b} K 1 \to 0$ as $K \to \infty$,
	we have that $\Prob {s_1} \Game {\TP = -\infty} = 0$,
	which implies $\E B \Game \TP > -\infty$.	
\end{proof}

\subsection{Finite-memory synthesis in an EC}

\label{sec:exp:EC}

In this section, we show that achievable values can be approximated by randomized finite-memory strategies in ECs,
with the further property that the induced (finite) Markov chain is unichain,
i.e., it contains exactly one BSCC.

\ECfinitememoryexpectation*

We prove this result as follows.
First, in Sec.~\ref{sec:SCCs:decomposition} we characterize the set of of achievable vectors
as non-negative solutions to a linear programming problem
(in the spirit of \cite{BrazdilBrozekChatterjeeForejtKucera:TwoViews:LMCS:2014}).
This yields a natural decomposition of the EC into several SCCs.
For each such SCC, we construct in Sec.~\ref{sec:exp:local} a \emph{randomized memoryless} ``local strategy''
achieving a corresponding ``local expectation''.
No approximation error is introduced in this step.
Then, in Sec.~\ref{sec:exp:global} we combine those ``local strategies''
into a randomized finite-memory ``global strategy'' approximating the expectation.
This second step uses the fact that in an EC all states are inter-reachable (under some strategy),
and thus we can cycle through all the ``local strategies'' for the appropriate fraction of time.
This step introduces an approximation error,
due to the cost of moving from a SCC to the next one.
However, by using larger amounts of finite memory,
we can make this error arbitrarily small.

\ignore{

\begin{figure}
	\centering
	\footnotesize

	\begin{minipage}{.5\textwidth}
		\centering
		\begin{tikzpicture}

			\node[playerstate] (s) [] {$s$};
			\node[playerstate] (t) [right = 1 cm of s] {$t$};

			\path[->] (s) edge [loop above] node [above = 1ex] {$(0, 1)$} ();
			\path[->] (s) edge [bend right = 45] node [above = 1ex] {$(0, 0)$} (t);
		
			\path[->] (t) edge [loop above] node [above = 1ex] {$(1, 0)$} ();
			\path[->] (t) edge [bend right = 45] node [above = 1ex] {$(0, 0)$} (s);

		\end{tikzpicture}
		\subcaption{A \MDP reduced to one EC.}
		\label{fig:randomized_EC_example}
	\end{minipage}
\\[4ex]

	\begin{minipage}{.5\textwidth}
		\centering
		\begin{tikzpicture}

			\node[randomstate] (s0) [] {$s, 0$};
			\node[randomstate] (s1) [left = 1 cm of s0] {$s, 1$};
			\node[randomstate] (t) [right = 1 cm of s0] {$t$};

			\path[->] (s0) edge node [above] {$\frac 1 2$} node [below] {$(0, 1)$} (s1);
			\path[->] (s0) edge node [above] {$\frac 1 2$} node [below] {$(0, 0)$} (t);
		
			\path[->] (s1) edge [loop above] node [above = 1ex] {$(0, 1)$} ();
			\path[->] (t) edge [loop above] node [above = 1ex] {$(1, 0)$} ();

		\end{tikzpicture}
		\subcaption{Exact strategy inducing two BSCCs.}
		\label{fig:randomized_EC_2mem_example}
	\end{minipage}
\\[4ex]

	\begin{minipage}{.5\textwidth}
		\centering
		\begin{tikzpicture}

			\node[randomstate] (s0) 	[] {$s, 1$};
			\node[randomstate] (s1) 	[right = 1 cm of s0] {$s, 2$};
			\node[] (sdots)	[right = 1 cm of s1] {$\cdots$};
			\node[randomstate] (sA) 	[right = 1 cm of sdots] {$s, A$};
		
			\node[randomstate] (t0) 	[below = .5 cm of sA] {$t, 1$};
			\node[randomstate] (t1) 	[left = 1 cm of t0] {$t, 2$};
			\node[] (tdots)	[left = 1 cm of t1] {$\cdots$};
			\node[randomstate] (tA) 	[left = 1 cm of tdots] {$t, A$};

			\path[->] (s0) edge [bend left = 30] node [above = 1ex] {$(0, 1)$} (s1);
			\path[->] (s1) edge [bend left = 30] node [above = 1ex] {$(0, 1)$} (sdots);
			\path[->] (sdots) edge [bend left = 30] node [above = 1ex] {$(0, 1)$} (sA);
			\path[->] (sA) edge [bend left = 30] node [right = 1ex] {$(0, 0)$} (t0);

			\path[->] (t0) edge [bend left = 30] node [below = 1ex] {$(1, 0)$} (t1);
			\path[->] (t1) edge [bend left = 30] node [below = 1ex] {$(1, 0)$} (tdots);
			\path[->] (tdots) edge [bend left = 30] node [below = 1ex] {$(1, 0)$} (tA);
			\path[->] (tA) edge [bend left = 30] node [left = 1ex] {$(0, 0)$} (s0);

		\end{tikzpicture}
		\subcaption{Approximate finite-memory strategy inducing one BSCC.}
		\label{fig:pure_EC_example}
	\end{minipage}

	\caption{Approximating the expectation inside ECs.}
	
\end{figure}

We have seen in Lemma~\ref{lem:strategy:expectation} that randomized 2-memory strategies
are sufficient for expected-value achievable solutions,
and, by Remark~\ref{rem:expected-value:randomization}, randomization is necessary in general \MDPs .
However, 2-memory randomized strategies have the unpleasant property that they split the \MDP into several BSCCs in general.
In turn, this creates technical difficulties later,
since playing such a strategy for long enough time does not guarantee that the mean payoff is close to the expectation.

For example, consider the EC \MDP in Fig.~\ref{fig:randomized_EC_example}
(which is the same as the one in Fig.~3 of \cite{VelnerChatterjeeDoyenHenzingerRabinovichRaskin:Complexity:ArXiv:2012}).
There exists a simple randomized 2-memory strategy achieving expected mean payoff $(\frac 1 2, \frac 1 2)$
which stays forever in $s$ with probability $\frac 1 2$,
and in $t$ with probability $\frac 1 2$.
The induced Markov chain is shown in Fig.~\ref{fig:randomized_EC_2mem_example} (when starting from $s$),
and clearly the mean payoff of an infinite play is either $(0,1)$ or $(1, 0)$ with equal probability,
but never the expected value $(\frac 1 2, \frac 1 2)$.
In Sec.~\ref{sec:exp:EC:finite_memory} we design \emph{pure} finite-memory strategies
approximating the expected mean payoff and inducing unichain Markov chains (i.e., with just one BSCC),
and thus, by running such a strategy for long enough time, we can get close to the expected mean payoff with arbitrarily high probability.

Moreover, this example also shows that randomization is necessary for finite-memory strategies even \emph{inside} ECs
(whereas Remark~\ref{rem:expected-value:randomization} showed the same property in a game with two different ECs).
Indeed, no pure \emph{finite-memory} strategy can achieve mean payoff $(\frac 1 2, \frac 1 2)$,
since any finite-memory strategy switching between the two states infinitely often must take the edges from $s$ to $t$ and back
with a non-zero long-run frequency,
and thus the vector $(0, 0)$ will have a non-negligible impact in the long run.
In Sec.~\ref{sec:exp:EC:infinite_memory} we show that this can be avoided at the expense of using infinite memory,
which is used to make negligible the long-run frequency of those ``transient'' transitions.

The idea is to use finite-memory to approximate the long-run frequencies of edges resulting from playing 2-memory randomized strategies
(which exists by Lemma~\ref{lem:strategy:expectation}).
As an illustration, consider again the example in Fig.~\ref{fig:randomized_EC_example}.
To achieve expected mean payoff $(\frac 1 2, \frac 1 2)$ it must be the case that,
in the long run, edges $s \goesto {(0, 1)} s$ and $t \goesto {(1, 0)} t$ are taken with equal relative frequency $\frac 1 2$,
which can be approximated with arbitrary additive error by pure finite-memory strategies.
For example, for a parameter $A \in \N$, consider the pure finite-memory strategy $g_A$ which stays in $s$ for $A$ steps,
and then goes to $t$, stays in $t$ for $A$ steps, and then goes back to $s$, and repeats this scheme forever.
The Markov chain induced by $g_A$ is shown in Fig.~\ref{fig:pure_EC_example}.
The strategy $g_A$ uses $A$ memory states, and achieves (expected and worst-case) mean payoff
\begin{align*}
	&\frac A {2A + 2} \cdot (0, 1) + \frac 1 {2A + 2} \cdot (0, 0) + \frac A {2A + 2} \cdot (1, 0) + \frac 1 {2A + 2} \cdot (0, 0) = \\
	&\qquad = \left(\frac A {2A + 2}, \frac A {2A + 2}\right)
\end{align*}
which converges from below to $(\frac 1 2, \frac 1 2)$ as $A \to \infty$.

}

\ignore{

\begin{theorem}[\citeappendix{VelnerRabinovich:Noisy:FOSSACS:2011,VelnerChatterjeeDoyenHenzingerRabinovichRaskin:Complexity:ArXiv:2012}]
	The expected-value threshold problem for multidimensional mean-payoff \stochastic-games is \coNPc.
\end{theorem}
\begin{proof}
	While infinite memory may be needed for Controller to achieve the threshold
	\citeappendix{ChatterjeeDoyenHenzingerRaskin:Generalized:FSTTCS:2010},
	memoryless strategies suffice for the Player~1 to spoil the objective
	\citeappendix{Kopczynski:HalfPositional:2006,GimbertKelmendi:HalfPositional:ArXiv:2014}.%
	Thus, after guessing a strategy for Player~1,
	it suffices to check that in the resulting MDP Controller does not win,
	which can be done in \PTIME \cite{BrazdilBrozekChatterjeeForejtKucera:TwoViews:LMCS:2014}.
	
	We give a detailed proof below.
\end{proof}

}

\ignore{

randomized finite-memory strategies suffice for achievable solutions (cf. \cite{BrazdilBrozekChatterjeeForejtKucera:TwoViews:LMCS:2014})
\begin{lemma}
	\label{lem:strategy:expectation}
	Let $\Game$ be a multidimensional mean-payoff \MDP,
	let $s$ be a state therein,
	and let $\vec \nu \in \ExpSol \Game s$ be an achievable solution.
	Then, there exists a \emph{finite-memory randomized} strategy
	$f \in \FiniteMemory \Game$ for Controller
	s.t. $\vec \nu \in \ExpSol \Game {s, f}$.	
\end{lemma}
\noindent
In Sec.~\ref{sec:exp:EC} we complement this result by showing that, inside ECs,
randomization can be traded for infinite memory (cf. Lemma~\ref{lem:strategy:EC:infinite-memory:expectation})
and finite-memory suffices to \emph{approximate} achievable solutions (cf. Lemma~\ref{lem:strategy:EC:finite-memory:expectation});
moreover, in the special case of unichain \MDPs  (i.e., when memoryless strategies always induce unichain Markov chains),
we show that pure finite-memory strategies suffice for achievable solutions (cf. Lemma~\ref{lem:strategy:EC:irreducible:expectation}).
\begin{proof}
	In \cite{BrazdilBrozekChatterjeeForejtKucera:TwoViews:LMCS:2014} it is shown that 2-memory stochastic-update strategies
	are sufficient to ensure achievable vectors (cf. Theorem 4.1 therein),
	and, by trading randomness (in the update of the memory) for extra memory states,
	it is also shown that finite-memory deterministic-update strategies also suffice for achievable vectors
	(cf. Proposition A.2 therein).
	Our notion of randomized strategy is exactly what is called deterministic-update strategy in \cite{BrazdilBrozekChatterjeeForejtKucera:TwoViews:LMCS:2014}.
\end{proof}

}

\ignore{

\begin{remark}
	\label{rem:expected-value:randomization}
	randomization is necessary in general for achieving expected-value solutions,
	independently of the available memory.
	Consider a simple \MDP with three states $S = \set{s_0, s_1, s_2}$
	and weighted edges $s_0 \goesto {(0, 0)} s_1$, $s_0 \goesto {(0, 0)} s_2$,
	$s_1 \goesto {(0, 1)} s_1$, and $s_2 \goesto {(1, 0)} s_2$.
	There exists a randomized strategy achieving mean payoff $(1, 1)$ in expectation from $s_0$
	which goes to equal probability to either $s_1$ or $s_2$,
	but there is no pure strategy achieving $(1, 1)$.
	Notice that there are two maximal ECs, $U_0 = \set {s_1}$ and $U_1 = \set {s_2}$.
\end{remark}

}

\subsubsection{Decomposition in SCCs}

\label{sec:SCCs:decomposition}

\renewcommand{\A}{$\textrm{A}_{\vec \nu}$}
\renewcommand{\a}{\emph{(a)}}
\renewcommand{\b}{\emph{(b)}}

\begin{figure}
	\begin{align}
		\tag*{(\textrm{EC}-1)}
			\sum_{s \in S} x_s &= 1 &
		\\
		\tag*{(\textrm{EC}-\textrm{IN})}
			 x_s &= \sum_{(r, s) \in E} x_{rs}	&& \forall s \in S
		\\
		\tag*{(\textrm{EC}-\textrm{OUT})}
			x_s &= \sum_{(s, t) \in E} x_{st}	&& \forall s \in S
		\\
		\tag*{(\textrm{EC}-\textrm{RAND})}
		\label{eq:A:Rand}
			x_{st} &= R(s)(t) \cdot x_s			&& \forall (s, t) \in E \textrm{ with } s \in S^R
		\\
		\tag*{(\textrm{EC}-\textrm{MP})}
		\label{eq:A:threshold}
			\sum_{(s, t) \in E} x_{st} \cdot w(s, t)[i] &\geq \vec \nu[i]	&& \forall (1 \leq i \leq d)
	\end{align}
	
	\caption{System of linear inequalities for the expectation problem inside an EC.}
	\label{fig:LP:exp}
\end{figure}

In the following, let $\Game = \tuple {\Graph, S^0, S^R, R}$
with $\Graph = \tuple {d, S, E, w}$
be a fixed \MDP.
W.l.o.g. we assume that $\Game$ is reduced to a single EC $S$.
Let $\vec \nu\in\Q^d$ be an expected-value achievable vector.
Consider the linear program \A~of Fig.~\ref{fig:LP:exp}.
(Cf. \cite{BrazdilBrozekChatterjeeForejtKucera:TwoViews:LMCS:2014}
for a similar linear program in the more general case where the \MDP is not just an EC.).
We use the linear program \A~to obtain the long-run ``frequencies'' of edges guaranteeing mean payoff $\vec \nu$.
For each state $s \in S$, we have a variable $x_s$
representing the long-run probability to be in $s$,
and, for each edge $(s, t) \in E$, we have a variable $x_{st}$
for the long-run probability of taking edge $(s, t)$.
\ignore{

	\begin{remark}
		We can check whether one can reach an expectation strictly above $\vec\nu$ in every component
		by introducing a new variable $y \geq 0$ and by modifying the last inequality as follows:
		\begin{align*}
			\sum_{(s, t) \in E} x_{st} \cdot w(s, t) \geq \vec \nu + (y, \dots, y) &\textrm{ [strict mean-payoff condition] }
		\end{align*}
		Let $y^*$ be the maximal $y$ s.t. the system above has a feasible solution.
		Then, one can achieve a mean payoff $> \vec\nu$ if, and only if, $y^* > 0$.
	\end{remark}

}
In the following, let
\begin{align*}
	\ExpSol \Game {s_0, f} &= \setof {\vec \nu \in \R^d} {\E {s_0, f} \Game \MP \geq \vec \nu} \ , \\
	\ExpSol \Game {s_0} &= \bigcup_{f \in \All \Game} \ExpSol \Game {s_0, f} \ .
\end{align*}
The lemma below shows that \A~has a non-negative solution if $\vec \nu$ is achievable in $\Game$.
Correctness follows directly from the analysis of \cite{BrazdilBrozekChatterjeeForejtKucera:TwoViews:LMCS:2014}.
The complexity of the solution follows from \citeappendix[Theorem 10.1]{Schrijver}.
In the statement below, recall that $W$ is the maximum absolute value of any weight in $\Game$,
and $Q$ is be the largest denominator of any probability appearing therein.
\begin{lemma}
	\label{lem:strategy:EC:LP}
	Let $\Game$ be a multidimensional mean payoff \MDP reduced to a single EC $S$,
	let $s_0 \in S$ be a state therein,
	and let $\vec \nu \in \ExpSol \Game {s_0}$ be an achievable value for the expectation.
	Then, the system \A~has a non-negative solution of size%
	\footnote{The size of a non-negative rational number $x = p/q$ with $p, q \in \N$ relatively prime, and $q > 0$,
	is the size of their bit representation, $1 + \lceil\log_2 (p + 1)\rceil + \lceil\log_2 (q + 1)\rceil$;
	cf.~\citeappendix[Section~3.2]{Schrijver}.}
	bounded by a polynomial in $W$ and $Q$.	
\end{lemma}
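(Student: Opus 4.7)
The plan is to exhibit a non-negative solution of \A{} by extracting the long-run edge frequencies of a finite-memory strategy achieving the expectation, and then to bound the size of a basic feasible solution via standard linear programming theory.

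First, I would invoke the result of \cite{BrazdilBrozekChatterjeeForejtKucera:TwoViews:LMCS:2014} cited in the discussion of Lemma~\ref{lem:strategy:EC:finite-memory:expectation} to obtain a finite-memory strategy $f \in \FiniteMemory \Game$ such that $\vec\nu \in \ExpSol \Game {s_0, f}$. The induced Markov chain $\Game[f]$ is then finite. I would decompose it into BSCCs $B_1, \ldots, B_k$ reachable from $(s_0, m_0)$, let $p_j$ denote the probability of reaching $B_j$, and let $\pi_j$ be the unique stationary distribution on $B_j$. Defining $x_s = \sum_j p_j \cdot \pi_j(\{(s, m) : m \in M\})$ and similarly $x_{st} = \sum_j p_j \cdot \sum_{m,m'} \pi_j(s,m) \cdot P_j((s,m),(t,m'))$ (projecting the edge frequencies of $\Game[f]$ onto $\Game$), I obtain non-negative rationals indexed by states and edges of $\Game$.

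Second, I would verify the constraints of \A{}. Equation (EC-1) follows because $\sum_j p_j = 1$ (the chain is finite, hence almost surely trapped in some BSCC) and each $\pi_j$ is a probability distribution. The balance equations (EC-IN) and (EC-OUT) are the classical stationarity conditions $\pi_j P_j = \pi_j$, aggregated over $j$ and projected onto $\Game$. Condition \ref{eq:A:Rand} holds because at a stochastic state $s \in S^R$ the successor is drawn according to $R(s)$ irrespective of the memory state, so the edge frequency out of $s$ factors as $R(s)(t) \cdot x_s$. Finally, the mean-payoff inequality~\ref{eq:A:threshold} follows from the standard identity
\[
	\E {s_0, f} \Game \MP \;=\; \sum_{(s,t) \in E} x_{st} \cdot w(s,t),
\]
combined with the hypothesis $\E {s_0, f} \Game \MP \geq \vec\nu$. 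This shows that the vector $(x_s, x_{st})$ is a non-negative solution to \A{}.

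For the size bound, I would observe that \A{} has $O(\card S + \card E)$ variables and constraints, and each coefficient appearing in it is either $0$, $\pm 1$, a component of $w$ (hence of absolute value $\leq W$), or a probability $R(s)(t)$ (hence a rational with denominator $\leq Q$). Consequently the bit-size of the system is polynomial in the bit-sizes of $W$ and $Q$. Since we have just exhibited a non-negative feasible solution, by Theorem~10.1 of \citeappendix{Schrijver} (the classical bound on the size of basic feasible solutions to a rational linear program) \A{} admits a basic feasible solution whose bit-size is polynomial in the bit-size of the system, hence polynomial in that of $W$ and $Q$. The main obstacle I anticipate is the bookkeeping involved in aggregating the stationary distributions over several BSCCs and correctly projecting frequencies onto $\Game$ so that \ref{eq:A:Rand} holds exactly; once this aggregation is written out, the remaining verifications are routine.
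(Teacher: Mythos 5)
Your proposal is correct and follows essentially the same route as the paper: take a finite-memory strategy achieving $\vec\nu$, read off the long-run edge frequencies of the induced finite Markov chain as a non-negative feasible point of \A{}, and invoke Theorem~10.1 of Schrijver to get a small-size solution. The only (immaterial) difference is that you compute the frequencies via absorption probabilities and stationary distributions of the BSCCs, whereas the paper defines them directly as Ces\`aro limits of edge-occupation probabilities and obtains the mean-payoff inequality \ref{eq:A:threshold} via Fatou's lemma; for finite chains these are equivalent.
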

\ignore{ 

	\begin{proof}
		Assume $\vec\nu$ is an achievable vector.
		By Lemma~\ref{lem:strategy:expectation}, there exists a finite-memory randomized strategy $f$ achieving $\vec\nu$.
		Let $\tilde x_{st}$ be long-run probability of taking edge $(s, t)$ in the finite Markov chain $\Game[f]$ induced by $f$.
		Formally, let $A_{st}(n)$ be the random variable which is $1$ when edge $(s, t)$ is taken at time $n$, and $0$ otherwise, i.e.,
		for every $\pi = s_0 s_1 \cdots \in \Outcome {s_0} f$,
		$A_{st}(n)(\pi) = 1$ if $s_{n-1} = s$ and $s_n = t$,
		and $ = 0$ otherwise. Then,
		\begin{align}
			\tilde x_{st} := \lim_n \frac 1 n \cdot \sum_{i=1}^n \Prob {s_0, f} \Game {A_{st}(n) = 1}
		\end{align}
		The limit exists since the Markov chain $\Game[f]$ is finite
		and thus it contains a finite attractor (cf.~\citeappendix{AbdullaHendaMayrSandberg:Limiting:QEST:2006}).
		Clearly, $\tilde x_{st} \geq 0$.
		Moreover, $\sum_{(s, t) \in E}\tilde x_{st} = 1$.
		Indeed,
		\begin{align}
			\sum_{(s, t) \in E}\tilde x_{st} &= \sum_{(s, t) \in E} \lim_n \frac 1 n \cdot \sum_{i=1}^n \Prob {s_0, f} \Game {A_{st}(i) = 1} \\
			&= \lim_n \frac 1 n \cdot \sum_{i=1}^n \sum_{(s, t) \in E} \Prob {s_0, f} \Game {A_{st}(i) = 1} \\
			&= \lim_n \frac 1 n \cdot \sum_{i=1}^n 1 = 1
		\end{align}
		We now verify that the flow conditions are satisfied.
		Let $B_s(n)$ be the random variable which is $1$ when state $s$ is visited at time $n$, and $0$ otherwise,
		i.e., for every $\pi = s_0 s_1 \cdots \in \Outcome {s_0} f$,
		$B_s(n)(\pi) = 1$ if $s_n = s$,	and $ = 0$ otherwise.
		Then, for a fixed $s \in S$, we have
		\begin{align}
			\sum_{r : (r, s) \in E} \tilde x_{rs}
			&= \sum_{r : (r, s) \in E} \lim_n \frac 1 n \cdot \sum_{i=1}^n \Prob {s_0, f} \Game {A_{rs}(i) = 1} = \\
			&= \lim_n \frac 1 n \cdot \sum_{i=1}^n \sum_{r : (r, s) \in E} \Prob {s_0, f} \Game {A_{rs}(i) = 1} = \\
			&= \lim_n \frac 1 n \cdot \sum_{i=1}^n \Prob {s_0, f} \Game {B_s(i) = 1} = \\
			&= \lim_n \frac 1 n \cdot \sum_{i=1}^n \Prob {s_0, f} \Game {B_s(i-1) = 1} = \\
			&= \lim_n \frac 1 n \cdot \sum_{i=1}^n \sum_{t : (s, t) \in E} \Prob {s_0, f} \Game {A_{st}(i) = 1} = \\
			&= \sum_{t : (s, t) \in E} \lim_n \frac 1 n \cdot \sum_{i=1}^n \Prob {s_0, f} \Game {A_{st}(i) = 1} = \sum_{t : (s, t) \in E} \tilde x_{st}
		\end{align}
		Moreover, for stochastic states we have 
		\begin{align}
			\tilde x_{st} &= \lim_n \frac 1 n \cdot \sum_{i=1}^n \Prob {s_0, f} \Game {A_{st}(i) = 1} = \\
			&= \lim_n \frac 1 n \cdot \sum_{i=1}^n \Prob {s_0, f} \Game {B_s(i-1) = 1} \cdot R(s)(t) = \\
			&= R(s)(t) \cdot \lim_n \frac 1 n \cdot \sum_{i=1}^n \Prob {s_0, f} \Game {B_s(i-1) = 1} = \\
			&= R(s)(t) \cdot \lim_n \frac 1 n \cdot \sum_{i=1}^n \sum_{t : (s, t) \in E} \Prob {s_0, f} \Game {A_{st}(i) = 1} = \\
			&= R(s)(t) \cdot \sum_{t : (s, t) \in E} \lim_n \frac 1 n \cdot \sum_{i=1}^n \Prob {s_0, f} \Game {A_{st}(i) = 1} =
			R(s)(t) \cdot \sum_{t : (s, t) \in E} \tilde x_{st}
		\end{align}
		Finally, the solution $\set{\tilde x_{st}}_{(s, t) \in E}$ satisfies the mean payoff condition:
		Let $W_n$ be the payoff at time $n$,
		i.e., $W_n = w(s, t)$ for the unique $(s, t) \in E$ s.t. $A_{st}(n) = 1$.
		\begin{align}
			\vec \nu \leq \E {s_0, f} \Game \MP &= \E {s_0, f} \Game {\liminf_n \frac 1 n \cdot \sum_{i=1}^n W_i} \\
			& \leq \liminf_n \left( \E {s_0, f} \Game {\frac 1 n \cdot \sum_{i=1}^n W_i} \right) = \\
			& = \liminf_n \frac 1 n \cdot \sum_{i=1}^n \E {s_0, f} \Game {W_i} = \\
			& = \liminf_n \frac 1 n \cdot \sum_{i=1}^n \left( \sum_{(s, t) \in E} \Prob {s_0, f} \Game {A_{st}(i) = 1}  \cdot w(s, t) \right) = \\
			& = \sum_{(s, t) \in E} \left( \liminf_n \frac 1 n \cdot \sum_{i=1}^n \Prob {s_0, f} \Game {A_{st}(i) = 1} \right) \cdot w(s, t) = \\
			& = \sum_{(s, t) \in E} \left( \lim_n \frac 1 n \cdot \sum_{i=1}^n \Prob {s_0, f} \Game {A_{st}(i) = 1} \right) \cdot w(s, t) = \\
			& = \sum_{(s, t) \in E} \tilde x_{st} \cdot w(s, t)		
		\end{align}
		where the first inequality follows from \href{http://en.wikipedia.org/wiki/Fatou's_lemma}{Fatou's lemma}.
	
		Thus, we have shown that $A_{\vec\nu}$ has a solution.
		It follows from standard facts in linear algebra (cf. Theorem 10.1 of \citeappendix{Schrijver})
		that $A_{\vec\nu}$ also has a solution whose numerators/denominators have size (w.r.t. bit complexity)
		polynomial in the size of $A_{\vec\nu}$.
		This means that there exists a solution $\set{\tilde x_{st}}_{(s, t) \in E}$
		s.t. $\tilde x_{st} = p_{st}/q_{st}$ with $p_{st}$ and $q_{st}$ natural numbers,
		and the $p_{st}$'s and $q_{st}$'s are bounded by a polynomial in the largest numerators/denominators of coefficients of $A_{\vec\nu}$.
		Since an achievable vector $\vec\nu$ necessarily satisfies $\vec\nu \leq (W, \dots, W)$,
		where $W$ is the largest absolute value of any weight in $\Game$,
		and denominators of probabilities $R(s)(t)$'s are bounded by $Q$ by definition,
		we have that the $p_{st}$'s and $q_{st}$'s are bounded by a polynomial in $W$ and $Q$.
		This concludes the proof.
	\end{proof}
}

Let $\set{\tilde x_{st}}_{(s, t) \in E}, \set{\tilde x_s}_{s \in S}$ be a non-negative solution to \A~
of complexity polynomial in $W$ and $Q$.
This allows us to perform the following decomposition of $\Game$ into strongly connected components.
Let $S_{>0}$ be the set of states visited with (strictly) positive long-run average probability,
and let $E_{>0}$ be the set of edges visited with (strictly) positive long-run average probability:
\begin{align*}
	S_{>0} &= \setof {s \in S}{\tilde x_s > 0} \\
	E_{>0} &= \setof {(s, t) \in E}{\tilde x_{st} > 0}
\end{align*}
By the flow conditions of \A~, $E_{>0} = E \cap (S_{>0} \times S_{>0})$,
i.e., positive edges are exactly those connecting positive states.
States in $S_{>0}$ can be partitioned into maximal strongly connected components
$\set{S_1, \dots, S_k}$ (w.r.t. $E_{>0}$),
such that there is no positive edge between different components.
Since states in $S_i$ have at least one successor (as $S_i \subseteq S_{>0}$)
and all successors are in fact inside $S_i$,
we have that $S_i$ is an EC in $\Game$.
Let $E_i = E_{>0} \cap S_i \times S_i$ be the restriction of $E_{>0}$ to $S_i$,
let $\Graph_i = \tuple {d, S_i, E_i, w}$ be the corresponding graph,
and let $\Game_i = \tuple {\Graph_i, S^0_i, S^R_i, R}$ be the resulting \MDP,
where $S^0_i = S^0 \cap S_i$ and $S^R_i = S^R \cap S_i$.

For each $i \in \set{1, \dots, k}$, let $x_i > 0$ be the total long-run average probability of being in $S_i$,
and let $\vec\nu_i$ be the expected mean payoff vector achieved when starting from (anywhere) in component $S_i$:
\begin{align}
	\label{eq:x_i}
	x_i &= \sum_{s \in S_i} \tilde x_s > 0 \\
	\label{eq:nu_i}
	\vec\nu_i &= \frac 1 {x_i} \cdot \sum_{(s, t) \in E_i} \tilde x_{st} \cdot w(s, t)
\end{align}
Since component $S_i$ cannot be left,
it is reached with probability $x_i$,
and thus the expected mean payoff is
$\sum_{i=1}^k x_i \cdot \vec\nu_i \geq \vec\nu$.

\begin{remark}
	This analysis immediately yields a 2-memory \emph{randomized} strategy achieving expected mean payoff $\vec \nu$.
	Such a strategy goes to SCC $S_i$ with probability $x_i$,
	and then plays edge $(s, t)$ with probability $\tilde x_{st}/\tilde x_s$.
	Two memory states are required to discriminate the two phases.
	However, such a strategy is not unichain in general, which is what we aim at in this section.
\end{remark}

We design a randomized finite memory unichain strategy that plays edges $(s, t) \in E_{>0}$
with approximate long-run average frequency $\tilde x_{st}$,
in order to have mean payoff close to $\vec\nu$.
We do this in two steps.
First, in Sec.~\ref{sec:exp:local} we design, for each SCC $S_i$,
a randomized memoryless ``local strategy'' $g_i$
which plays edge $(s, t) \in E_i$
with long-run average frequency $\tilde x_{st} / x_i$ when started inside $S_i$.
Then, in Sec.~\ref{sec:exp:global} we combine those $g_i$'s into a global strategy
that spends in each $S_i$'s an approximate long-run fraction of time $x_i$.
By using larger amounts of memory, the error in this approximation can be made arbitrarily small.



\subsubsection{Inside a SCC $S_i$ (Strategy $g_i$)}

\label{sec:exp:local}

For each SCC $S_i$, let $g_i$ be the randomized memoryless strategy
that plays edge $(s, t) \in E_i$ with $s \in S_i^0$ with probability $\tilde x_{st} / \tilde x_s$.
Thus, in $\Game_i[g_i]$ edge $(s, t) \in E_i$ is visited for a long-run proportion of time $\tilde x_{st} / x_i$.
\begin{lemma}
	\label{lem:de-randomization:strategy}
	$\Game_i[g_i]$ is recurrent, and for every state $s_0 \in S_i$,
	\begin{align*}
		\E {s_0, g_i} {\Game_i} \MP = \vec\nu_i
	\end{align*}
\end{lemma}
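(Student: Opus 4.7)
The plan is to verify the two assertions by identifying $\set{\tilde x_{st}/x_i}$ as the stationary edge-frequencies of $\Game_i[g_i]$ and then invoking the ergodic theorem for finite irreducible Markov chains. First I would check that $g_i$ is a well-defined strategy in $\Game_i$, i.e., that it never leaves $S_i$. For Controller states $s \in S_i \cap S^0$, the support of $g_i(s)$ is $\setof{t \in S_i}{(s,t) \in E_i}$ by definition. For Random states $s \in S_i \cap S^R$, equation (EC-RAND) gives $\tilde x_{st} = R(s)(t) \cdot \tilde x_s$; since $\tilde x_s > 0$ and $R(s)$ assigns positive mass to every $E$-successor, every such successor $t$ satisfies $\tilde x_{st} > 0$, hence $(s,t) \in E_{>0}$, and by the observation that no positive edge crosses between components we get $t \in S_i$.

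Next I would show that $\Game_i[g_i]$ is strongly connected on $S_i$, which for a finite Markov chain is exactly recurrence. The edges used with positive probability by $g_i$ are precisely the edges of $E_i$: at Controller states this is immediate from the definition, and at Random states it also follows from (EC-RAND), since any $(s,t) \in E_i$ has $\tilde x_{st} > 0$ and therefore $R(s)(t) > 0$. Since $(S_i, E_i)$ is strongly connected by the definition of the SCC decomposition, so is $\Game_i[g_i]$.

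Then I would verify that $\pi_s := \tilde x_s / x_i$ is \emph{the} stationary distribution on the states of $\Game_i[g_i]$ (uniqueness follows from irreducibility). Normalisation $\sum_{s \in S_i} \pi_s = 1$ is the definition of $x_i$. For stationarity, let $P$ be the transition matrix of $\Game_i[g_i]$ and consider the edge-flow $\pi_s \cdot P(s,t)$: for $s \in S_i \cap S^0$ we have $P(s,t) = \tilde x_{st}/\tilde x_s$, giving $\pi_s P(s,t) = \tilde x_{st}/x_i$; for $s \in S_i \cap S^R$ we have $P(s,t) = R(s)(t)$, and (EC-RAND) gives the same value $\tilde x_{st}/x_i$. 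Summing over $s$ and applying (EC-IN) (noting that the ``no positive edge between components'' property means only $r \in S_i$ contribute to $\sum_{(r,t) \in E}\tilde x_{rt}$) yields $\sum_s \pi_s P(s,t) = \tilde x_t / x_i = \pi_t$.

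Finally, the ergodic theorem for finite irreducible Markov chains implies that from any starting state $s_0 \in S_i$ the long-run frequency of each edge $(s,t) \in E_i$ converges almost surely to $\pi_s P(s,t) = \tilde x_{st}/x_i$. Therefore $\MP$ converges almost surely to $(1/x_i)\sum_{(s,t)\in E_i}\tilde x_{st}\cdot w(s,t) = \vec\nu_i$, and its expectation coincides with this almost-sure constant, giving $\E{s_0, g_i}{\Game_i}\MP = \vec\nu_i$. The only delicate point is the stationarity computation, where it is essential that (EC-RAND) forces the outgoing flow at Random states to match the probabilistic branching; everything else reduces to routine application of the ergodic theorem once the right candidate stationary distribution has been read off from the LP solution.
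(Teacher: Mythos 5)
Your proof is correct and matches the argument the paper intends: the paper states this lemma without proof, but the sentence immediately preceding it ("edge $(s,t)\in E_i$ is visited for a long-run proportion of time $\tilde x_{st}/x_i$") is exactly the stationary-frequency identification you establish via (EC-IN), (EC-OUT) and (EC-RAND) before invoking the ergodic theorem. Your additional checks — that $g_i$ never leaves $S_i$ (using that $R(s)$ has full support on $E(s)$ and that no positive edge crosses components) and that irreducibility of the finite chain gives recurrence — are precisely the details the paper leaves implicit, and they are carried out correctly.
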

In the following lemma, we show the mean payoff obtained by playing according to $g_i$ for sufficiently long time
is close to $\vec \nu_i$ with high probability.
The constant $L_0$ in the statement of the lemma does not depend on $S_i$,
thus the guarantee holds in every component $S_i$.
The lemma follows from a Hoeffding-style analysis.
\begin{lemma}
	\label{lem:MP:gi}
 	For every $\delta > 0$,
	there exists $L_0 = O(\frac 1 {\delta^2})\in \N$ and constants $a, b > 0$ s.t.,
	for every component $S_i$,
	for every $L \geq L_0$
	and for every state $s_0 \in S_i$,
	\begin{align*}
		\Prob {s_0, g_i} {\Game_i}
		{ \exists (1 \leq j \leq d) \cdot | (\MP_L - \vec\nu_i)[j] | \geq \delta}
		\leq \G L \delta := 2^d \cdot a \cdot e^{-b \cdot L \cdot \delta^2} \ .
	\end{align*}
\end{lemma}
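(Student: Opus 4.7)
The plan is to derive this lemma as a direct consequence of the multidimensional Hoeffding-style bound already established in Lemma~\ref{lem:Hoeffding:multidimensional}, applied separately to each of the finitely many induced Markov chains $\Game_i[g_i]$, and then to unify the resulting constants across all SCCs.

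First, I would fix an SCC $S_i$ and consider the Markov chain $\Game_i[g_i]$. By Lemma~\ref{lem:de-randomization:strategy}, this Markov chain is recurrent, hence in particular unichain, and its expected mean payoff from \emph{any} starting state $s_0 \in S_i$ equals $\vec\nu_i$. This is precisely the setting in which Lemma~\ref{lem:Hoeffding:multidimensional} applies. Invoking that lemma with expected value $\vec\nu_i$ and deviation parameter $\delta$ yields constants $a_i, b_i > 0$ and a threshold $L_i = O(1/\delta^2)$ such that, for every $L \geq L_i$ and every $s_0 \in S_i$,
\begin{align*}
    \Prob{s_0, g_i}{\Game_i}{\exists (1 \leq j \leq d) \cdot |(\MP_L - \vec\nu_i)[j]| \geq \delta} \leq 2^d \cdot a_i \cdot e^{-b_i \cdot L \cdot \delta^2}.
\end{align*}

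The remaining step is to obtain constants that are uniform across all SCCs $S_1, \dots, S_k$ of the decomposition. Since the number of SCCs is bounded by $|S|$ (hence finite), I can simply set $a := \max_i a_i$, $b := \min_i b_i$, and $L_0 := \max_i L_i$; all three quantities are well-defined and finite. With these choices, $a \cdot e^{-b \cdot L \cdot \delta^2} \geq a_i \cdot e^{-b_i \cdot L \cdot \delta^2}$ for every $i$ and every $L \geq L_0$, so the uniform bound $\G L \delta = 2^d \cdot a \cdot e^{-b \cdot L \cdot \delta^2}$ dominates each per-component bound. Since each $L_i = O(1/\delta^2)$ and we take the maximum over finitely many (polynomially many in $|S|$) such thresholds, we still have $L_0 = O(1/\delta^2)$, as required.

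The only subtle point worth checking is the uniformity of the constants: nothing in the proof of Lemma~\ref{lem:Hoeffding:multidimensional} prevents $a_i, b_i, L_i$ from depending on the specific parameters of $\Game_i[g_i]$ (weights and transition probabilities), but since there are only finitely many SCCs and each is a sub-MDP of $\Game$, taking the max/min closes this gap without enlarging the $O(1/\delta^2)$ bound on $L_0$ by more than a polynomial factor in the size of $\Game$. This is the main (and only) obstacle; once uniformity is secured, the statement of the lemma follows immediately.
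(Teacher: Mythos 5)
Your proposal is correct and matches the paper's own proof essentially verbatim: the paper likewise applies Lemma~\ref{lem:Hoeffding:multidimensional} to each unichain Markov chain $\Game_i[g_i]$ separately and then aggregates by taking $a := \max_i a_i$, $b := \min_i b_i$, and $L_0 := \max_i L_i$. Your additional remark on why uniformity over the finitely many SCCs preserves the $O(1/\delta^2)$ bound is a sound (if implicit in the paper) observation.
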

\ignore{
	\begin{remark}
		$L_0$ \lorenzo{find a precise expression for $L_0$?}
		is exponential in the number of states of the \MDP $\Game$,
		polynomial in the maximum absolute weight $W$ and in the maximum denominator of probabilities $Q$;
		$a$ and $b$ are polynomial the number of states of $\Game$,
		and $a$ is exponential in $\delta$.
	\end{remark}
}
\begin{proof}
	The lemma follows from an application of Lemma~\ref{lem:Hoeffding:multidimensional} to each component $S_i$ separately,
	and then by aggregating the constants.
	More precisely, for each irreducible (and thus unichain) Markov chain $\Game_i[g_i]$, $1 \leq i \leq k$,
	Lemma~\ref{lem:Hoeffding:multidimensional} provides $L_i$ (called $K_0$ in the lemma) and constants $a_i, b_i > 0$ s.t.
	for every $L \geq L_i$ and state $s_0 \in S_i$,
	\begin{align*}
		\Prob {s_0, g_i} {\Game_i}
		{ \exists (1 \leq j \leq d) \cdot | (\MP_L - \vec\nu_i)[j] | \geq \delta}
		\leq \G L \delta := 2^d \cdot a_i \cdot e^{-b_i \cdot L \cdot \delta^2} \ .
	\end{align*}
	Just take $L_0 := \max \set{L_1, \dots, L_k}$,
	$a := \max \set{a_1, \dots, a_k}$,
	and $b := \min \set{b_1, \dots, b_k}$
	to satisfy the claim.
	\ignore{
	Each $L_i$'s is polynomial in the parameters of the Markov chain $\Game_i[g_i]$ and in the maximum absolute weight $W$
	(cf. Lemma~\ref{lem:Hoeffding:multidimensional}),
	and thus the same holds for $L_0$.
	%
	Moreover, $a$ and $b$ are polynomial in the number of states of $\Game$, $W$ and $Q$,
	and $a$ is exponential in $\delta$.
	}
\end{proof}

\subsubsection{Across SCCs (The global strategy)}

\label{sec:exp:global}

We now combine the local strategies $g_i$'s in order to achieve approximate expected mean payoff $\vec\mu$ with finite memory.
For each SCC $S_i$, let $h_i$ be a memoryless strategy ensuring that $S_i$ is reached almost surely
from any state in $\Game$.
The strategy $g_A$ is parametrized by a natural number $A > 0$.
Assume that $x_i$ is of the form $x_i = a_i / b_i$, with $a_i, b_i \in \N$ relatively prime, $b_i > 0$,
let $b = \lcm \set{b_1, \dots, b_k}$,
and let $c_i = b \cdot x_i$.
Note that $c_i$ is a natural number.
Intuitively, $g_A$ works in $k$ different stages.
In stage $i \in \set{1, \dots, k}$, $g_A$ does the following:
\begin{enumerate}
	
	\item[\a] Play $h_i$ to reach $S_i$ almost surely.
	
	\item[\b] Once in $S_i$, switch to strategy $g_i$ for $A \cdot c_i$ steps.
	Then, switch to stage $(i\mod k) + 1$ and go to \a.
	
\end{enumerate}
A full repetition of stages $\set{1, \dots, k}$ is called a \emph{phase}.
Intuitively, $g_A$ spends a proportion of time $x_i$ in $S_i$ in the limit,
and, while the game stays in $S_i$, $g_A$ plays according to $g_i$.
Recall that $g_i$ is memoryless.

\begin{remark}
	\label{rem:memory:fA}
	Strategy $g_A$ can be implemented with memory bounded by $k \cdot A \cdot b$. 
	%
	%
	%
%
%
%
%
		%
%
	%
	Notice that in both \a~and \b, $g_A$ plays according to a memoryless strategy,
	and no memory is needed to distinguish \a~from \b~since it suffices to look at the current state.
	Since the size of the binary representation of $b$ is polynomial in $W$ and $Q$ (cf. Lemma~\ref{lem:strategy:EC:LP}),
	strategy $g_A$ uses memory exponential in $W$ and $Q$, and linear in $n$ and $A$,
	where $n$ is the number of states of $\Game$.
\end{remark}

\begin{figure}
	\centering
	\footnotesize
	\begin{tikzpicture}

		\node (G1) [] {$G_1$};
		\node (g1) [below = .5cm of G1, align = center] {play $g_1$ \\ for $A \cdot c_1$ steps};
		
		\node (G2) [right = 3cm of G1] {$G_2$};
		\node (g2) [below = .5cm of G2, align = center] {play $g_2$ \\ for $A \cdot c_2$ steps};
		
		\node (G3) [right = 3cm of G2] {$G_3$};
		\node (g3) [below = .5cm of G3, align = center] {play $g_3$ \\ for $A \cdot c_3$ steps};
		
		\path[->] (g1) edge [bend left = 30] node [above = 1ex] {play $h_2$} (g2);
		\path[->] (g2) edge [bend left = 30] node [above = 1ex] {play $h_3$} (g3);
		\path[->] (g3) edge [bend left = 15] node [below = 1ex] {play $h_1$} (g1);
		
		\begin{pgfonlayer}{background}
			\node [fill=blue!20, rectangle, rounded corners = 12pt, fit=(G1) (g1)] {};
			\node [fill=blue!20, rectangle, rounded corners = 12pt, fit=(G2) (g2)] {};
			\node [fill=blue!20, rectangle, rounded corners = 12pt, fit=(G3) (g3)] {};
		\end{pgfonlayer}
		
	\end{tikzpicture}
	\caption{The global strategy $g_A$}
	\label{fig:global_strategy}	
\end{figure}

We show that for any additive error $\varepsilon > 0$,
we can play each stage sufficiently long (by increasing the parameter $A$)
s.t. the probability of deviating from the expected mean payoff $\vec\nu$ by more than $\varepsilon$
in any component is small.
\begin{lemma}
	\label{lem:strategy:EC:A}
	For any achievable vector $\vec\nu \in \ExpSol \Game {s_0}$ and $\varepsilon > 0$,
	there exists an $A_\varepsilon \in \N$ ($= O(\frac 1 \varepsilon)$)
	s.t., for every $A \geq A_\varepsilon$,
	$\left(\vec \nu - \vec \varepsilon\right) \in \ExpSol \Game {s_0, g_A}$.
\end{lemma}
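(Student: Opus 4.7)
The plan is to analyze the long-run behavior of $g_A$ via a renewal argument over \emph{phases}, where a phase consists of one full cycle through stages $1, \ldots, k$. Since $g_A$ is finite-memory, $\Game[g_A]$ is a finite Markov chain; moreover, because the stage counter advances deterministically and in part \b\ of each stage the local strategy $g_i$ is recurrent on $\Game_i$ by Lemma~\ref{lem:de-randomization:strategy}, $\Game[g_A]$ has a unique recurrent class and is therefore unichain. Consequently $\E{s_0, g_A}{\Game}{\MP}$ equals the almost-sure long-run time-average of the per-step weights, which can be computed via renewal theory.

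I would then analyze the expected contribution of a single stage. In stage $i$, part \a\ plays the memoryless reaching strategy $h_i$; since $h_i$ almost surely reaches $S_i$ from any state of the EC $\Game$, the number $T_i$ of steps spent in part \a\ has expectation uniformly bounded by some constant $T$ independent of $A$, and its expected payoff contribution is at most $T W$ in absolute value componentwise. Part \b\ plays $g_i$ for exactly $A c_i$ steps; invoking Lemma~\ref{lem:MP:gi} with a target accuracy $\delta > 0$, once $A \cdot \min_j c_j \geq L_0$ the empirical mean payoff of part \b\ is within $\delta$ of $\vec\nu_i$ componentwise except on an event of exponentially small probability, and since per-step weights are bounded by $W$, the expected sum of weights accumulated during part \b\ lies within $\delta A c_i + O(1)$ of $A c_i \vec\nu_i$ in each coordinate.

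Summing over all $k$ stages of a phase and using $\sum_i c_i = b$, the expected length of a phase is at most $A b + k T$ and the expected total payoff per phase is componentwise at least $A b \sum_i x_i \vec\nu_i - \delta A b - O(k T W)$. By the renewal-reward theorem applied to the unichain chain $\Game[g_A]$,
\begin{align*}
    \E{s_0, g_A}{\Game}{\MP} \geq \frac{A b \sum_i x_i \vec\nu_i - \delta A b - O(k T W)}{A b + k T} \ .
\end{align*}
Since $\sum_i x_i \vec\nu_i \geq \vec\nu$ by the SCC decomposition in Section~\ref{sec:SCCs:decomposition}, the right-hand side is componentwise at least $\vec\nu - (\delta + O(1/A))(1, \ldots, 1)$. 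Taking $\delta$ small and then $A_\varepsilon$ large enough so that $\delta + O(1/A_\varepsilon) \leq \varepsilon$ yields the claim for all $A \geq A_\varepsilon$; with care, the dominating constraint is the $O(1/A)$ transient-cost term, giving $A_\varepsilon = O(1/\varepsilon)$.

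The main obstacle will be carrying out the renewal argument rigorously despite the random length of part \a, and simultaneously controlling the Hoeffding-style fluctuation inside each recurrent portion so that both error sources are driven below $\varepsilon$ by a single choice of $A$. The key algebraic identity $c_i = b x_i$ ensures that the weighted average $\sum_i (c_i / b) \vec\nu_i$ recovers the target expectation $\sum_i x_i \vec\nu_i \geq \vec\nu$, while Lemma~\ref{lem:de-randomization:strategy} guarantees that each local strategy $g_i$ realizes the per-SCC expectation $\vec\nu_i$ used in the stage-wise analysis.
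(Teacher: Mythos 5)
Your proof is correct and takes essentially the same route as the paper's: a per-phase accounting that bounds the transient reaching parts by a constant expected hitting time, controls the parts where $g_i$ is played via the Hoeffding-style bound of Lemma~\ref{lem:MP:gi}, and recombines using $c_i = b\cdot x_i$ so that the phase average approaches $\sum_i x_i \vec\nu_i \geq \vec\nu$, with the two error sources ($\delta$ from the fluctuation bound and $O(1/A)$ from the transients) driven below $\varepsilon$. The only cosmetic difference is that you invoke unichain-ness and renewal-reward to pass from the per-phase bound to the expected mean payoff, whereas the paper bounds $\E {s_0, g_A} \Game {\MP_n}$ directly over prefixes spanning many phases.
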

\begin{proof}	

	We begin by analysing the expected mean payoff of $g_A$ over a single phase.
	Let $\vec e$ be the expected mean payoff of a single phase.
	We prove that for every $\varepsilon > 0$ there exists $A$ large enough s.t.
	strategy $g_A$ achieves at least expected mean payoff $\vec \nu - \vec\varepsilon$ over a single phase.
	
	Let $l$ be an upper bound on the expected length of periods of type \a.
	That a finite such $l$ exists can be seen as follows.
	Formally, let $H_{s, i}$ be the random variable that returns the \emph{first hitting time} of the set $S_i$ when starting from $s \in S$,
	that is the number of steps to reach $S_i$.
	%
	Let $l_i = \max_{s \in S} \E {s, h_i} \Game {H_{s, i}}$ be the worst expected first hitting time of $S_i$ from any state in $S$
	when playing according to the memoryless strategy $h_i$ (which reaches $S_i$ almost surely),
	and take $l = \sum_{i=1}^k l_i$.
	By the definition of $h_i$ and standard results about hitting times,
	the value $l$ is finite.
	
	The length of a period of type \b~at stage $i$ is $A \cdot c_i$,
	thus the total length of periods of type \b~over a single phase is $\sum_{i=1}^k A \cdot c_i = A \cdot b$.
	Since $l$ (computed above) is the expected length of periods of type \a~over a single phase,
	the expected length of a single phase is at most $A \cdot b + l$.
	
	Let $\vec e_{\b, i}$ be the expected mean payoff of a period of type \b~at stage $i$.
	Apply Lemma~\ref{lem:MP:gi} with $\delta := \varepsilon / 8 > 0$,
	and let $L_0$ as given by the lemma.
	Thus, for every component $S_i$,
	if we play $g_i$ for time $L \geq L_0$,
	we have the following lower bound on $\vec e_{\b, i}$:
	\begin{align*}
		\vec e_{\b, i} \geq (1 - \G L {\varepsilon/8}) \cdot (\vec\nu_i - \vec\varepsilon/8) +
			\G L {\varepsilon/8} \cdot (- \vec W)
	\end{align*}
	where $W$ is the largest absolute value of weights in $\Game$
	and $\vec W = (W, \dots, W) \in \R^d$.
	Moreover, since $\G L {\varepsilon/8} \to 0$ as $L \to \infty$,
	there exists an $L_* \geq L_0$ s.t., for every $L \geq L_*$,
	$(1 - \G L {\varepsilon/8}) \cdot (\vec\nu_i - \vec\varepsilon/8) + \G L {\varepsilon/8} \cdot (- \vec W)
	\geq \vec\nu_i - \vec\varepsilon/4$, and thus
	\begin{align}
		\label{ineq:e:b:i}
		\vec e_{\b, i} \geq \vec\nu_i - \vec\varepsilon/4
	\end{align}
	for every $L \geq L_*$.
	We derive a precise bound for $L_*$.
	\begin{align*}
		&
		(1 - \G L {\varepsilon/8}) \cdot (\vec\nu_i - \vec\varepsilon/8) + \G L {\varepsilon/8} \cdot (- \vec W)
			\geq \vec\nu_i - \vec\varepsilon/4
		\\ \textrm{if } \quad&
		\vec\nu_i - \vec\varepsilon/8 - \G L {\varepsilon/8} \cdot (\vec\nu_i - \vec\varepsilon/8 + \vec W)
			\geq \vec\nu_i - \vec\varepsilon/4
		\\ \textrm{if } \quad&
		\vec\varepsilon/8 - \G L {\varepsilon/8} \cdot (\vec\nu_i - \vec\varepsilon/8 + \vec W)
			\geq \vec 0
		\\ \textrm{if } \quad&
		\varepsilon/8 - \G L {\varepsilon/8} \cdot (\nu_i^\textrm{max} - \varepsilon/8 + W)
			\geq 0
		\\ \textrm{if } \quad&
		\G L {\varepsilon/8}
			\leq \frac \varepsilon {8\nu_i^\textrm{max} - \varepsilon + 8W}
		\\ \textrm{if } \quad&
		a \cdot 2^d \cdot e^{-b \cdot L \cdot \varepsilon^2/64}
			\leq \frac \varepsilon {8\nu_i^\textrm{max} - \varepsilon + 8W}
		\\ \textrm{if } \quad&
		e^{-b \cdot L \cdot \varepsilon^2/64}
			\leq \frac 1 {a \cdot 2^d} \cdot \frac \varepsilon {8\nu_i^\textrm{max} - \varepsilon + 8W}
		\\ \textrm{if } \quad&
		-b \cdot L \cdot \varepsilon^2/64
			\leq \ln \left( \frac 1 {a \cdot 2^d} \cdot \frac \varepsilon {8\nu_i^\textrm{max} - \varepsilon + 8W} \right)
		\\ \textrm{if } \quad&
		L
			\geq \frac {64} {b \cdot \varepsilon^2} \cdot \ln \frac {a \cdot 2^d \cdot (8\nu_i^\textrm{max} - \varepsilon + 8W)} \varepsilon = \\
			& = \frac {64} {b \cdot \varepsilon^2} \left(\ln a + d\cdot\ln 2 + \ln (8\nu_i^\textrm{max} - \varepsilon + 8W) - \ln \varepsilon \right)
	\end{align*}
	where $\nu_i^\textrm{max} = \max \set{\vec\nu_i[1], \dots, \vec\nu_i[d]}$ is the largest component of $\vec\nu_i$
	and $\G L {\varepsilon/8} = a \cdot 2^d \cdot e^{-b \cdot L \cdot \varepsilon^2/64}$.
	Thus, take $L_* := \frac {64} {b \cdot \varepsilon^2} \left(\ln a + d\cdot\ln 2 + \ln (8\nu_i^\textrm{max} - \varepsilon + 8W) - \ln \varepsilon \right)$.
	Notice that $L_* = O(\frac 1 \varepsilon)$ (since $a$ is exponential in $\varepsilon$).
	
	Therefore, we stay in stage $i$ at least $L_*$ number of steps,
	which implies that we should have $A \cdot c_i \geq L_*$
	for every $1 \leq i \leq k$.
	%
	%
	\begin{assumption}
		$A \geq A_0 := \max \set{L_*, L_0} = O(\frac 1 \varepsilon)$.
	\end{assumption}
	By the definition of $L_0$, $A_0$ is exponential in $n$ (the number of states of $\Game$),
	and polynomial in $W$ and $Q$.

	For a period of type \b~in stage $i$, the expected total payoff is $A \cdot c_i \cdot \vec e_{\b, i}$.
	The expected total payoff of all periods of type \a~during a single phase is at least $-\vec W \cdot l$.
	Thus,
	\begin{align*}
		\vec e &\geq
			\frac {A \cdot c_1 \cdot \vec e_{\b, 1} + \cdots + A \cdot c_k \cdot \vec e_{\b, k} -\vec W \cdot l}
				{A \cdot b + l} = \\
		&= \frac { x_1 \cdot \vec e_{\b, 1} + \cdots + x_k \cdot \vec e_{\b, k} } {1 + \frac l {A \cdot b}} -
			\frac {\vec W \cdot l} {A \cdot b + l}
	\end{align*}
 	First, we choose $A$ large enough s.t. (A) $\frac {\vec W \cdot l} {A \cdot b + l} \leq \frac {\vec\varepsilon} 2$:
	\begin{align*}
		\frac {\vec W \cdot l} {A \cdot b + l} \leq \frac {\vec\varepsilon} 2 \quad \textrm{ iff } \quad & 
			\frac {W \cdot l} {A \cdot b + l} \leq \frac {\varepsilon} 2 \\
		\textrm{ iff } \quad & 2 \cdot W \cdot l \leq \varepsilon \cdot (A \cdot b + l) = (\varepsilon \cdot b) \cdot A + \varepsilon \cdot l \\
		\textrm{ iff } \quad & A \geq \frac {l \cdot (2 \cdot W - \varepsilon)} {\varepsilon \cdot b}
	\end{align*}
	\begin{assumption}
		$A \geq A_1 := \frac {l \cdot (2 \cdot W - \varepsilon)} {\varepsilon \cdot b} = O(\frac 1 \varepsilon)$.
	\end{assumption}
	
	When $A \geq A_1$ we also have (B)
	$\vec e_{\b, i} / (1 + \frac l {A \cdot b}) \geq \vec \nu_i - \frac {\vec\varepsilon} 2$,
	for every $1 \leq i \leq k$.
	Indeed,
	\begin{align*}
		\vec e_{\b, i} / \left(1 + \frac l {A \cdot b}\right) \geq \vec \nu_i - \frac {\vec\varepsilon} 2
		\quad \textrm{ if } \quad & \left(\vec\nu_i - \frac {\vec\varepsilon} 4\right) / \left(1 + \frac l {A \cdot b}\right) \geq \vec \nu_i - \frac {\vec\varepsilon} 2 \\
		\quad \textrm{ iff } \quad & \left(\vec\nu_i - \frac {\vec\varepsilon} 4\right) \cdot A \cdot b \geq \left(\vec \nu_i - \frac {\vec\varepsilon} 2\right) \cdot (A \cdot b + l) \\
		\quad \textrm{ iff } \quad & \frac {\vec\varepsilon} 2 \cdot A \cdot b \geq l \cdot \left(\vec \nu_i - \frac {\vec\varepsilon} 2\right) \\
		\quad \textrm{ iff } \quad & \forall j \quad \frac \varepsilon 2 \cdot A \cdot b \geq l \cdot \left(\vec \nu_i[j] - \frac \varepsilon 2\right) \\
		\quad \textrm{ iff } \quad & \forall j \quad A \geq \frac {l \cdot \left(2\vec \nu_i[j] - \varepsilon \right)} {\varepsilon \cdot b} \\
		\quad \textrm{ if } \quad & A \geq A_1 = \frac {l \cdot (2 \cdot W - \varepsilon)} {\varepsilon \cdot b}
	\end{align*}
	where the first step follows from the inequality in Eq.~\ref{ineq:e:b:i},
	and the last step follows from
	$\frac {l \cdot (2 \cdot W - \varepsilon)} {\varepsilon \cdot b}
	\geq \frac {l \cdot \left(2\vec \nu_i[j] - \varepsilon \right)} {\varepsilon \cdot b}$,
	since, for an achievable vector $\vec\nu_i$,
	it must clearly be the case that $\vec\nu_i[j] \leq W$,
	the largest absolute value of any weight in the game.
	%
	%
	
	Thanks to the two assumptions above,
	we can derive the following bound on $\vec e$:
	\begin{align*}
		\vec e &\geq \frac { x_1 \cdot \vec e_{\b, 1} + \cdots + x_k \cdot \vec e_{\b, k} } {1 + \frac l {A \cdot b}} -
			\frac {\vec W \cdot l} {A \cdot b + l} \\
			&\geq \frac { x_1 \cdot \vec e_{\b, 1} + \cdots + x_k \cdot \vec e_{\b, k} } {1 + \frac l {A \cdot b}} - \frac {\vec\varepsilon} 2 \\
			&\geq x_1 \cdot \left(\vec \nu_1 - \frac{\vec\varepsilon} 2 \right) + \cdots + x_k \cdot \left(\vec \nu_k - \frac{\vec\varepsilon} 2 \right) - \frac {\vec\varepsilon} 2 \\
			&= x_1 \cdot \vec \nu_1 + \cdots + x_k \cdot \vec \nu_k - \vec\varepsilon \\
			&= \vec \nu - \vec\varepsilon
	\end{align*}
	where the second inequality follows from (A),
	and the third inequality follows from (B).
	We also use the property that $\sum_{i=1}^k x_i = 1$
	and $x_1 \cdot \vec \nu_1 + \cdots + x_k \cdot \vec \nu_k = \vec \nu$.

	Now that we have a lower bound on the expected mean payoff over a single phase,
	we can show that the expected mean payoff of $g_A$ over longer and longer prefixes of an infinite run (thus spanning many phases)
	converges to $\vec e$, from which the lemma follows.
	Notice that the expected mean payoff over $m$ phases is simply $m \cdot \vec e$.
	Let $\vec e_n$ be the expected mean payoff of $g_A$ in the first $n$ steps.
	Since in $n$ steps there are an expected number of $\left\lfloor \frac n {A\cdot b + l} \right\rfloor$ phases
	of expected length $A\cdot b + l$ each,
	we obtain
	\begin{align*}
		\vec e_n &\geq \frac {\left\lfloor \frac n {A\cdot b + l} \right\rfloor \cdot (A\cdot b + l) \cdot \vec e + \left(n-\left\lfloor \frac n {A\cdot b + l} \right\rfloor \cdot (A\cdot b + l)\right) \cdot (- \vec W)} n \\
		&\geq \frac {\left\lfloor \frac n {A\cdot b + l} \right\rfloor \cdot (A\cdot b + l) \cdot \vec e - (A\cdot b + l) \cdot \vec W} n \\
		&\geq \frac {(n - (A\cdot b + l)) \cdot \vec e - (A\cdot b + l) \cdot \vec W} n
	\end{align*}
	and thus $\liminf_n \vec e_n = \vec e$.
	To conclude, take $A_\varepsilon := \max \set{A_0, A_1} = O(\frac 1 \varepsilon)$,
	and the claim is satisfied for any $A \geq A_\varepsilon$.	
\end{proof}

\begin{lemma}
	\label{lem:gA:unichain}
	For every $A \geq n$, $\Game[g_A]$ is unichain.
\end{lemma}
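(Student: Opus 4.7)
The plan is to prove $\Hame = \Game[g_A]$ is unichain by identifying a canonical configuration $(s^*, m^*)$ that is reachable with positive probability from every state in the reachable part of $\Hame$; this will imply that every BSCC of $\Hame$ contains $(s^*, m^*)$ and hence that there is a unique BSCC. Throughout, I will rely crucially on two facts: the cyclic memory structure of $g_A$, which guarantees that every memory state is visited infinitely often in every play almost surely (because phase (a) terminates a.s. by the construction of $h_i$, and phase (b) has fixed length $A c_i$), and the irreducibility of $\Game_i[g_i]$ on $S_i$ from Lemma~\ref{lem:de-randomization:strategy}.

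First I would fix the anchor: choose an arbitrary state $s^* \in S_1$ and a memory state $m^* = m_{1, b, t^*}$ inside phase (b) of stage 1 for a suitable counter value $t^*$, to be chosen below. By the cyclic structure, every play almost surely visits configurations of the form $(s', m_{1, b, 0})$ infinitely often, where the entry state $s' \in S_1$ depends on the past. Then I would show that starting from any such $(s', m_{1, b, 0})$ one can reach $(s^*, m^*)$ with positive probability during the ensuing phase (b) of stage 1. This is precisely where the hypothesis $A \geq n$ enters: since $\Game_1[g_1]$ is irreducible on $|S_1| \leq n$ states and phase (b) lasts $A c_1 \geq n$ steps, from $s'$ there is a positive-probability path in $\Game_1[g_1]$ that visits $s^*$ at some time $t^* \in \{0, 1, \ldots, A c_1 - 1\}$; at that moment the memory is deterministically $m_{1, b, t^*} = m^*$, so $(s^*, m^*)$ is reached.

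The main obstacle will be the rigid alignment between the deterministically advancing counter in phase (b) and the game-state dynamics: the time $t^*$ at which $s^*$ is first reached depends on the entry state $s'$, so a single $m^*$ may not uniformly work. I would address this by exploiting the slack afforded by $A c_1 \geq n$: since $\Game_1[g_1]$ is irreducible with $|S_1| \leq n$ states, from any $s'$ I can first route to $s^*$ in at most $n-1$ steps and then pad with cycles through $s^*$ to land at $s^*$ at essentially any prescribed time within the phase, with positive probability (handling any residual periodicity by considering successive phases in which the entry state $s'$ itself varies, exploiting the joint choices in the stochastic transitions of $\Game$ and in the randomized edges of $g_1$). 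Once this reachability of $(s^*, m^*)$ is established from every reachable configuration, uniqueness of the BSCC follows immediately: every BSCC is forward-closed and contains $(s^*, m^*)$, hence all BSCCs coincide, and $\Hame$ is unichain.
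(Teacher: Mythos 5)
Your overall strategy is the same as the paper's: the appendix proof also combines the recurrence of each $\Game_i[g_i]$ (Lemma~\ref{lem:de-randomization:strategy}) with the fact that $g_A$ cycles through all stages and that a stay of length at least $n$ suffices to visit, with positive probability, every state of the current component. Your write-up is actually more explicit than the paper's three-line argument: reducing unichain-ness to the existence of a single configuration $(s^*,m^*)$ reachable with positive probability from every reachable configuration is a correct sufficient condition, and your observation that phase \a{} terminates almost surely while phase \b{} has fixed length correctly yields that every play re-enters phase \b{} of stage $1$ infinitely often almost surely.

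However, the step you yourself flag as the main obstacle --- aligning the deterministically advancing counter with the stochastic game state so as to hit one \emph{fixed} $(s^*,m^*)$ --- is not closed by your sketch, and this is precisely the point the paper's own proof silently skips. If $\Game_1[g_1]$ is periodic with period $p>1$, then padding with cycles through $s^*$ only lets you occupy $s^*$ at counter values in a fixed residue class modulo $p$ determined by the entry state $s'$, and the entry state need \emph{not} vary across successive phases: for $k=1$, phase \a{} is instantaneous once inside $S_1$, so the state at counter $0$ of the next phase is exactly the state at the end of the previous one and the residue class is preserved forever. Concretely, if $\Game_1[g_1]$ is a deterministic two-cycle $a \to b \to a$ and $A \cdot c_1$ is even, the product of this cycle with the counter splits into two disjoint closed orbits; if the first entry into $S_1$ (via a random state with both $a$ and $b$ as successors) can land on either $a$ or $b$, both orbits are reachable and the induced chain has two BSCCs. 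So no single $(s^*,m^*)$ works, and your fallback of ``successive phases in which the entry state varies'' has nothing to bite on. This is a genuine gap, but it is inherited from the statement itself: as written, the lemma needs either an aperiodicity assumption on the chains $\Game_i[g_i]$ (or a small randomized perturbation of $g_A$ enforcing it), or the weaker conclusion that all reachable BSCCs carry the same mean-payoff vector, which is all that is used downstream; the paper's proof does not address this either.
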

\begin{proof}
	Each $\Game_i[g_i]$ is recurrent by Lemma~\ref{lem:de-randomization:strategy}.
	By the definition of $g_A$, $\Game[g_A]$ is obtained by staying in $\Game_i[g_i]$ for a certain number of steps
	and then going to $\Game_j[g_j]$ with $j = (i\mod k) + 1$ almost surely.
	If we stay in $\Game_i[g_i]$ for at least $n$ steps, then with positive probability we can visit every state therein.
	Thus, $\Game[g_A]$ is unichain.
\end{proof}

In the following, for every $\varepsilon > 0$,
we denote by $g_\varepsilon$ the strategy $g_{\max \set{n, A_\varepsilon}}$,
where $A_\varepsilon$ is the bound provided by Lemma~\ref{lem:strategy:EC:A}.

\begin{proof}[Proof of Lemma~\ref{lem:strategy:EC:finite-memory:expectation}]
	W.l.o.g. we assume that $\Game$ reduces to a single end-component $S$.
	Let $\vec\nu \in \ExpSolP \Game {s_0}$.
	There exists $\vec\nu^* \in \ExpSol \Game {s_0}$ s.t. $\vec\nu < \vec\nu^*$.
	Let $\varepsilon = \frac 1 2 \cdot \min_{i=1}^d (\vec\nu^*[i] - \vec\nu[i]) > 0$
	be the half of the minimal difference between $\vec\nu^*$ and $\vec\nu$ in any component.
	Take $g := g_\varepsilon$.
	Then, $(\vec \nu^* - \vec\varepsilon) \in \ExpSol \Game {s_0, g}$ by Lemma~\ref{lem:strategy:EC:A}.
	By the choice of $\varepsilon$, $\vec \nu < \vec \nu^* - \vec\varepsilon$,
	and thus $\vec \nu \in \ExpSolP \Game {s_0, g}$.
	Finally, $\Game[g]$ is unichain by Lemma~\ref{lem:gA:unichain}.
	%
\end{proof}

The next lemma will be used later to derive a bound on the number of steps $K$ that strategy $g_\varepsilon$ should be played for
in order to have a mean payoff close to $\vec \nu$ with high probability.
\begin{lemma}
	\label{lem:strategy:EC:bound:A:K}
	For any $\varepsilon > 0$,
	there exist $K_0 \in \N$ and constants $a, b > 0$ s.t.,
	for any $K \geq K_0$ and state $s_0$,
	\begin{align*}
		\Prob {s_0, g_{\varepsilon/2}} {\Game}
			{ \exists (1 \leq j \leq d) \cdot | (\MP_K - \vec\nu)[j] | \geq \varepsilon}
			\leq \G K {\varepsilon/2} := 2^d \cdot a \cdot e^{-b \cdot K \cdot \varepsilon^2/4} \ .
	\end{align*}
	%
\end{lemma}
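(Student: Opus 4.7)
The plan is to reduce the claim to an application of the multidimensional Hoeffding bound of Lemma~\ref{lem:Hoeffding:multidimensional} to the finite Markov chain $\Game[g_{\varepsilon/2}]$ and then to transfer the concentration around the chain's own expected mean payoff $\vec\nu' := \E{s_0, g_{\varepsilon/2}} \Game \MP$ into concentration around the target $\vec\nu$ via the triangle inequality. By Lemma~\ref{lem:gA:unichain} (the constraint $A \geq n$ is baked into the definition of $g_\varepsilon$), the induced chain $\Game[g_{\varepsilon/2}]$ is unichain, so $\vec\nu'$ is well-defined and coincides almost surely with $\MP$ inside the unique BSCC, independently of the starting state.

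The first step is to establish the two-sided bound $|\vec\nu' - \vec\nu| \leq \vec\varepsilon/2$ componentwise. Lemma~\ref{lem:strategy:EC:A}, invoked with approximation parameter $\varepsilon/2$, already supplies $\vec\nu' \geq \vec\nu - \vec\varepsilon/2$. The matching upper bound $\vec\nu' \leq \vec\nu + \vec\varepsilon/2$ is obtained by mirroring the same calculation: Lemma~\ref{lem:MP:gi} is genuinely two-sided and gives $\vec e_{\b,i} \leq \vec\nu_i + \vec\varepsilon/4$ alongside the lower bound actually used there, while the transient-cost term contributed by the $h_i$-segments of $g_A$ is bounded in absolute value by $Wl/(Ab+l)$, which is at most $\varepsilon/2$ as soon as $A$ exceeds the threshold $A_1$ already chosen in that proof. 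Hence the same $A_\varepsilon = O(1/\varepsilon)$ works for both bounds.

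Next, I apply Lemma~\ref{lem:Hoeffding:multidimensional} to the unichain Markov chain $\Game[g_{\varepsilon/2}]$ with deviation parameter $\delta := \varepsilon/2$. This furnishes constants $a, b > 0$ and a threshold $K_0 = O(1/\varepsilon^2)$ such that, for every $K \geq K_0$ and every state $s_0$, the probability that some component of $\MP_K - \vec\nu'$ has absolute value at least $\varepsilon/2$ is bounded above by $2^d \cdot a \cdot e^{-b K \varepsilon^2/4}$. To conclude, note that the triangle inequality, combined with $|\vec\nu' - \vec\nu| \leq \vec\varepsilon/2$, entails that the event $|(\MP_K - \vec\nu)[j]| \geq \varepsilon$ forces $|(\MP_K - \vec\nu')[j]| \geq \varepsilon/2$ for the same index $j$. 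Consequently the same exponential upper bound applies to the event in the statement, which is exactly $\G K {\varepsilon/2}$ as required.

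The only delicate point is the upper bound $\vec\nu' \leq \vec\nu + \vec\varepsilon/2$, which is not stated explicitly in Lemma~\ref{lem:strategy:EC:A} (only the lower bound appears). Since Lemma~\ref{lem:MP:gi} is itself two-sided and the transient cost is symmetric, this is a bookkeeping extension of the earlier computation rather than a genuine obstacle; nevertheless, it is the one nontrivial step that has to be spelled out carefully before the Hoeffding-plus-triangle argument goes through.
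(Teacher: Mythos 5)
Your proof is correct and follows essentially the same route as the paper's: apply the multidimensional Hoeffding bound (Lemma~\ref{lem:Hoeffding:multidimensional}) to the unichain chain $\Game[g_{\varepsilon/2}]$ around its own expectation, then transfer to $\vec\nu$ via the triangle inequality using a componentwise bound of $\varepsilon/2$ on the gap between the two expectation vectors. If anything you are more careful than the paper, which simply asserts $\vec\nu - \vec\varepsilon/2 \leq \vec\nu^* \leq \vec\nu$ ``by the choice of $g_{\varepsilon/2}$'' without separately justifying the upper bound, whereas you correctly identify that only the lower bound is explicit in Lemma~\ref{lem:strategy:EC:A} and sketch how the mirrored two-sided computation supplies the other direction.
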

%

\begin{proof}
	Fix an error $\varepsilon > 0$,
	and let $\vec \nu^* = \E {s_0, g_{\varepsilon/2}} \Game {\MP}$ the expected mean payoff when playing according to $g_{\varepsilon/2}$.
	By the choice of $g_{\varepsilon/2}$, $\vec\nu - \varepsilon/2 \leq \vec \nu^* \leq \vec\nu$.
	%
%
	%
	%
	%
	Thus, $| (\MP_K - \vec\nu)[j] | \geq \varepsilon$
	implies $| (\MP_K - \vec\nu^*)[j] | \geq \varepsilon/2$,
	and we have
	\begin{align}
		\label{eq:epsilon_inequality}
		\Prob {s_0, g_{\varepsilon/2}} {\Game}
			{ \exists (1 \leq j \leq d) \cdot | (\MP_K - \vec\nu)[j] | \geq \varepsilon }
		\leq
		\Prob {s_0, g_{\varepsilon/2}} {\Game}
			{ \exists (1 \leq j \leq d) \cdot | (\MP_K - \vec\nu^*)[j] | \geq \varepsilon/2 } \ .
	\end{align}
	Since $\Game[g_{\varepsilon/2}]$ is unichain by Lemma~\ref{lem:gA:unichain},
	we can apply Lemma~\ref{lem:Hoeffding:multidimensional},
	and, thus, there exist $K_0$ and constants $a, b > 0$ s.t.,
	for every state $s$ and $K \geq K_0$,
	\begin{align*}
		\Prob {s, g_{\varepsilon/2}} \Game
		{ \exists (1 \leq j \leq d) \cdot | \MP_K[j] - \vec\nu^*[j] | \geq \varepsilon/2} 
		\leq \G K {\varepsilon/2} \ ,
	\end{align*}
	from which the claim follows by Eq.~\ref{eq:epsilon_inequality}.
%
\end{proof}

\section{Beyond worst-case synthesis}

\label{app:BWC}

\ignore{

	We can assume w.l.o.g. that $\vec \mu = \vec 0$.
	Indeed, for any component $1 \leq i \leq d$,
	let $\vec \mu[i] = \frac {a_i} {b_i}$, with $a_i, b_i \in \Z$.
	We define a new \MDP $\Game' = \tuple {\Graph', S^0, S^R, R}$,
	where the new multi-weighted graph $\Graph' = \tuple {d, S, E, w'}$ is the same as $\Graph$,
	except that we apply the following affine transformation to the weights:
	For every component $1 \leq i \leq d$,
	$w'(e)[i] := w(e)[i] \cdot b_i - a_i$.
	Then, for every Controller's strategy $f$,
	\begin{align*}
		(\vec \mu, \vec \nu) \in \BWCSolP \Game {s, f} \quad \textrm{ iff } \quad
		(\vec 0, \vec \nu') \in \BWCSolP {\Game'} {s, f}
	\end{align*}
	where $\nu'[i] := \nu[i] \cdot b_i - a_i$ for every component $1 \leq 1 \leq d$.
	This follows immediately from the linearity of the expectation.

}

\subsection{Finite-memory synthesis}

\label{app:BWC:finite-memory}

\ignore{

	\begin{remark}
		\label{rem:order}
		We can assume w.l.o.g. that $\vec \mu < \vec \nu$ (in every component).
		Indeed, if $(\vec \mu; \vec \nu) \in \BWCSolP \Game s$,
		then by definition there exist $\vec\mu' > \vec \mu$ and $\vec\nu' > \vec \nu$
		s.t. $(\vec \mu'; \vec \nu') \in \BWCSolP \Game s$.
		Let $\vec \nu''$ be the component-wise maximum of $\vec \mu'$ and $\vec \nu'$.
		Clearly $\vec \mu < \vec \nu''$, and
		\begin{align*}
			(\vec \mu; \vec \nu) \in \BWCSolP \Game s
				\quad \textrm{ if, and only if, } \quad
			(\vec \mu; \vec \nu'') \in \BWCSolP \Game s
		\end{align*}
		since the expected value of the mean payoff is $> \vec\mu$ by definition.
	\end{remark}
}

In this section, we give full proofs for some statements from Sec.~\ref{sec:finite-memory}.
Fix a game $\Game$ and worst-case threshold $\vec \mu := \vec 0$.

\propFiniteMemWEC*

\begin{proof}
	By Proposition~\ref{prop:EC}, the set of states visited infinitely often by $f$ is an EC $U$ almost surely.
	By contradiction, assume that $U$ is not winning with some positive probability.
	Since $f$ is finite-memory, $\Game[f]$ is finite.
	Since $U$ is visited with positive probability,
	there exists a reachable bottom strongly connected component $B$ in $\Game[f]$ which projected to $\Game$ is a subset of $U$.
	Since $U$ is not winning, there exists a play in $B$ with mean payoff $\not> \vec 0$.
	By prefix independence of the mean-payoff value function,
	there exists a play in $\Game[f]$ with mean payoff $\not> \vec 0$,
	contradicting that $f$ is surely winning.
\end{proof}

\ignore{

\subsubsection{Checking that an EC is winning}

To check whether an EC is winning (i.e., $\vec 0 \in \WCSolP \Game s$)
we need to find a $\mu > 0$ s.t. $\vec\mu \in \WCSol \Game s$.
The existence of such a constant can be established in \coNP by extending the reasoning in
\cite{VelnerChatterjeeDoyenHenzingerRabinovichRaskin:Complexity:ArXiv:2012}.
Since the opponent can play memoryless, we can fix his strategy
and we just need to check the existence of a positive multi-cycle in the corresponding 1-player game.
A \emph{multi-cycle} is a multi-set of simple cycles,
whose total weight is the sum of the weight of the constituent simple cycles multiplied by their multiplicity.
A \emph{positive multi-cycle} a multi-cycle of strictly positive weight in every dimension.
The existence of a positive multi-cycle in a graph $\Graph = \tuple {d, S, E, w}$
can be done by solving the following linear program
(cf. \citeappendix{Kosaraju:Sullivan:STOC:1988,VelnerChatterjeeDoyenHenzingerRabinovichRaskin:Complexity:ArXiv:2012}).
For every edge $(s, t) \in E$, there is a variable $x_{st}$. We want to find the \emph{maximal} $y$ s.t.
\begin{align*}
	&						&& y \geq 0 \\
	&\forall (s, t) \in E	&& x_{st} \geq 0 \\
	&						&& \sum_{(s, t) \in E} x_{st} \geq 1 \\
	&\forall s \in S		&& \sum_{(r, s) \in E} x_{rs} = \sum_{(s, t) \in E} x_{st} \\
	&						&& \sum_{(s, t) \in E} x_{st} \cdot w(s, t) \geq (y, \dots, y)
\end{align*}
There exists a positive multi-cycle if, and only if, the maximal $y^*$ solving the linear program above is $y^* > 0$.

\subsubsection{How to perform the decomposition in maximally WECs}

}

\lemBWCsynthesisWEC*

\begin{remark}
	The statement of the lemma holds even with $h$ a \emph{pure} finite-memory strategy,
	by applying Remark~\ref{rem:pure_strategy} in the construction of strategy $\stratexp{\frac \varepsilon 2}$ below.
\end{remark}

\noindent
The rest of this section is devoted to the proof of Lemma~\ref{lem:BWC:synthesis:WEC}.
To simplify the notation, we assume w.l.o.g. that the MDP $\Game$ reduces to a single WEC $W = S$.
Therefore, $\Game = {\restrict \Game W}$.
In the following, let
\begin{align*}
	\WCSol \Game {s_0, f} &= \setof {\vec \mu \in \R^d} {\forall \pi \in \Outcome {s_0} f \cdot \MP(\pi) \geq \vec \mu} \ , \\
	\WCSol \Game {s_0} &= \bigcup_{f \in \All \Game} \WCSol \Game {s_0, f} \ .
\end{align*}
Recall that $\Outcome {s_0} f$ is the set of plays originating from state $s_0$ which are consistent with strategy $f$.
Since we are in a WEC, $\vec 0 \in \WCSolP \Game {s_0}$,
and thus $\WCSol \Game {s_0}$ contains a vector $\vec \mu > \vec 0$.
Moreover, since $\vec\nu \in \ExpSolP \Game {s_0}$ with $\vec \nu \geq 0$,
we can assume that $\vec \nu \in \ExpSol \Game {s_0}$ with $\vec \nu > 0$.
W.l.o.g. we further assume $\vec \mu < \vec \nu$.
We show that, for every $\delta > 0$ and $\varepsilon > 0$,
there exists a randomized finite-memory strategy $\stratcmb{\delta, \varepsilon}$
satisfying the BWC threshold $\geq(\vec \mu - \vec \delta; \vec \nu - \vec \varepsilon)$.
%
The construction of $\stratcmb{\delta, \varepsilon}$ relies on the existence of the following two strategies:
\begin{itemize}
\item
	Since $\vec \mu$ is guaranteed in the worst case,
	by Lemma~\ref{lem:strategy:worst-case},
	for every $\delta > 0$ there exists
	a pure finite-memory strategy $\stratwc\delta$
	s.t. $(\vec \mu - \vec\delta) \in \WCSol \Game {s, \stratwc\delta}$
	for every state $s$ in the WEC.
\item
	Similarly, since $\vec \nu$ is achievable in expectation from state $s_0$,
	by Lemma~\ref{lem:strategy:EC:finite-memory:expectation},
	for every $\varepsilon > 0$ there exists a randomized finite-memory strategy $\stratexp\varepsilon$
	s.t. $(\vec\nu - \vec\varepsilon) \in \ExpSol \Game {s, \stratexp\varepsilon}$
	for every state $s$ in the WEC.  
\end{itemize}
The strategy $\stratcmb{\delta,\varepsilon}$ is parameterised by two natural numbers $K, L \in \N$,
and it alternates between $\stratexp{\frac \varepsilon 2}$ and $\stratwc{\frac \delta 2}$
over periods of length $K$ and $L$, respectively:
\begin{enumerate}
	\item[\a] Play $\stratexp{\frac \varepsilon 2}$ for $K > 0$ steps
	and record in $\Sum \in \Z^d$ the current sum of the weights since the beginning of the period.
	\item[\b] If $\Sum \geq (\vec \mu - \vec \delta) \cdot K$,
	then go to \a.
	Otherwise, play $\stratwc{\frac \delta 2}$ for $L > 0$ steps, and then go to \a.
\end{enumerate}
\noindent
Every time a new period starts, the memory of the relevant strategy is reset.
It remains to determine the values of the parameters $K$ and $L$ to reach the desired accuracy.
The parameter $K$, which depends on $\varepsilon$,
controls the probability that a period is of type \a{} or \ab,
and thus the quality of the approximation of the expectation objective:
the larger the $K$, the higher the probability that a period is of type \a,
and thus the closer the expectation to $\vec\nu$.
The parameter $L$ (dependent on $K$ and $\delta$) controls the length of the recovery period \ab,
and thus the larger the $L$, the closer the worst case to $\vec\mu$.

We show that we can always choose $L$ s.t. the worst-case objective is satisfied.
\begin{lemma}
	\label{lem:combined:worst-case}
	For every $\delta, \varepsilon > 0$ and $K \in \N$,
	there exists $L \in \N$ s.t. $L = O(\frac 1 \delta)$ and
	$(\vec\mu-\vec\delta) \in \WCSol \Game {s, \stratcmb{\delta,\varepsilon}}$
	for every state $s$ in the EC.
\end{lemma}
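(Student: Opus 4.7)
The plan is to fix a worst-case strategy $\stratwc{\delta/2}$ (pure, finite-memory, provided by Lemma~\ref{lem:strategy:worst-case}) that surely guarantees mean payoff $\geq \vec\mu - \vec\delta/2$ from every state of the WEC, and then to choose $L$ large enough so that the $L$-step recovery phase compensates for any deficit incurred during a preceding failed \a-phase. Since $\stratwc{\delta/2}$ is finite-memory and the MDP is finite, a standard bounded-transient argument gives a constant $C > 0$, independent of $L$, such that every play fragment of length $L$ consistent with $\stratwc{\delta/2}$ has total payoff $\geq (\vec\mu - \vec\delta/2)\, L - C$ componentwise.

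I would then decompose any play $\pi$ consistent with $\stratcmb{\delta,\varepsilon}$ into phases of two kinds: a \emph{successful} period of length $K$ (the check passes), contributing total payoff $\geq (\vec\mu - \vec\delta) K$ by definition of the check; and a \emph{recovery} period of length $K + L$ (a failed \a-phase followed by $L$ steps of $\stratwc{\delta/2}$), contributing at least $-WK + (\vec\mu - \vec\delta/2)\, L - C$, where the $-WK$ term accounts for the worst possible payoff in every coordinate during the failed \a-phase. If only finitely many recovery periods occur along $\pi$, then by prefix-independence of $\MP$ the liminf mean payoff is $\geq \vec\mu - \vec\delta$ immediately. Otherwise, for the per-step rate of a recovery period to itself dominate $\vec\mu - \vec\delta$ in every component, a direct calculation requires
\[
  -WK + (\vec\mu - \vec\delta/2)\, L - C \;\geq\; (\vec\mu - \vec\delta)(K + L),
\]
which rearranges to $(\vec\delta/2)\, L \geq (W + \vec\mu - \vec\delta)\, K + C$. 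Taking $L$ as the componentwise maximum of $2\bigl[(W + \vec\mu - \vec\delta) K + C\bigr]/\vec\delta$ yields $L = O(1/\delta)$ for fixed $K$, $\varepsilon$, and $\Game$.

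The main obstacle I anticipate is passing from the per-period bound to the liminf mean payoff: within a single period of length up to $K + L$, the running partial sum can transiently dip below the target, so a naive time-average argument only yields the inequality \emph{at period boundaries}. I plan to close this gap by showing that $\TP_T(\pi) \geq (\vec\mu - \vec\delta)\, T - O(K + L + C)$ for every prefix length $T$, whence the additive error vanishes under liminf and $\MP(\pi) \geq \vec\mu - \vec\delta$ follows. A secondary subtlety is that $C$ depends on $\stratwc{\delta/2}$ and hence on $\delta$, but as $\stratwc{\delta/2}$ is fixed once at the outset, this $C$ is a true constant from the perspective of choosing $L$, preserving the $O(1/\delta)$ asymptotic.
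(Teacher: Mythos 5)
Your proposal is correct and follows essentially the same route as the paper's proof: the same decomposition into successful $K$-periods and $(K{+}L)$-recovery periods, the same cycle-based lower bound $(\vec\mu-\vec\delta/2)L - C$ on the recovery phase (the paper makes your constant $C$ explicit as $m(W+\cdots)$ with $m$ the product of the memory size of $\stratwc{\delta/2}$ and the number of states), the same inequality solved for $L = O(1/\delta)$, and the same observation that uniformly bounded period length lets the per-period bound pass to the liminf. Your explicit handling of the within-period dip and the componentwise choice of $L$ is, if anything, slightly more careful than the paper's.
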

\begin{proof}
	Let $m$ be the product of the size of the memory of $\stratwc{\frac\delta 2}$ and the number of states in $\Game$,
	and let $\mumin > 0$ be the smallest component of $\vec \mu$.
	W.l.o.g. we assume $\delta < \mumin$,
	since $\WCSol \Game {s, \stratcmb{\delta,\varepsilon}}$ is downward-closed.
	%
	Below, we derive an expression for $L$ for the worst-case objective $\geq \vec \mu-\vec\delta$ to be satisfied.
	Let $\pi$ be any $\stratcmb{\delta,\varepsilon}$-consistent play.
	We decompose $\pi = \rho_0\rho_1 \cdots$ according to periods \a{} or \ab.
	If $\rho_i$ has type \a,
	then $\TP(\rho_i) \geq (\vec \mu - \vec \delta) \cdot K$ directly from the definition of periods of type \a.
	If $\rho_i$ has type \ab,
	then 
	at the end of the \a{} part (of length $K$)
	the sum of weights is at least $-K\cdot W$ in every component.
	(Recall that $W$ is the largest absolute value of any weight in $\Game$.)
	Moreover, during the following \b{} part (of length $L$),
	we have that every time the same memory state of $\stratcmb{\delta,\varepsilon}$ and state of $\Game$ repeats,
	the mean payoff is at least $\vec\mu-\frac {\vec\delta} 2$,
	thus yielding a sum of weights which is at least $- m\cdot W + (L - m) \cdot (\mumin - \frac \delta 2)$ in every component.
	Thus, for every component $0 \leq j < k$, we have
	\begin{align*}
		\TP(\rho_i)[j] \geq -K\cdot W - m\cdot W + (L - m) \cdot \left(\mumin - \frac \delta 2\right) \ .
	\end{align*}
	%
	In order to have $\MP(\pi) \geq \vec\mu - \vec\delta$,
	it suffices to have $\MP(\rho_i) \geq \vec\mu - \vec\delta$
	for each period $i$ since periods are of uniformly bounded length,
	and thus $\TP(\rho_i) \geq (\vec \mu - \vec \delta) \cdot (K + L)$,
	since this length is at most $K+L$.
	It is easy to see that the following choice for $L$ satisfies the constraint above:
	\begin{align}
		\label{eq:L:def}
		L := \left\lceil \frac {2\cdot K \cdot (W + \mumin - \delta) + m \cdot (2 \cdot W + 2 \cdot \mumin - \delta) } \delta \right\rceil
	\end{align}
\end{proof}

\ignore{

%
\begin{lemma}
	\label{lem:combined:worst-case}
	For any $\delta, \varepsilon > 0$ and $K \in \N$,
	there exists $L \in \N$ s.t. 
	$-\vec\delta \in \WCSol \Game {s, \stratcmb{\delta,\varepsilon}}$.
\end{lemma}
\begin{proof}
	Let $m$ be the product of the size of the memory of $\stratwc{\frac\delta 2}$ and the number of states in $\Game$.
	Recall that $W$ is the largest absolute value of any weight in $\Game$.
	In the following, let
	\begin{align}
		L := \left\lfloor \frac {2\cdot K \cdot (W - \delta) + m \cdot (2 \cdot W - \delta) } \delta \right\rfloor + 1
	\end{align}
	We show that, with this choice of $L$, the worst-case objective is satisfied.
	
	Let $\pi$ be any $\stratcmb\delta\varepsilon$-consistent play.
	We decompose $\pi = \rho_0\rho_1 \cdots$ according to periods \a{} or \ab.
	If $\rho_i$ has type \a,
	then $\TP(\rho_i) \geq \vec 0$ directly from the definition of periods of type \a.
	If $\rho_i$ has type \ab,
	then after the \a{} part (of length $K$)
	the sum of weights is at least $-K\cdot W$ in every component.
	Moreover, during the following \b{} part (of length $L$),
	we have that every time the same memory state of $\stratcmb\delta\varepsilon$ and state of $\Game$ repeats,
	the mean payoff is at least $-\frac \delta 2$ in every component,
	thus yielding a sum of weights which is at least $- m\cdot W - (L - m) \cdot \frac \delta 2$ in every component.
	Thus, for every $0 \leq j < k$, we have
	\begin{align}
		\TP(\rho_i)[j] \geq -K\cdot W - m\cdot W - (L - m) \cdot \frac \delta 2 \geq - \delta \cdot (K + L)
	\end{align}
	where the last inequality follows from the definition of $L$.
	Thus, every period has a total payoff of at least $- \delta \cdot (K + L)$.
	In the worst case, the period has length $K + L$, 
	thus the mean payoff of $\pi$ is at least $\MP(\pi) \geq -\vec\delta$.
\end{proof}
}

\noindent
In the following, we consider $L$ as fixed by Eq.~\ref{eq:L:def}.
We show that one can always choose $K$ s.t. the expectation objective is satisfied.
We crucially use the fact that $L$ is linear in $K$,
and that the probability of periods of type $\ab$ can be made negligible for large $K$.
\begin{restatable}{lemma}{combinedexpectation}
	\label{lem:combined:expectation}
	For sufficiently small $\delta, \varepsilon > 0$,
	there exists $K \in \N$ s.t. $K = O(\frac 1 \varepsilon)$,
	and $(\vec\nu - \vec\varepsilon) \in \ExpSol \Game {s, \stratcmb{\delta,\varepsilon}}$
	for every state $s$ in the EC.
\end{restatable}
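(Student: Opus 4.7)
The plan is a renewal-style computation: analyze $\stratcmb{\delta,\varepsilon}$ period by period and show that the expected mean payoff, which equals the ratio of expected per-period total payoff to expected per-period length, converges to $\vec\nu - \vec\varepsilon/2$ as $K$ grows, so that an appropriate choice of $K = O(1/\varepsilon)$ buys the extra slack $\vec\varepsilon/2$.

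First, I would bound the probability $p_K$ that a given period is of type \ab. Since $\stratexp{\varepsilon/2}$ is unichain with expected mean payoff $\geq \vec\nu - \vec\varepsilon/2$ from \emph{every} starting state in the EC (Lemma~\ref{lem:strategy:EC:finite-memory:expectation}), and since for $\delta,\varepsilon$ sufficiently small we have $\vec\mu - \vec\delta < \vec\nu - \vec\varepsilon/2$ componentwise, the failure of the threshold test $\Sum \geq (\vec\mu - \vec\delta) \cdot K$ forces a componentwise deviation of magnitude at least $\eta := \min_j\bigl((\vec\nu - \vec\varepsilon/2)[j] - (\vec\mu - \vec\delta)[j]\bigr) > 0$. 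Lemma~\ref{lem:strategy:EC:bound:A:K} (applied with deviation $\eta$) then yields $p_K \leq \G{K}{\eta}$ for $K$ above the Hoeffding threshold $K_0$, so $p_K$ decays exponentially in $K$ at a rate depending only on $\eta$ and the parameters of the chain, uniformly over the starting state of the period.

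Second, I would lower-bound the expected per-period total payoff and length. A type \a{} period has deterministic length $K$; conditioning on whether it passes the test and using the crude bound $-K$ componentwise on the (small-probability) failing paths, its expected contribution is at least $K(\vec\nu - \vec\varepsilon/2) - p_K K \vec W$. A type \ab{} period has length $K + L$ and, by Lemma~\ref{lem:strategy:worst-case} applied to the recovery phase from any entry state of the EC, total payoff at least $-K \vec W + (L - m)(\vec\mu - \vec\delta/2) - m \vec W$, where $m$ is the memory$\times$state bound from the proof of Lemma~\ref{lem:combined:worst-case}. Because $\stratcmb{\delta,\varepsilon}$ is finite-memory, $\Game[\stratcmb{\delta,\varepsilon}]$ is a finite Markov chain, and the expected mean payoff in each of its BSCCs equals the steady-state ratio of expected per-period payoff to expected per-period length; stitching the two cases together gives the lower bound
\[
    \E{s,\stratcmb{\delta,\varepsilon}}{\Game}{\MP} \;\geq\; \frac{(1 - p_K)\,K\,(\vec\nu - \vec\varepsilon/2) \;+\; p_K\bigl(-K\vec W + (L - m)(\vec\mu - \vec\delta/2) - m\vec W\bigr)}{K + p_K L}.
\]

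Third, I would conclude by letting $K$ grow. Since $L = \Theta(K/\delta)$ by Eq.~\ref{eq:L:def} while $p_K$ decays exponentially in $K$, both $p_K L$ and $p_K K \vec W$ vanish, and the ratio above converges to $\vec\nu - \vec\varepsilon/2$. A direct estimate — bounding each error term by $\vec\varepsilon/4$ — shows that it is enough to take $K$ of order $1/\varepsilon$ above $K_0$, which establishes $(\vec\nu - \vec\varepsilon) \in \ExpSol \Game {s,\stratcmb{\delta,\varepsilon}}$. The main technical obstacle will be the renewal step: justifying that the steady-state per-period ratio genuinely lower-bounds the expected mean payoff in every BSCC of $\Game[\stratcmb{\delta,\varepsilon}]$ despite the fact that the state distribution at the start of each period depends on the previous one. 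This is manageable here because the guarantees of both $\stratexp{\varepsilon/2}$ and $\stratwc{\delta/2}$ hold uniformly from any state of the EC, so the per-period bounds are entry-state-independent and can be combined without tracking the inter-period coupling explicitly.
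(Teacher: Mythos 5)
Your proposal is correct and follows essentially the same route as the paper's proof: the same renewal decomposition of the expected mean payoff into a ratio of expected per-period payoff over expected per-period length, the same Hoeffding-based exponential bound on the probability of a type \emph{(a)+(b)} period, the same crude $-\vec W$-type bounds on the exceptional and recovery segments, and the same limit argument exploiting that $L$ is linear in $K$ while $p_K$ decays exponentially. You even flag the one step the paper dismisses as ``easy to show'' (that the per-period steady-state ratio bounds the overall expected mean payoff) and give the correct reason it goes through, namely the entry-state uniformity of the guarantees of $\stratexp{\varepsilon/2}$ and $\stratwc{\delta/2}$.
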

\begin{proof}

	Let $\vec E(K, s)$ be the expected mean payoff vector in the \MDP $\Game$
	when Controller is playing according to $\stratcmb{\delta,\varepsilon}$, starting from state $s$.
	We prove that there exists a $K = O(\frac 1 \varepsilon)$ s.t.
	\begin{align*}
		\vec E(K, s) \geq \vec \nu - \vec \varepsilon \ .
	\end{align*}
	Let $\vec E_{(a)}(K, s)$ and $\vec E_{(a)+(b)}(K, s)$ be the expected mean payoff of periods of type \a~and \ab, respectively.
	%
	%
	%
	%
	Let $p(K)$ be the probability of having a period of type \ab.
	The expected length of a period is
	$(1-p(K)) \cdot K + p(K) \cdot (K+L)$.
	Similarly, the expected total payoff of a period is
	$(1-p(K)) \cdot \vec E_{(a)}(K, s) \cdot K + p(K) \cdot \vec E_{(a)+(b)}(K, s) \cdot (K + L)$.
	We thus obtain the following expression on the right for the expected mean payoff over one period,
	where the equality to the expected mean payoff over the entire play is easy to show:
	\begin{align*}
		\vec E(K, s) = \frac
			{(1-p(K)) \cdot \vec E_{(a)}(K, s) \cdot K + p(K) \cdot \vec E_{(a)+(b)}(K, s) \cdot (K + L)}
			{(1-p(K)) \cdot K + p(K) \cdot (K+L)}
	\end{align*}
	By dividing by $(1-p(K)) \cdot K$, we obtain the following inequality:
	\begin{align*}
		\vec E(K, s) = \frac
			{\vec E_{(a)}(K, s) + \frac{p(K)}{1-p(K)} \cdot \vec E_{(a)+(b)}(K, s) \cdot \frac{K + L}{K}}
			{1 + \frac{p(K)}{1 - p(K)} \cdot \frac{K+L}{K}}
		\geq \vec\nu - \vec\varepsilon
	\end{align*}
	Since $\vec E_{(a)+(b)}(K, s) \geq \vec \mu - \vec \delta$ by the choice of $L$
	(cf. the proof of Lemma~\ref{lem:combined:worst-case}),
	and $\vec \mu - \vec \delta > \vec 0$ for sufficiently small $\delta$,
	it suffices to find $K$ s.t.
	\begin{align}
		\label{eq:meanpayoff:fcmb}
		\frac	{\vec E_{(a)}(K, s)}
			 	{1 + \frac{p(K)}{1 - p(K)} \cdot \frac{K+L}{K}}
		\geq \vec\nu - \vec\varepsilon
	\end{align}
	From a qualitative point of view,
	when taking the limit for $K \to \infty$ in Eq.~\ref{eq:meanpayoff:fcmb},
	\begin{itemize}
		\item $\frac{K + L}{K}$ tends to a constant, since $L$ is linear in $K$, and
		\item $p(K)$ tends to $0$.
	\end{itemize}
	Thus, $\lim_{K \to \infty} \vec E(K, s) = \lim_{K \to \infty} \vec E_{(a)}(K, s) \geq \vec \nu - \frac {\vec\varepsilon} 2$,
	where the inequality follows from the fact in periods of type \a{}
	we are playing according to $\stratexp{\frac \varepsilon 2}$,
	which guarantees expectation $\geq \vec\nu - \frac {\vec\varepsilon} 2$ in the long run.
	Consequently, for every $\varepsilon > 0$, there exists a $K$ s.t. $\vec E(K, s) \geq \vec \nu - \vec \varepsilon$, as required.
	

\ignore{

	\begin{figure}
	\centering
	\footnotesize
	\begin{tikzpicture}

		\node[playerstate] (s) [] {$s$};
		\node[randomstate] (t) [right = 1cm of s] {$t$};
		
		\path[->] (s) edge node [above] {$0$} (t);
		\path[->] (s) edge [loop above] node {$1$} ();

		\path[->] (t) edge [bend right = 45] node [above] {$\frac 1 3, -30$} (s);
		\path[->] (t) edge [bend left = 45] node [below] {$\frac 2 3, 30$} (s);
		
	\end{tikzpicture}
	\caption{Finite-memory strategies need to approximate both the worst-case and the expectation components of the BWC objective.}
	\label{fig:weakly_winning}	
\end{figure}

	\begin{remark}
		\label{rem:weakly_winning}
		Notice that the finite-memory strategy $\stratcmb{\delta, \varepsilon}$ that we have constructed
		achieves only BWC the threshold $(\vec\mu-\vec\delta; \vec\nu-\vec\epsilon)$,
		where both the worst-case and the expectation objectives are under-approximated.
		In general, this is the best that a finite-memory strategy can do in a WEC.
		Consider the example in Fig.~\ref{fig:weakly_winning},
		where the optimal worst-case is $1$ and expectation is $5$.
		Thus, with finite memory, one can achieve the BWC objective $(1-\varepsilon; 5-\varepsilon)$ for every $\varepsilon > 0$,
		and the trivial $(1; 1)$.
	
		However, neither $(1; 1+\varepsilon)$ nor $(-15+\varepsilon; 5)$ are achievable with finite memory, for any $\varepsilon > 0$.
		Indeed, any finite memory strategy plays edge $(s, t)$ with a fixed long-run frequency $0 \leq \alpha \leq 1$,
		and thus the BWC mean payoff is 
		\begin{align*}
			(\alpha\cdot\frac {0 - 30} 2 + (1-\alpha) \cdot 1; \alpha \cdot 5 + (1-\alpha) \cdot 1) = (-16\alpha + 1; 4\alpha + 1)
		\end{align*}
		To see why the threshold $(1; 1+\varepsilon)$ is not achievable,
		notice that $-16\cdot\alpha + 1 \geq 1$ only if $\alpha = 0$, for which the expectation is just $1$.
		This shows that in order to gain anything for the expectation, you need to sacrifice the worst-case (if only finite memory is allowed).
		To see why threshold $(-15+\varepsilon; 5)$ is not achievable,
		notice that $4\cdot\alpha + 1 \geq 5$ only if $\alpha = 1$, for which the worst case is $-15$.
		This shows that with finite memory optimal expectation is not reachable in general, even by losing as much as $(-16+\varepsilon)$ on the worst-case.
			%
			%

			%
	\end{remark}

}


	The rest of the proof shows that $K$ can be taken to be $O(\frac 1 \varepsilon)$.
	We proceed by an Hoeffding-style analysis.
	%
	%
	The following assumption can be satisfied for sufficiently small $\delta, \varepsilon > 0$,
	since $\vec \mu < \vec \nu$ was assumed throughout this section. 
	\begin{assumption}
		$\vec\nu - \vec\varepsilon \geq \vec\mu - \vec\delta \geq \vec 0$.
	\end{assumption}

	Lemma~\ref{lem:strategy:EC:bound:A:K} provides a constant $K_0$ and an upper bound $\G K {\frac \varepsilon 4}$
	on the probability that the mean payoff deviates from $\vec\nu$ by more than $\frac \varepsilon 2$ in any component
	when playing according to $\stratexp{\frac \varepsilon 2}$ for every $K \geq K_0$.
	Thus, $1 - \G K {\frac \varepsilon 4}$
	is a lower bound on the probability that the mean payoff in any component deviates from $\vec\nu$ by less than $\frac \varepsilon 2$.
	\begin{assumption}
		Let $K \geq K_0$.
	\end{assumption}
	We get the following lower bound on the expected mean payoff $\vec E_{(a)}(K, s)$ of a period of type \a:
	\begin{align*}
		\vec E_{(a)}(K, s) \geq (1 - \G K {\frac \varepsilon 4}) \cdot \left(\vec\nu-\frac {\vec\varepsilon} 2\right) +
		\G K {\frac \varepsilon 4} \cdot \vec\nu_0 
	\end{align*}
	where $\vec\nu_0 \geq \vec 0$ is a lower bound on the mean payoff of any period of type \a,
	Therefore, we obtain the following simpler bound:
	\begin{align}
		\label{eq:bound:Ea}
		\vec E_{(a)}(K, s) \geq (1 - \G K {\frac \varepsilon 4}) \cdot \left(\vec\nu-\frac {\vec\varepsilon} 2\right)
	\end{align}

	Moreover, we can use Lemma~\ref{lem:strategy:EC:bound:A:K} also to provide a bound on $p(K)$,
	the probability of a period of type \ab.
	If a period is of type \ab,
	then there exists a component $1 \leq j \leq d$ s.t.
	$\TP_K[j] \leq 0$,
	and thus $\MP_K[j] \leq 0$.
	Since the strategy $\stratexp{\frac \varepsilon 2}$ achieves expectation $\geq \vec\nu - \frac {\vec\varepsilon} 2 \geq \vec 0$,
	we have $|(\MP_K - \vec\nu)[j]| \geq \frac \varepsilon 2$ for some $1 \leq j \leq d$.
	The latter event happens with probability at most $\G K {\frac {\vec\varepsilon} 4}$ by Lemma~\ref{lem:strategy:EC:bound:A:K}.
	We thus obtain the following bound:
	\begin{align}
		\label{eq:bound:pK}
		p(K) \leq \G K {\frac \varepsilon 4}
	\end{align}

	By using Eq.~\ref{eq:bound:Ea} and \ref{eq:bound:pK} in Eq.~\ref{eq:meanpayoff:fcmb},
	it suffices to find a $K$ s.t.
	\begin{align*}
		\frac
			{(1 - \G K {\frac {\vec\varepsilon} 4}) \cdot (\vec\nu - \frac {\vec\varepsilon} 2)}
			{1 + \frac{\G K {\frac {\vec\varepsilon} 4}}{1 - \G K {\frac {\vec\varepsilon} 4}} \cdot \frac{K+L}{K}}
		\geq \vec\nu - \vec\varepsilon
	\end{align*}

	Let $\gamma$ be an upper bound on
	$\frac{\G K {\frac \varepsilon 4}}{1 - \G K {\frac \varepsilon 4}} \cdot \frac{K+L}{K}$.
	Then, it suffices to show
	\begin{align}
		\nonumber
		&\forall (1 \leq j \leq d) \qquad
		\frac
			{(1 - \G K {\frac \varepsilon 4}) \cdot (\vec\nu[j] - \frac \varepsilon 2)}
			{1 + \gamma}
		\geq \vec\nu[j] - \varepsilon
		\\
		\nonumber
		\textrm{if} \quad &\forall (1 \leq j \leq d) \qquad
		1 - \G K {\frac \varepsilon 4}
		\geq \frac {(\vec\nu[j] - \varepsilon) (1 + \gamma)} {\vec\nu[j] - \frac \varepsilon 2}
		\\
		\nonumber
		\textrm{if} \quad &\forall (1 \leq j \leq d) \qquad
		1 - \G K {\frac \varepsilon 4}
		\geq \frac {\vec\nu[j] - \varepsilon + \gamma \cdot (\vec\nu[j] - \varepsilon)} {\vec\nu[j] - \frac \varepsilon 2}
		\\
		\label{eq:second:bound}
		\textrm{if} \quad &\forall (1 \leq j \leq d) \qquad
		\G K {\frac \varepsilon 4}
		\leq \frac {\frac \varepsilon 2 - \gamma\cdot(\vec\nu[j] - \varepsilon)} {\vec\nu[j] - \frac \varepsilon 2}
	\end{align}
	Let $\numax = \max \set{\vec\nu[1], \dots, \vec\nu[d]}$ be the maximal component of $\vec\nu$.
	We show that with the following value of $\gamma$ we can satisfy all objectives.
	\begin{assumption}
		$\gamma := \frac 1 4 \cdot \frac \varepsilon {\numax - \varepsilon}$.
	\end{assumption}
	We now plug in the value for $\gamma$ into Eq.~\ref{eq:second:bound}, to obtain
	$\G K {\varepsilon / 4} \leq \frac {\frac \varepsilon 2 - \frac 1 4 \cdot \frac \varepsilon {\numax - \varepsilon} \cdot(\numax - \varepsilon)} {\numax - \frac \varepsilon 2}$,
	yielding the following bound on $\G K {\varepsilon / 4}$:
	\begin{align}
		\label{eq:master:bound:one}
		\G K {\varepsilon / 4} \leq \frac 1 2 \cdot \frac \varepsilon {2\numax - \varepsilon}
	\end{align}
	Before solving the inequality above for $K$,
	we go back one step discuss the constraint derived from the definition of $\gamma$.
	Recall that $\gamma$ should satisfy
	$\frac{\G K {\varepsilon / 4}}{1 - \G K {\varepsilon / 4}} \cdot \frac{K+L}{K} \leq \gamma$.
	By replacing $\gamma$ with its definition $\gamma := \frac 1 4 \cdot \frac \varepsilon {\numax - \varepsilon}$
	in the latter inequality,
	we obtain the following inequality:
	\begin{align}
		\frac{\G K {\frac \varepsilon 4}}{1 - \G K {\frac \varepsilon 4}} \cdot \frac{K+L}{K} \leq \frac 1 4 \cdot \frac \varepsilon {\numax - \varepsilon}
	\end{align}
	By the definition,
	$L = \left\lceil \frac {2\cdot K \cdot (W + \mumin - \delta) + m \cdot (2 \cdot W + 2 \cdot \mumin - \delta) } \delta \right\rceil$,
	and thus
	$L \leq \left\lceil \frac {2\cdot K \cdot (W + \mumin) + m \cdot (2 \cdot W + 2 \cdot \mumin)} \delta \right\rceil \leq \frac {2 \cdot (W + \mumin) \cdot (K + m)} \delta + 1$.
	Thus, $\frac {K + L} K \leq \frac {K + \frac {2 \cdot (W + \mumin) \cdot (K + m)} \delta + 1} K =
	1 + \frac {2(W+\mumin)} \delta + \frac 1 K \cdot \frac {2m(W + \mumin) + \delta} \delta$.
	In particular, for $K \geq \frac {2m(W + \mumin) + \delta} \delta$, we have
	$\frac {K + L} K \leq 2 + \frac {2(W+\mumin)} \delta$.
	\begin{assumption}
		$K \geq K_1 := \max \set {K_0, \frac {2m(W + \mumin) + \delta} \delta}$.
	\end{assumption}
	Thus, for $K \geq K_1$, it suffices to satisfy the following inequality:
	\begin{align}
		\frac{\G K {\frac \varepsilon 4}}{1 - \G K {\frac \varepsilon 4}} \cdot \left( 2 + \frac {2(W+\mumin)} \delta \right) \leq \frac 1 4 \cdot \frac \varepsilon {\numax - \varepsilon}
	\end{align}
	We perform the following algebraic manipulations:
	\begin{align}
		\notag
		&&\frac{\G K {\frac \varepsilon 4}}{1 - \G K {\frac \varepsilon 4}} \cdot \frac 2 \delta \cdot (W + \mumin + \delta)
			& \leq \frac 1 4 \cdot \frac \varepsilon {\numax - \varepsilon}
		\\
		\notag
		\textrm{ if } \qquad &&\G K {\frac \varepsilon 4}
			& \leq \frac 1 8 \cdot \frac \delta {W + \mumin + \delta} \cdot \frac \varepsilon {\numax - \varepsilon} \cdot (1 - \G K {\frac \varepsilon 4})
		\\
		\notag		
		\textrm{ if } \qquad &&\G K {\frac \varepsilon 4} \left(1 + \frac 1 8 \cdot \frac \delta {W + \mumin + \delta} \cdot \frac \varepsilon {\numax - \varepsilon} \right)
				& \leq \frac 1 8 \cdot \frac \delta {W + \mumin + \delta} \cdot \frac \varepsilon {\numax - \varepsilon}
		\\
		\notag
		\textrm{ if } \qquad &&\G K {\frac \varepsilon 4} \left(8 (W + \mumin + \delta)(\numax - \varepsilon) + \delta\varepsilon \right)
				& \leq \delta\varepsilon
		\\
		\notag
		\textrm{ if } \qquad &&\G K {\frac \varepsilon 4}
				& \leq \frac {\delta \varepsilon} {8 (W + \mumin + \delta)(\numax - \varepsilon) + \delta\varepsilon}
		\\
		\notag
		\textrm{ if } \qquad &&\G K {\frac \varepsilon 4}
				& \leq \frac {\delta \varepsilon} { (8\mumin + 8W + 8\delta) \cdot \numax - (8\mumin + 8W + 7\delta) \cdot \varepsilon}
		\\
		\label{eq:master:bound:two}
		\textrm{ if } \qquad &&\G K {\frac \varepsilon 4}
				& \leq \frac {\varepsilon} { (\frac {8\mumin} \delta + \frac {8W} \delta + 8) \cdot \numax - (\frac {8\mumin} \delta + \frac {8W} \delta + 7) \cdot \varepsilon}
	\end{align}
	We now compare the bounds on $\G K {\frac \varepsilon 4}$ given by Eq.~\ref{eq:master:bound:one} and \ref{eq:master:bound:two},
	and we show that, for sufficiently small $\varepsilon$,
	the latter bound implies the former.
	Therefore, we seek for $\varepsilon$ s.t.
	\begin{align*}
		&&\frac {\varepsilon} { (\frac {8\mumin} \delta + \frac {8W} \delta + 8) \cdot \numax - (\frac {8\mumin} \delta + \frac {8W} \delta + 7) \cdot \varepsilon}
		& \leq \frac 1 2 \cdot \frac \varepsilon {2\numax - \varepsilon}
		\\
		\textrm{ if } \qquad && \left(\frac {8\mumin} \delta + \frac {8W} \delta + 8\right) \cdot \numax - \left(\frac {8\mumin} \delta + \frac {8W} \delta + 7\right) \cdot \varepsilon
		& \geq 4 \cdot \numax - 2 \cdot \varepsilon
		\\
		\textrm{ if } \qquad && \left(\frac {8\mumin} \delta + \frac {8W} \delta + 4\right) \cdot \numax
		& \geq \left(\frac {8\mumin} \delta + \frac {8W} \delta + 5\right)\cdot\varepsilon
		\\
		\textrm{ if } \qquad && \varepsilon \leq \frac {8(\mumin + W) + 4\delta} {8(\mumin + W) + 5\delta} \cdot \numax
	\end{align*}
	%
	%
	\begin{assumption}
		$\varepsilon \leq \varepsilon_0 := \frac {8(\mumin + W) + 4\delta} {8(\mumin + W) + 5\delta} \cdot \numax$.
	\end{assumption}
	Therefore, for sufficiently small $\varepsilon \leq \varepsilon_0 < \numax$ it suffices to solve Eq.~\ref{eq:master:bound:two}.
	Recall that $\G K {\frac \varepsilon 4} = a \cdot 2^d \cdot e^{-b \cdot K \cdot \varepsilon^2/16}$.
	Thus, we seek for $K$ s.t.
	\begin{align}
		\notag
		&& a \cdot 2^d \cdot e^{-b \cdot K \cdot \varepsilon^2/16}
		&\leq \frac {\varepsilon} { (\frac {8\mumin} \delta + \frac {8W} \delta + 8) \cdot \numax - (\frac {8\mumin} \delta + \frac {8W} \delta + 7) \cdot \varepsilon}
		\\
		\notag
		\textrm{ if } \qquad && e^{-b \cdot K \cdot \varepsilon^2/16}
		&\leq \frac 1 {a \cdot 2^d} \cdot \frac {\varepsilon} { (\frac {8\mumin} \delta + \frac {8W} \delta + 8) \cdot \numax - (\frac {8\mumin} \delta + \frac {8W} \delta + 7) \cdot \varepsilon}
		\\
		\notag
		\textrm{ if } \qquad && -b \cdot K \cdot \varepsilon^2/16
		&\leq \ln \left(\frac 1 {a \cdot 2^d} \cdot \frac {\varepsilon} { (\frac {8\mumin} \delta + \frac {8W} \delta + 8) \cdot \numax - (\frac {8\mumin} \delta + \frac {8W} \delta + 7) \cdot \varepsilon}\right)
		\\
		\notag
		\textrm{ if } \qquad && K
		&\geq \frac {16} {b \cdot \varepsilon^2} \cdot \ln \left(a \cdot 2^d \cdot \frac {(\frac {8\mumin} \delta + \frac {8W} \delta + 8) \cdot \numax - (\frac {8\mumin} \delta + \frac {8W} \delta + 7) \cdot \varepsilon} \varepsilon \right)
		\\
		\notag
		\textrm{ if } \qquad && K
		&\geq \frac {16} {b \cdot \varepsilon^2} \cdot \ln \left(a \cdot 2^d \cdot \frac {(\frac {8\mumin} \delta + \frac {8W} \delta + 8) \cdot \numax} \varepsilon \right)
		\\
		\label{eq:bound:K}
		\textrm{ if } \qquad && K
		&\geq \frac {16} {b \cdot \varepsilon^2} \cdot \left( \ln a + d \ln 2 + \ln \left(\frac {8\mumin} \delta + \frac {8W} \delta + 8\right) + \ln \numax - \ln \varepsilon \right) =: K_2
	\end{align}
	The constant $K_2$ is of the form $\frac \alpha {\varepsilon^2}$,
	with $\alpha$ linear in $\varepsilon$ and $d$ (recall that $a$ is exponential in $\varepsilon$),
	and polynomial in the characteristics of the Markov chain.
	Thus, $K = O(\frac 1 \varepsilon)$.
\end{proof}

\subsection{Infinite-memory synthesis}

\subsubsection{Inside an EC}

We complete the proof of Lemma~\ref{lemma:infinite:memory:exp} by showing the two claims.
The first claim relies on Lemma~3.9 in \citeappendix{BrazdilBrozekEtessamiKucera:IC2013},
while the second claim relies on Lemma~\ref{lem:strategy:EC:bound:A:K}.

\claimUpperBoundA*

\begin{proof}
	Recall that $p_{i, K}$ is, by definition,
	the probability that the total payoff goes below $\vec N_i$ in any component during phase $i$.
	Since at the beginning of phase $i$ the total payoff is $\TP_{K \cdot i} > 2 \cdot \vec N_i$ by definition,
	$p_{i, K}$ is upper bounded by the probability that the total payoff decreases by at least $\vec N_i$ in any component:
	%
	%
	\begin{align*}
		p_{i, K} \leq \Prob {s_0} {\Game[f_K]}
			{\exists (K \cdot i \leq h < K \cdot (i+1)) \cdot \TP_h - \TP_{K \cdot i} \not> -\vec N_i}
	\end{align*}
	For any state $s$ in the EC,
	and payoff threshold $N \in \N$, 
	let $p_{s, N, K}$ be the probability that the total payoff goes below $-N$ in any component
	when starting from $\vec 0$ within at most $K$ steps:
	\begin{align*}
		p_{s, N, K}		&= \Prob {s} {\Game[f_K]} {\exists (0 \leq h < K) \cdot \exists (1 \leq j \leq d) \cdot \TP_h[j] \leq -N}
	\end{align*}
	Thus, $p_{i, K} \leq \min_{s \in S} p_{s, \min \vec N_i, K}$.
	We prove the claim thanks to the following lemma.
	(It follows from Lemma~3.9 in \citeappendix{BrazdilBrozekEtessamiKucera:IC2013}. For completeness, we provide a proof below.)
	\begin{restatable}{lemma}{lemEtessamiApprox}
		\label{lem:Etessami:approx}
		There exists a rational constant $c < 1$ and an integer $N^* \geq 0$ s.t.,
		for every $N \geq N^*$ and starting state $s$ in $\Game[f_K]$ s.t. $\Prob s {\Game[f_K]} {\MP \geq \vec 0} = 1$,
		\begin{align*}
			p_{s, N, K} \leq 2^d \cdot \frac {c^N} {1-c}
		\end{align*}
	\end{restatable}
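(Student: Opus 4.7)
The plan is to reduce the multidimensional statement to the already-known unidimensional analogue and invoke Lemma~3.9 of \citeappendix{BrazdilBrozekEtessamiKucera:IC2013} componentwise. Observe first that the event $\{\exists\,0\leq h<K,\,\TP_h \not> -N\cdot \vec 1\}$ is the union, over $j \in \{1,\dots,d\}$, of the events $E_j := \{\exists\,0\leq h<K,\,\TP_h[j] \leq -N\}$. By the union bound and the trivial estimate $d \leq 2^d$,
\[
  p_{s,N,K} \;\leq\; \sum_{j=1}^d \Prob{s}{\Game[f_K]}{E_j} \;\leq\; 2^d\cdot \max_{1\leq j\leq d} \Prob{s}{\Game[f_K]}{E_j}\,,
\]
so it suffices to prove a bound of the form $c^N/(1-c)$ uniformly in a single, arbitrary dimension $j$.

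Fix such a $j$ and consider the $j$-th coordinate projection. The hypothesis $\Prob{s}{\Game[f_K]}{\MP \geq \vec 0} = 1$ implies $\Prob{s}{\Game[f_K]}{\MP[j] \geq 0} = 1$, so by Lemma~\ref{lem:MP:TP:basic2} the coordinate-$j$ total payoff is $>-\infty$ almost surely. Equivalently, every BSCC of the underlying finite unichain $(\restrict \Game U)[f^{exp}]$ (which is what really drives the one-dimensional random walk, since the unbounded ``counter'' memory used by $f_K$ is irrelevant to whether the partial sum of weights in coordinate $j$ dips below $-N$) has non-negative expected reward in coordinate $j$. This is precisely the hypothesis of Lemma~3.9 of \citeappendix{BrazdilBrozekEtessamiKucera:IC2013}, which then yields a rational constant $c_j<1$ and an integer $N^*_j$, depending only on the Markov chain, such that for every $N\geq N^*_j$ the probability of ever reaching a partial sum $\leq -N$ in coordinate $j$ is at most $c_j^N/(1-c_j)$. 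Taking $c := \max_j c_j$ and $N^* := \max_j N^*_j$, together with monotonicity in $N$, delivers the uniform bound needed in the first step and completes the proof.

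The main obstacle is the borderline case in which, for some coordinate $j$, the expected mean reward of some BSCC is exactly $0$: with strictly positive drift the exponential ruin bound is classical (Cram\'er/Chernoff), but zero drift requires the finer large-deviation argument internal to Lemma~3.9 of \citeappendix{BrazdilBrozekEtessamiKucera:IC2013}, which exploits the fact that sufficient variance of the increments forces the walk to drift upward with overwhelming probability after $\Theta(N^2)$ steps and is precisely where the geometric factor $c^N$ emerges. A secondary technical subtlety is that the cited bound must be read as a bound on the event of \emph{ever} reaching $-N$ (and then specialized to the first $K$ steps by monotonicity), rather than being applied by a union bound over $h < K$ to the instantaneous events $\{\TP_h[j]\leq -N\}$, in order to avoid a spurious $K$-factor in the final estimate.
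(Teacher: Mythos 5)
Your proof is correct and follows essentially the same route as the paper's: reduce to the unidimensional case per coordinate, use Lemma~\ref{lem:MP:TP:basic2} and the almost-sure mean-payoff hypothesis to discharge the side condition of Lemma~3.9 of the cited work (the paper phrases this as the set of states with almost-surely $-\infty$ total payoff being empty, you phrase it as every BSCC having non-negative drift in coordinate $j$ --- these are equivalent), and then aggregate over the $d$ coordinates. The only cosmetic difference is the aggregation step, where you use a union bound with $d \leq 2^d$ while the paper bounds $1-\prod_j(1-p^j_{s,N,K})$ via a binomial expansion; both yield the same $2^d$ factor.
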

	\noindent
	By Lemma~\ref{lem:strategy:EC:finite-memory:expectation}
	we can assume that when playing according to $f^{exp}$
	we obtain a Markov chain $\Game[f^{exp}]$ which is \emph{unichain}.
	Since $f^{exp}$ has strictly positive expected mean payoff and $\Game[f^{exp}]$ is unichain,
	the mean payoff is almost surely strictly positive,
	and thus we can apply Lemma~\ref{lem:Etessami:approx} to the Markov chain $\Game[f^{exp}]$
	and obtain constants $N^*$ and $c$
	s.t. for every $N \geq N^*$, $K \geq 0$, and state $s$,
	$p_{s, N, K} \leq 2^d \frac {c^N} {1-c}$.
	Let $\nu^* = \min_{1 \leq j \leq d} \vec\nu[j]$,
	and let $K^*$ be large enough s.t., for any $K \geq K^*$,
	$N_i^{min} \geq N^*$.
	(For example, take $K^* = \frac {2 N^*} {\nu^*}$.)
	Assume we are in any phase $i \geq 1$.
	Since $N_i^{min} \geq N^*$ by the choice of $K^*$, by Lemma~\ref{lem:Etessami:approx} we obtain,
	for every $K \geq K^*$,
	$p_{s, N_i^{min}, K} \leq 2^d \cdot \frac {c^{N_i^{min}}} {1-c} = 2^d \cdot \frac {c^{\frac {\nu^* \cdot i \cdot K} 2}} {1-c}$.
	Since $p_{i, K} \leq \min_{s \in S} p_{s, \min \vec N_i, K}$,
	we have $p_{i, K} \leq 2^d \frac {c^{\frac {\nu^* \cdot i \cdot K} 2}} {1-c}$.
	The claim follows by taking $a = 2^d/(1-c)$ and $b = c^{\frac {\nu^*} 2}$, which is $ < 1$ since $c$ is.
\end{proof}
\begin{proof}[Proof of Lemma~\ref{lem:Etessami:approx}]

	We first prove the lemma in the unidimensional case.
	Let $T$ be the set of states from where the total payoff is almost surely $-\infty$,
	and let $T^*$ be the set of states which can reach $T$ with positive probability.
	Lemma 3.9 in \citeappendix{BrazdilBrozekEtessamiKucera:IC2013} when applied to the Markov chain $\Game[f_K]$
	(in general, it applies to Markov decision processes),
	guarantees that there exists a rational constant $c < 1$ and an integer $N^* \geq 0$ s.t. for every $N \geq N^*$
	\begin{align*}
		\Prob s {\Game[f_K]} {\exists i \in \N \cdot \TP_i \leq -N \textrm { and $T^*$ is not visited } } \leq \frac {c^N} {1-c}
	\end{align*}
	Since we assume $\Prob s {\Game[f_K]} {\MP \geq 0} = 1$,
	by Lemma~\ref{lem:MP:TP:basic2}
	the probability to have a total payoff equal to $-\infty$ is $0$,
	and thus $T$ is empty.
	Therefore, the probability on the left above is just
	$\Prob s {\Game[f_K]} {\exists i \in \N \cdot \TP_i \leq -N}$,
	which is the definition of $p_{s, N, K}$, and thus
	\begin{align*}
		p_{s, N, K}	\leq \frac {c^N} {1-c}
	\end{align*}
	
	Now let $d > 1$,
	and, for a component $1 \leq j \leq d$,
	let $p_{s, N, K}^j$ be the probability that we have a drop by $-N$ in the total payoff in component $j$,
	i.e.,
	\begin{align*}
		p_{s, N, K}^j = \Prob {s} {\Game[f_K]} {\exists i \in \N \cdot \TP_i[j] \leq -N}
	\end{align*}
	By looking at the Markov chain $\Game[f_K]$ projected to component $j$,
	we can apply the result from the first part and obtain that,
	for every fixed $j$,
	there exist $c_j < 1$ and $N_j^* \geq 0$
	s.t., for every $N \geq N_j^*$,
	\begin{align*}
		p_{s, N, K}^j \leq \frac {c_j^N} {1-c_j}
	\end{align*}
	By taking $N^* = \max_{1 \leq j \leq d} N_j^*$
	and letting $c$ be the $c_j$ maximizing $\frac {c_j} {1-c_j}$, we have, 
	for every $N \geq N^*$,
	\begin{align*}
		p_{s, N, K}^j \leq \frac {c^N} {1-c}
	\end{align*}
	By definition, $1-p_{s, N, K}^j$ is the probability that the total payoff never goes below $-N$ in component $j$,
	and thus $\Pi_{j=1}^d(1 - p_{s, N, K}^j)$ is a lower bound
	on the probability that the total payoff never goes below $-N$ in \emph{any component}.
	Therefore $p_{s, N, K} \leq 1 - \Pi_{j=1}^d(1 - p_{s, N, K}^j) \leq 1 - (1 - \frac {c^N} {1-c})^d$.
	By a simple calculation (cf. the proof of Lemma~\ref{lem:Hoeffding:multidimensional}),
	we derive $(1 - \frac {c^N} {1-c})^d \geq 1 - 2^d \cdot \frac {c^N} {1-c}$,
	and thus
	\begin{align*}
		p_{s, N, K} \leq 1 - \left(1 - 2^d \cdot \frac {c^N} {1-c}\right) = 2^d \cdot \frac {c^N} {1-c}
	\end{align*}
	as required.
\end{proof}

\claimUpperBoundB*

\begin{proof}
	Recall the definition of $q_{i, K} = \Prob {s_0} {\Game[f_K]} {\TP_{K \cdot (i+1)} \not> 2 \cdot \vec N_{i+1}}$.
	By the definition of $\vec N_{i+1} = \frac {\vec\nu (i+1) K} 2$,
	\begin{align*}
		q_{i, K}=\Prob {s_0} {\Game[f_K]} {\MP_{K \cdot (i+1)} \not> \vec \nu}
	\end{align*}
	Let $\vec\nu^*$ be the expected mean payoff of $f^{exp}$, 
	and let $\delta = \min_{1 \leq j \leq d} (\vec\nu^*[j] - \vec\nu[j]) > 0$.
	Then, $q_{i, K}$ is upper bounded by the probability that the mean payoff deviates from its expected value $\vec\nu^*$ by more than $\delta$ in any component.
	By Lemma~\ref{lem:strategy:EC:bound:A:K}, this is upper bounded by $a' 2^d e^{-b'(i+1)K\delta^2}$
	for sufficiently large $(i+1)K$, and thus for sufficiently large $K$.
	Take $a = a' \cdot 2^d$ and $b = e^{-b\delta^2} < 1$.
	%
\end{proof}

\subsection{Inside a general MDPs}

In this section, we prove the correctness of the reduction of the BWC problem to solving system $T'$.

\lemInfiniteMemoryGeneral*

The following proposition is the analogous of Proposition~\ref{prop:BWC:strategy:all:MWEC}
where we consider MEC (instead of MWEC) and arbitrary strategies (instead of finite-memory ones).
As before, it rests on the assumption that all states of $\Game$ are reachable from $s_0$.
\begin{proposition}
	\label{prop:BWC:strategy:all:MEC}
	Let $\Game$ be a pruned multidimensional mean-payoff \MDP.
	If there exists a strategy $h$ s.t. $(\vec 0; \vec \nu) \in \BWCSolP \Game {s_0, h}$,
	then there exists a strategy $h^*$ with the same property,
	and such that, for every MEC $U$,
	the set of states visited infinitely often by $h^*$ is a subset of $U$ with positive probability.
\end{proposition}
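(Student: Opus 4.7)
My plan is to mimic essentially verbatim the construction used in the proof of Proposition~\ref{prop:BWC:strategy:all:MWEC}, replacing MWECs by MECs throughout and substituting the one ingredient that the MWEC argument used but is no longer available: the ``sure-stay-in-$U$'' worst-case strategy $f_U^{wc}$. For MWECs, $f_U^{wc}$ existed because $U$ was a \emph{winning} EC; for a general MEC $U$, I will instead invoke Lemma~\ref{lem:BWC:synthesis:EC} with any $\vec\nu_U \geq \vec 0$ achievable in $\restrict\Game U$ to obtain an infinite-memory strategy $g_U$ that is worst-case winning from every state of $U$. Pruned-ness of $\Game$ still supplies a global worst-case winner $f^{wc}$, and positive-probability reachability still supplies, for every MEC $U$, a strategy $f_U$ that reaches $U$ with positive probability from $s_0$.

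The crucial extra observation that makes the adaptation work is that the $g_U$ produced by Lemma~\ref{lem:BWC:synthesis:EC} remains in $U$ forever with positive probability. This is not explicit in the statement of that lemma but can be read off its proof: the strategy $f_K$ constructed there plays $f^{\textrm{exp}}$ --- which surely stays in $U$ --- in its ``first mode'', and only switches to $f^{\textrm{wc}}$ (possibly leaving $U$) upon one of the switching conditions SC1, SC2; the corresponding probabilities $p_K$ and $q_K$ both tend to zero as $K \to \infty$, so for the $K$ chosen in that proof we have $(1-p_K)(1-q_K) > 0$, i.e.\ no switch ever happens with positive probability, on which event the limit set is almost surely an EC contained in $U$.

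With these ingredients in hand I define $f_N$ exactly as in the MWEC proof: pick a MEC uniformly at random, play $f_U$ for $N$ steps, then switch to $g_U$ if the current state lies in $U$ and to $f^{\textrm{wc}}$ otherwise. Prefix-independence of mean payoff makes $f_N$ worst-case winning for every $N$; for $N^*$ sufficiently large, the conjunction ``$U$ was selected, $f_U$ reached $U$ within $N^*$ steps, $g_U$ was entered, and no subsequent switch to $f^{\textrm{wc}}$ ever occurred'' has positive probability, so $f_{N^*}$'s set of states visited infinitely often is contained in $U$ with positive probability, for every MEC $U$. Finally I take $h^*$ to play $f_{N^*}$ with probability $p > 0$ and $h$ with probability $1-p$: the worst-case objective survives unconditionally (both components are worst-case winning), the MEC-visiting property is inherited from the $f_{N^*}$-branch, and the expected mean payoff of $h^*$ converges from below to that of $h$ as $p \to 0$, so picking $p$ small enough keeps $(\vec 0; \vec\nu) \in \BWCSolP \Game {s_0, h^*}$.

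The main obstacle is precisely the ``positive probability of remaining in $U$ forever'' claim for $g_U$, which must be extracted from the internal workings of the proof of Lemma~\ref{lem:BWC:synthesis:EC} rather than from its statement; once that is in hand, everything else is a routine transcription of the MWEC argument, with \emph{MWEC} read as \emph{MEC} and $f_U^{wc}$ read as $g_U$.
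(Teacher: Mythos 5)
The paper gives no explicit proof of this proposition --- it merely declares it ``analogous'' to Proposition~\ref{prop:BWC:strategy:all:MWEC} --- and your construction is exactly the intended adaptation: replace $f_U^{wc}$ by the strategy of Lemma~\ref{lem:BWC:synthesis:EC}, and observe (correctly, by reading off the proof of that lemma) that the combined strategy $f_K$ triggers no switch with probability at least $(1-p_K)\cdot(1-q_K) > 0$, on which event it plays $f^{exp}$ forever and hence remains inside $U$; the rest is the MWEC argument verbatim. That part of your write-up is sound.

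There is, however, one unjustified step, and it is precisely the step on which everything hinges: you invoke Lemma~\ref{lem:BWC:synthesis:EC} ``with any $\vec\nu_U \geq \vec 0$ achievable in $\restrict \Game U$'' without arguing that such a vector exists. For a \emph{winning} EC this is automatic (a surely positive mean payoff forces a strictly positive expectation), but for a general MEC of a pruned \MDP it can fail: take a single Controller state $u$ with a self-loop of weight $-1$ and an escape edge into a winning region. Then $\set{u}$ is a MEC and the \MDP is pruned, yet the only behaviour confined to $\set{u}$ has mean payoff $-1$, so no $\vec\nu_U \geq \vec 0$ is achievable inside $\restrict \Game {\set{u}}$, Lemma~\ref{lem:BWC:synthesis:EC} is inapplicable, and --- worse --- no worst-case-winning strategy can remain in $\set{u}$ forever with positive probability, so the proposition as literally stated fails for that MEC. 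This defect is arguably inherited from the paper rather than introduced by you (the same issue afflicts Eq.~\ref{eq:MP:local'} of system $T'$, which demands a strictly positive local expectation in \emph{every} MEC), but your proof does not close it: one must either restrict the claim to MECs admitting some achievable $\vec\nu_U \geq \vec 0$ while remaining inside, or prove that every MEC of a pruned \MDP admits one, and neither is done.
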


The following proposition (and its proof) is analogous to Proposition~\ref{prop:finite-memory:general:1:analysis},
where MWECs are replaced by MECs.
\begin{proposition}
	\label{prop:infinite-memory:general:1:analysis}
	If $T'$ has a non-negative solution,
	then there exists a (finite-memory) strategy $\hat h$ s.t.
	\begin{enumerate}
		\item $\vec \nu \in \ExpSolP \Game {s_0, \hat h}$.
		\item For every MEC $U$, there is a probability $y^*_U > 0$
		s.t. the set of states visited infinitely often by $\hat h$ is inside $U$ with probability $y^*_U$.
		\item The set of states visited infinitely often by $\hat h$ is almost surely an EC.
		Consequently, $\sum_{\textrm{MEC } U} y^*_U = 1$.
		\item Once inside a MEC $U$, $\hat h$ achieves expected mean payoff $\vec \nu_U > \vec 0$.
		\item $\sum_{\textrm{MEC } U} y^*_U \cdot \vec \nu_U > \vec \nu$.
	\end{enumerate}
\end{proposition}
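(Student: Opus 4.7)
The plan is to mimic the proof of Proposition~\ref{prop:finite-memory:general:1:analysis} from the finite-memory section, replacing MWECs by MECs throughout. Specifically, given a non-negative solution $\set{y^*_s}_{s \in S}$, $\set{y^*_{st}}_{(s,t) \in E}$, $\set{x^*_{st}}_{(s,t)\in E}$ to $T'$, I would invoke Proposition~4.2 of \cite{BrazdilBrozekChatterjeeForejtKucera:TwoViews:LMCS:2014} to extract a finite-memory strategy $\hat h$ achieving $\vec\nu \in \ExpSolP \Game {s_0, \hat h}$ (Point 1). This classical correspondence interprets the $y$-variables as transient flow before switching mode and the $x$-variables as long-run edge frequencies once inside a MEC. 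The only substantive change compared to the finite-memory setting is that the decomposition equation \ref{eq:x:y'} and the switching equation \ref{eq:y:sum'} now range over MECs rather than MWECs, which matches exactly the hypothesis of Proposition~4.2.

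Next I would establish Points 2, 3 and 4 by direct reading of the solution. For each MEC $U$, set $y^*_U := \sum_{s \in U} y^*_s$ and $\vec\nu_U := \sum_{(s,t) \in E \cap U\times U} x^*_{st} \cdot w(s,t)$. Equation~\ref{eq:MP:local'} forces, for every MEC $U$, the existence of some $(s,t) \in U\times U$ with $x^*_{st} > 0$, and chaining with equation~\ref{eq:x:y'} yields $y^*_U > 0$, proving Point 2. The same equation~\ref{eq:MP:local'} implies immediately $\vec\nu_U > \vec 0$, giving Point 4. For Point 3, equation~\ref{eq:y:sum'} guarantees that $\hat h$ eventually commits to staying inside some MEC almost surely, and $\sum_{\textrm{MEC } U} y^*_U = 1$ follows; combined with Proposition~\ref{prop:EC}, the set of states visited infinitely often is almost surely an EC.

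Finally, Point 5 follows from prefix-independence of the mean-payoff value function: once $\hat h$ commits to staying in $U$ (which happens with probability $y^*_U$), the expected mean payoff from that point is $\vec\nu_U$, hence the total expected mean payoff from $s_0$ under $\hat h$ equals $\sum_{\textrm{MEC } U} y^*_U \cdot \vec\nu_U$; by Point 1 this quantity is $> \vec\nu$.

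The main obstacle is subtle rather than technical: one must verify that Proposition~4.2 of \cite{BrazdilBrozekChatterjeeForejtKucera:TwoViews:LMCS:2014}, originally phrased for MECs, carries over \emph{exactly} to $T'$ (which already uses MECs), while being careful that the auxiliary constraint~\ref{eq:MP:local'} restricts the admissible solutions but does not break the correspondence — the latter is ensured because~\ref{eq:MP:local'} is a local constraint on the $x$-variables inside each MEC that merely adds a feasibility condition, rather than altering the flow structure exploited by the cited construction. Once this is verified, the rest of the proof is the straightforward adaptation described above.
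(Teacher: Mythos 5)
Your proposal is correct and matches the paper's approach exactly: the paper itself proves this proposition only by remarking that it is the word-for-word analogue of Proposition~\ref{prop:finite-memory:general:1:analysis} with MWECs replaced by MECs, which is precisely the adaptation you carry out (same invocation of Proposition~4.2 of \cite{BrazdilBrozekChatterjeeForejtKucera:TwoViews:LMCS:2014}, same definitions of $y^*_U$ and $\vec\nu_U$, same use of the flow and local-positivity equations). Your closing remark on why the extra constraint does not disturb the cited construction is a reasonable addition that the paper leaves implicit.
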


\begin{proof}[Proof of Lemma~\ref{lem:infinite-memory:general}]
	
	The proof is similar to finite-memory case in Lemma~\ref{lem:finite-memory:general}.
	We sketch here the crucial differences.
	
	For one direction, let $h$ be a strategy s.t. $(\vec 0; \vec \nu) \in \BWCSolP \Game {s_0, h}$.
	By Proposition~\ref{prop:BWC:strategy:all:MEC}, there exists a strategy $h^*$
	that additionally visits each MEC infinitely often with positive probability.
	By the construction in Proposition 4.4 of \cite{BrazdilBrozekChatterjeeForejtKucera:TwoViews:LMCS:2014}
	applied to strategy $h^*$,
	we obtain a solution to system $T'$.
	Equations~\ref{eq:y:flow}, \ref{eq:y:random}, \ref{eq:y:sum}, \ref{eq:x:flow}, \ref{eq:x:random}, and \ref{eq:MP:global}
	are shown to be satisfied in the proof of Proposition 4.4.
	Eq.~\ref{eq:x:y'} is satisfied since, by Proposition~\ref{prop:EC},
	$h^*$ is eventually trapped in an EC almost surely (not necessarily a WEC).
	Finally, Eq.~\ref{eq:MP:local'} is satisfied: 
	By construction, $h^*$ visits each MEC infinitely often with positive probability.
	For every MWEC $U$ there exist $s, t \in U$ s.t. $x^*_{st} > 0$.
	Since $h^*$ is winning for the worst-case,
	it achieves an expected mean payoff $> \vec 0$ in $U$,
	and thus Eq.~\ref{eq:MP:local'} is satisfied.
	
	For the other direction, we use Proposition~\ref{prop:infinite-memory:general:1:analysis} to obtain strategy $\hat h$,
	and then we proceed as in the second part of the proof of Lemma~\ref{lem:finite-memory:general}
	by replacing MWEC with MEC, and by using Lemma~\ref{lem:BWC:synthesis:EC} instead of Lemma~\ref{lem:BWC:synthesis:WEC}
	in the construction of strategies $h_U$'s.
\end{proof}

\bibliographystyleappendix{IEEEtran}
\bibliographyappendix{bibliography}


\end{document}